\numberwithin{equation}{section}
\newtheorem{rem}{Remark}[section]
\renewcommand{\d}[1]{\mathrm{d} #1}
\newcommand{\dvol}{\d\textup{Vol}}
\newcommand{\dvolg}{\d\textup{Vol}} 
\newcommand{\eps}{\varepsilon}
\newcommand{\eb}{\mathbf{e}}
\newcommand{\eu}{\mathbf{e}_1}
\newcommand{\ev}{\mathbf{e}_2}
\newcommand{\ei}{\mathbf{e}_i}
\newcommand{\ej}{\mathbf{e}_j}
\newcommand{\fu}{\mathbf{f}_1}
\newcommand{\fd}{\mathbf{f}_2}
\newcommand{\n}{\mathbf{n}}
\newcommand{\nb}{\boldsymbol{\nu}}
\newcommand{\N}{{\mathbb N}}
\newcommand{\tbf}{\mathbf{t}}
\newcommand{\bbT}{\mathbb{T}}
\renewcommand{\vec}[1]{\mbox{\boldmath$#1$}}
\newcommand{\ub}{\vec{u}}
\newcommand{\vb}{\vec{v}}
\newcommand{\Hut}{H^1_{tan}(\Sigma;\mathbb{S}^2)}
\newcommand{\HT}{H^1_{tan}(\mathbb{T};\mathbb{S}^2)}
\newcommand{\Ht}{H^{1}_{tan}(\Sigma)}
\newcommand{\Lt}{L^2_{tan}(\Sigma)}
\newcommand{\B}{\mathfrak{B}}
\newcommand{\Bd}{\mathfrak{B}^2}
\newcommand{\Rt}{\R^3}
\newcommand{\RN}{\R^{N}}
\newcommand{\R}{\mathbb{R}}
\renewcommand{\S}{{\mathbb S}}
\newcommand{\Z}{\mathbb{Z}}
\newcommand{\argmin}{\hbox{argmin}}
\newcommand{\delj}{\frac{\partial}{\partial x^j}}
\DeclareMathOperator{\Div}{\textup{div}}
\DeclareMathOperator{\curl}{\textup{curl}}
\DeclareMathOperator{\grads}{\textup{grad}_s}
\DeclareMathOperator{\divs}{\textup{div}}	
\DeclareMathOperator{\curls}{\textup{curl}} 
\newtheorem{theorem}{Theorem}
\newtheorem{lemma}{Lemma}[section]
\newtheorem{proposition}{Proposition}[section]
\newenvironment{definition}[1][Definition]{\begin{trivlist}
\item[\hskip \labelsep {\bfseries #1}]}{\end{trivlist}}
\newtheorem{remark}{Remark}[section]
\begin{document}


\title{Analysis of a variational model for nematic shells}

\author{Antonio Segatti, Michael Snarski and Marco Veneroni}

\date{\today}

\maketitle

\begin{abstract}
We analyze an elastic surface energy which was recently introduced by G. Napoli and L. Vergori to model thin films of nematic liquid crystals. We show how a novel approach in modeling the surface's extrinsic geometry leads to considerable differences with respect to the classical intrinsic energy. Our results concern three connected aspects: i) using methods of the calculus of variations, we establish a relation between the existence of minimizers and the topology of the surface; ii) we prove, by a Ginzburg-Landau approximation, the well-posedness of the gradient flow of the energy; iii) in the case of a parametrized torus we obtain a stronger characterization of global and local minimizers, which we supplement with numerical experiments.
\end{abstract}

\tableofcontents

\medskip {\bf Keywords:} Liquid crystals; gradient-flow; surface energy; topological defects; harmonic maps.

\medskip {\bf MSC:} 35R01; 45Q99; 58J35; (35K58; 53Z05; 65Z05);



\section{Introduction}

Liquid crystals are an intermediate phase of matter between solid and fluid states which possess
peculiar optical properties and are controllable through electric and magnetic
fields. As a result, they play a fundamental role in the development of many scientific applications and in the 
design of new generation technologies. 

A \emph{nematic shell} is a thin film of nematic liquid crystal coating a
rigid and curved substrate $\Sigma$
which is typically represented as a two-dimensional surface. The basic mathematical description of these shells is given 
in terms of a unit vector field constrained to be tangent to the substrate $\Sigma$. 
This vector field
will be called the {\itshape director}, analogous to the nomenclature for liquid crystals in domains.  The rigorous mathematical
treatment of nematic shells is intriguing since it combines 
tools from diverse fields such as the calculus of variations, partial differential equations, topology,
differential geometry and numerical analysis. Our study
is further motivated by the vast technological applications of nematic shells, as discussed
in \cite{Nelson02}. To the best of our knowledge, the study of these structures has been mostly confined to
the physical literature (see, e.g., \cite{KRV11, LubPro92, NapVer12E, Straley71, VitNel06}) with the sole exception of \cite{Shk02}.

The form of the elastic energy for nematics is well 
established, both in the framework of director theory which is based on the works 
of Oseen, Zocher, and Frank, and in the framework of the order-tensor 
theory introduced by de Gennes (see, e.g., \cite
{DegPro93,Virga94}). On the other hand, when dealing with nematic
shells, there is no universal agreement 
on the form of a \emph{two-dimensional} free energy. 
The differences between the various approaches arise in the choice of the
local distortion element of the substrate, i.e., the 
effect of the substrate's extrinsic geometry on the elastic energy of 
the nematics.  Indeed, as 
observed in \cite{BowGio2009, VitNel06}, the liquid crystal 
ground state (and all its stable configurations, in general) is determined by two
competing, driving principles: on one hand the 
minimization of the ``curvature of the texture" penalized by the elastic 
energy, and on the other the frustration due to constraints of geometrical 
and topological nature, imposed by anchoring the nematic to the surface 
of the underlying particle. 
A new energy model proposed by 
Napoli and Vergori in \cite{NapVer12L,NapVer12E} affects these two aspects, leading to different results with respect to the classical models  \cite{HelPro88,LubPro92, Straley71}.
It is interesting to note that a definitive microscopic justification of these energies is still to be found.

The aim of this paper is to analyze 
the new surface energy for liquid crystal shells proposed in \cite{NapVer12L, NapVer12E}.
To describe our results and to highlight some of the related difficulties, 
let us consider at first the simplest \emph{one-constant approximation} of the surface energy on a two-dimensional surface $\Sigma\subset \mathbb{R}^3$:
\begin{equation}
\label{eq:napoli_intro}
W_{\kappa}(\n):= \frac \kappa2 \int_\Sigma
	| D\n|^2 + |\B \n|^2\,\dvolg,
\end{equation}
where $\n$ is a unit norm and tangent vector field on $\Sigma$ representing, for any 
point on $\Sigma$, the mean orientation of the nematic molecules; here $\kappa$ is a positive constant, the symbol $D$ denotes the covariant derivative on $\Sigma$, and $\B$ is the shape operator (see Section \ref{sec:prelimin} for all the details and definitions).
Our results address 
\begin{itemize}
	\item[(a)] the relation between the topology of the surface and the functional setting,
	\item[(b)] the minimization of \eqref{eq:napoli_intro} and the well posedness of its gradient flow on a general genus one surface,
	\item[(c)] the precise structure of local minimizers on a particular surface: the axisymmetric torus.
\end{itemize}
We pay particular attention to the gradient flow of the energy because, aside from being an interesting mathematical object on its own, it provides an efficient tool for numerical approximations of minimizers. Furthermore, it can be seen as a first step towards the evolutionary study of liquid crystals on surfaces. While Step (a) is necessary to give a rigorous formulation to the problem, Steps (b) and (c) complement each other: The general analysis in (b) has the advantage of being
applicable to any two-dimensional  topologically admissible surface and even,
up to some technical obstacles, to $(N-1)$-dimensional compact and smooth
hypersurfaces embedded in $\R^N$.  In (c) we sacrifice generality in order to
obtain more precise analytical and numerical information on the solutions. In particular, the regularity
issue and the existence of solutions with prescribed winding number, which seem 
difficult to be obtained by working directly on \eqref{eq:napoli_intro},  are more transparent.

\medskip

\noindent \textbf{(a) Topological constraints.}
Given the form of \eqref{eq:napoli_intro}, it would be natural to set its analysis in the ambient space of tangent vector fields such that $|\n|$ and $|D\n|$ belong to $L^2(\Sigma)$. We refer to the quantity $\int_\Sigma |D\n|^2$ as the \emph{Dirichlet energy} of $\n$. However, the topology of the surface may force the subset of vector fields with $|\n|=1$, which would represent our directors, to be empty. This could be heuristically explained as follows.
Let $\vb$ be a smooth tangent vector field on $\Sigma$, with finitely many zeroes. The index $m\in \Z$ of a zero $\bar x\in \Sigma$ is, intuitively, the number of counterclockwise rotations that the vector completes around a small circle around $\bar x$. So, if $m\neq 0$, the corresponding unit-length vector field $\vb/|\vb|$ has a discontinuity at $\bar x$ (see Figure \ref{fig_defects}).  By the Poincar\'e-Hopf index Theorem \cite[Chapter 3]{GuiPol74}, the global sum of the indices of the zeroes of $\vb$ equals the Euler characteristic $\chi(\Sigma)$ and therefore it is possible to find a smooth field $\n$ with $|\n|\equiv 1$ on $\Sigma$ if and only if $\chi(\Sigma)=0$, i.e. if $\Sigma$ is a genus-1 surface (``hairy ball Theorem"). Moreover, a direct computation (say for $m=1$) shows that the Dirichlet energy of $\vb/|\vb|$ in any small enough annulus centered at $\bar x$, with internal radius $\rho$, scales like $|\log(\rho)|$ as $\rho\to 0$. Therefore, one would expect the topological constraint of the hairy ball Theorem to hold also for $H^1$-regular vector fields.
\begin{figure}[h]
 \includegraphics[height=2.5cm]{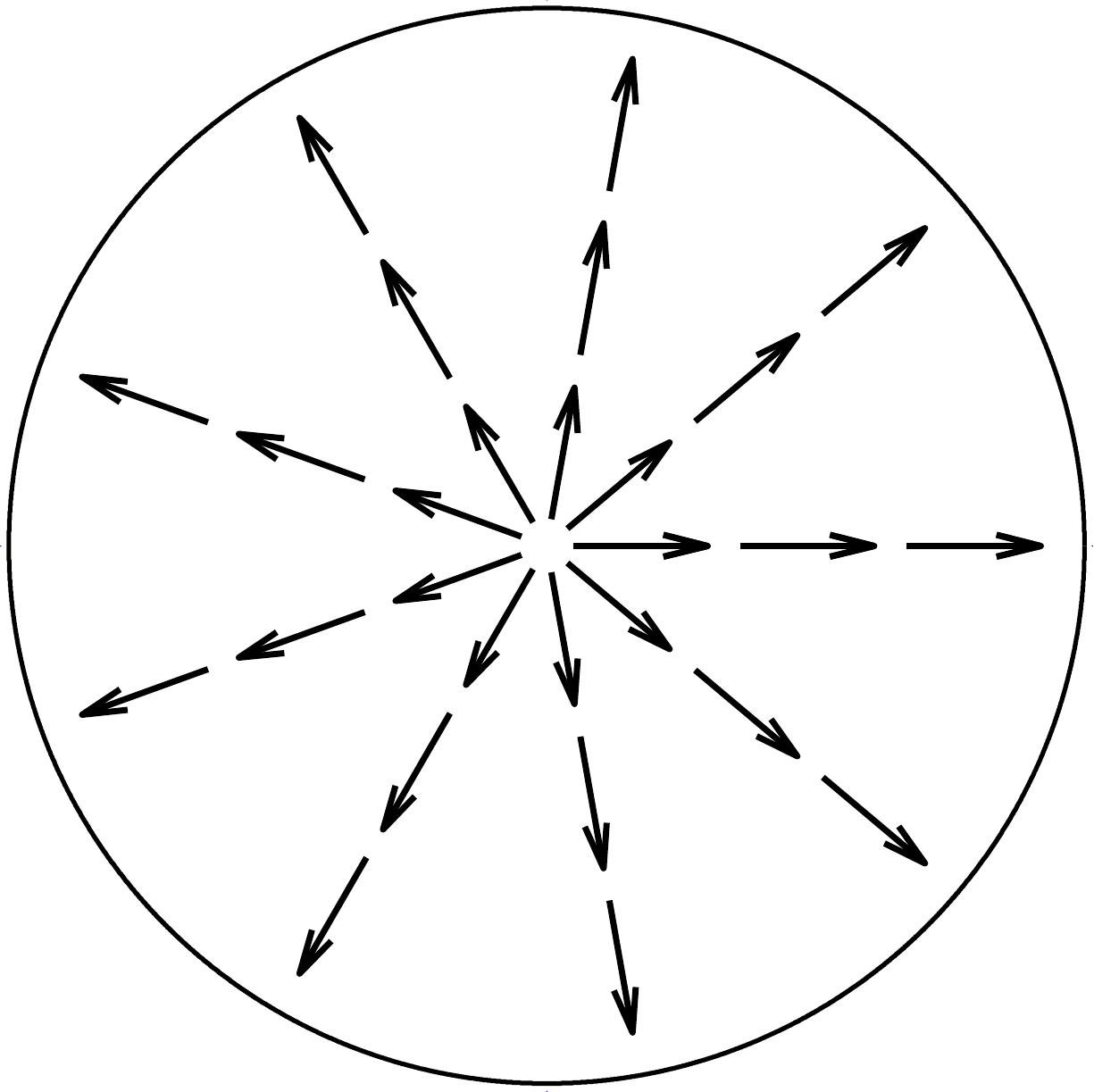}\qquad
  \includegraphics[height=2.5cm]{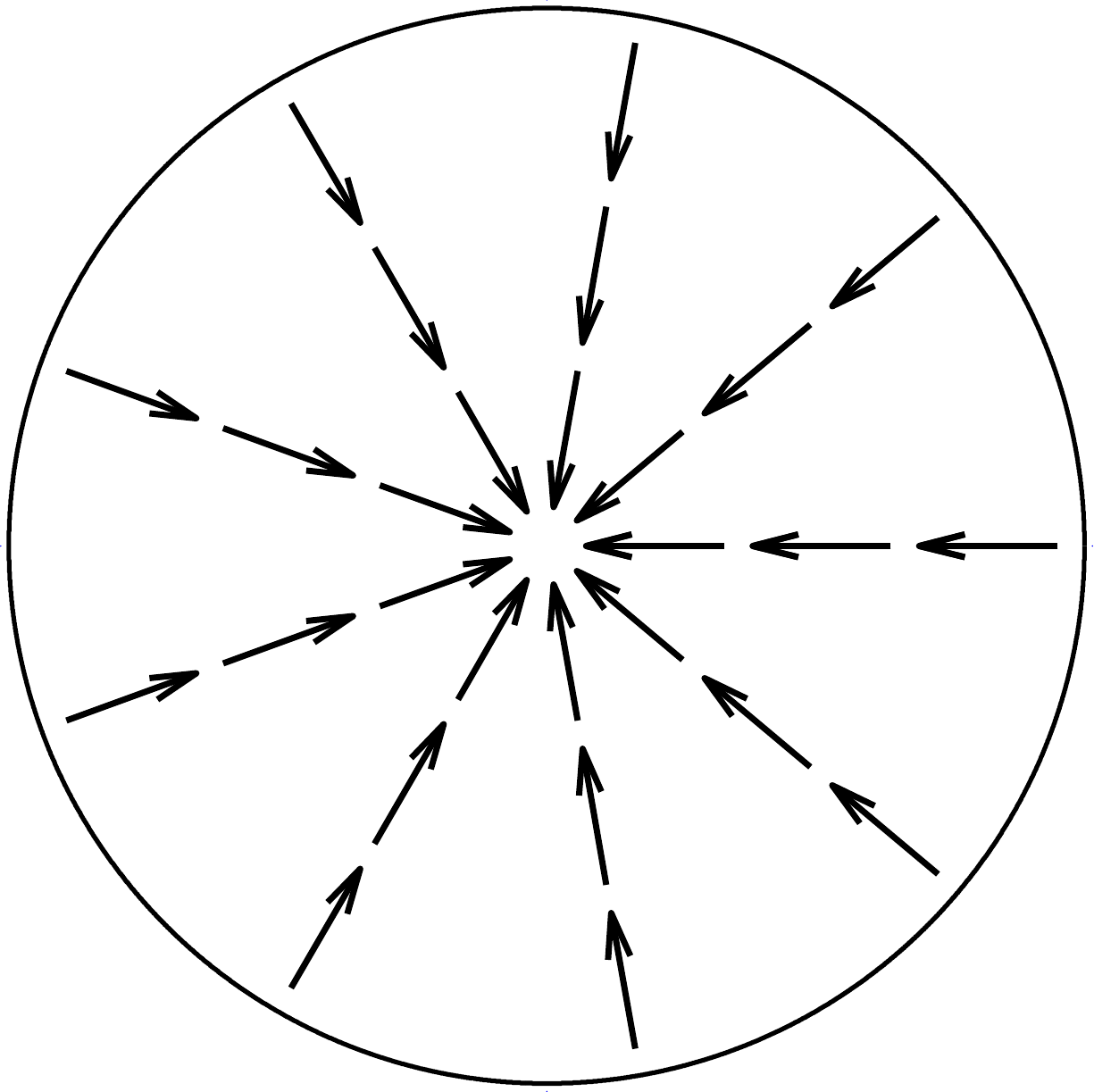}\qquad
 \includegraphics[height=2.5cm]{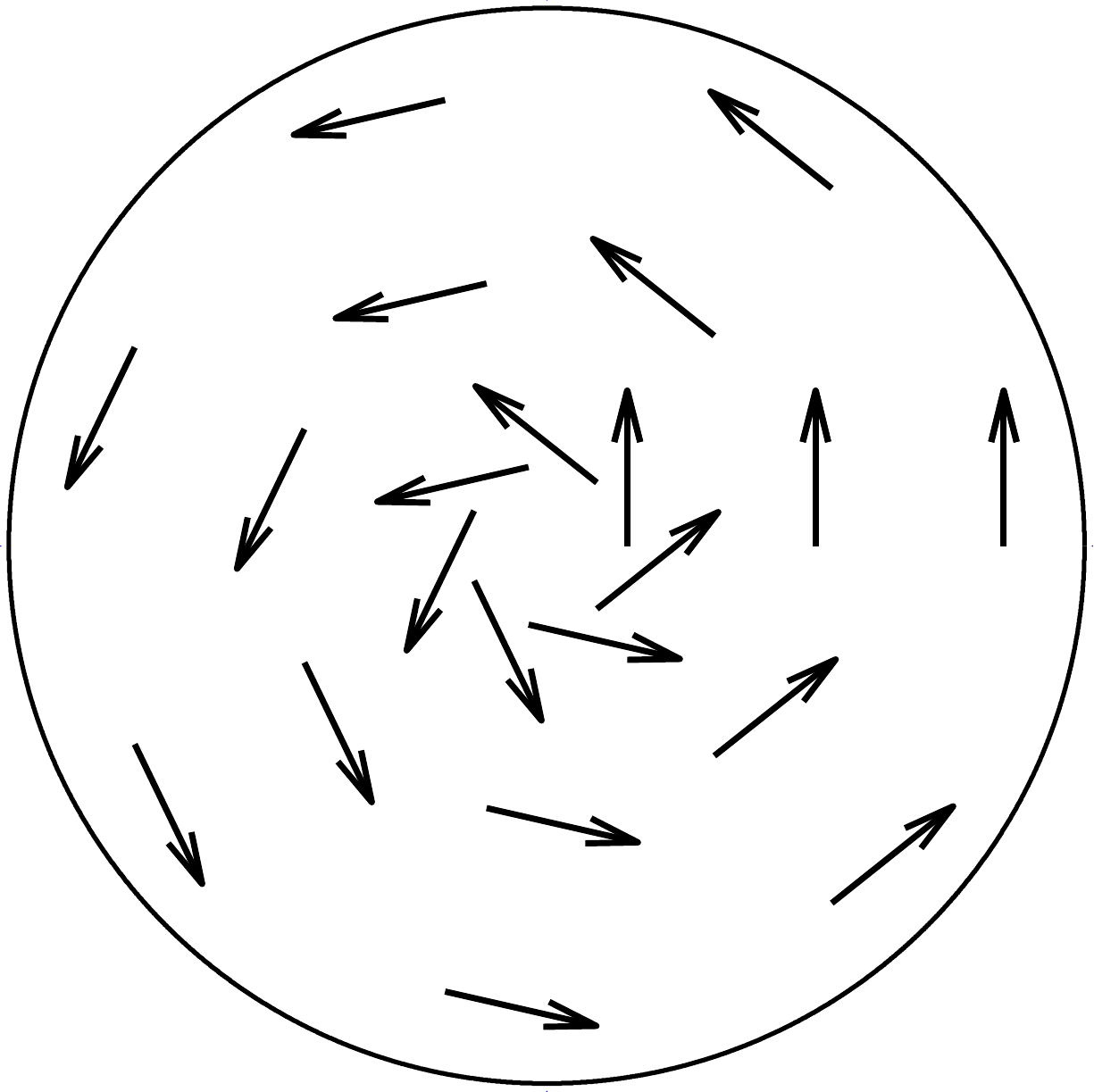}\\  \vspace{-0.2cm}
\caption{Examples of unitary vector fields on a disc in $\R^2$, showing topological defects with index 1. }
\label{fig_defects}
\end{figure}
Indeed, in Theorem \ref{Th:H1empty} we generalize the hairy ball Theorem to $H^1$-regular vector fields. 
Theorem \ref{Th:H1empty} is indeed a consequence of the more general results of \cite{CSV13} in which 
we prove the Morse Index formula for unit norm vector fields with VMO regularity. However, one can prove directly Theorem \ref{Th:H1empty} by using PDEs arguments. For the sake of completeness we decided to include the proof in the paper.
 The idea of the proof is the following. First, using variational methods  we show that if the set of $H^1$-regular tangent unitary fields on $\Sigma$ is not empty, then it includes a minimizer of the Dirichlet energy on $\Sigma$. 
Then, using the local representation introduced in Section \ref{sec:alpha} and regularity theory for elliptic PDEs, we show that this minimizer is continuous. By the classical hairy ball Theorem, we conclude that $\Sigma$ has to be a genus-1 surface.
 Note that the exponent $2$ in \eqref{eq:napoli_intro} is a limit-case, as it is possible to construct unitary fields such that $|D\vb|\in L^{p}(\Sigma)$ for any $p\in [1,2)$, on any smooth compact surface $\Sigma$.
In view of Theorem \ref{Th:H1empty}, we restrict our study to genus-1 surfaces, where the underlying geometry of the substrate does not force the creation of defects. 
A rigorous analysis of the distribution and evolution of defects on nematic surfaces is an interesting problem 
(see \cite{CSV_defects}) which is beyond the scope of this paper.
Due to its
large potential impact on the design of new generation metamaterial structures (see \cite{Nelson02, YiEtal13}), this question has garnered a good deal of interest within the physics community
(see \cite{KRV11,R_V_K_trans,S_K_T_S11,N_G_S_S13,VitNel06}). To the best of our knowledge it still lacks a rigorous mathematical treatment. A different approach to defects, following an approximation of Ginzburg-Landau type, was studied in \cite{Baraket96}.

\medskip

\noindent \textbf{(b) Well-posedness on general surfaces.}
The general form of the surface energy \eqref{eq:napoli_intro}, introduced in \cite{NapVer12L}, is the surface analogue of the well-studied Oseen, Zocher and Frank model (see, e.g., \cite{Virga94}) and is defined as
\begin{equation}
\label{eq:napolif_intro}
	W(\n):=\frac 12 \int_\Sigma K_1 (\divs \n)^2 + K_2 (\n \cdot \curls \n)^2 +K_3 |\n \times \curls \n|^2\,\dvolg.
\end{equation}
In the above display, the subscript $s$ denotes surface operators (see Section \ref{sec:prelimin}) and $K_1, K_2, K_3$ are positive constants known respectively as the splay, twist and bend moduli. Using the direct method of the calculus of variations, in Proposition \ref{propo:ex_min} we prove existence of a minimizer of \eqref{eq:napolif_intro}. We then focus on the $L^2$-gradient flow of \eqref{eq:napoli_intro}, in the case of $\kappa:=K_1=K_2=K_3$. 
The study of the gradient flow for the energy \eqref{eq:napoli_intro} could be 
seen as a starting point for the analysis of an Ericksen-Leslie type model for nematic
shells. This problem has already been addressed in \cite{Shk02} where
various well-posedness and long time behavior results have been obtained
for an Ericksen-Leslie type model on Riemannian manifolds. However, 
it should be pointed out that the model in \cite{Shk02} is purely intrinsic
and does not take into account the way the substrate on which the nematic
is deposited sits in the three-dimensional space.

In Theorem \ref{Th:gfn} we prove the well-posedness of the $L^2$-gradient flow of \eqref{eq:napoli_intro}, i.e.
\begin{equation}
\label{eq:gf_intro}
	\partial_t \n -\Delta_g \n + \Bd \n = \vert D\n\vert^2 \n + \vert \B\n\vert^2 \n\quad \hbox{in  } \Sigma \times (0,+\infty).
\end{equation}
Here $\Delta_g$ is the \emph{rough Laplacian}, $D$ is the covariant derivative and $\B$ is the shape operator on $\Sigma$ (see Section \ref{sec:prelimin}). The right-hand side of \eqref{eq:gf_intro} is a result of the unit-norm constraint on the director $\n$. 
  A proof of the existence relying on  \textit{i) discretization, ii) a priori estimates, iii) convergence of discrete solutions}, would encounter a difficulty here, as the nonlinear term $|D\n|^2$ in the right-hand side of \eqref{eq:gf_intro} is not continuous with respect to the weak-$H^1$ convergence expected from the a priori estimates. We overcome this problem with techniques employed in the study of the heat  flow for harmonic maps (see \cite{chen, chen_struwe}): we first relax the unit-norm 
constraint with a Ginzburg-Landau approximation, i.e.,  we allow for vectors $\n$ with $\vert\n\vert \neq 1$, but we penalize deviations from unitary length at the order $1/\eps^2$, for a small parameter $\eps>0$. In this way, it is possible to build a sequence of  fields $\n^\eps$, with $|\n^\eps|\to 1$ as $\eps \to 0$, which solve an approximation of \eqref{eq:gf_intro}, with zero right-hand side. The crucial remark, in order to recover \eqref{eq:gf_intro} in the limit, is that for a smooth unit-norm field $\n$, \eqref{eq:gf_intro} is equivalent to 
\[
	(\partial_t \n -\Delta_g \n + \Bd \n) \times \n =0.
\]
When passing to the limit, the non-trivial term is $\Delta_g \n \times \n$, which can be treated by a careful surface integration by parts.

\medskip

\noindent \textbf{(c) Parametric representation on a torus.}
A common way to study unit-norm tangent vector fields on a surface $\Sigma$ is to introduce a scalar parameter $\alpha$ which measures the rotation of $\n$ with respect to a given orthonormal frame $\{\eu,\ev\}$, i.e. $\n = \eu \cos(\alpha)+\ev\sin(\alpha)$. The local existence of such a representation is straightforward, but since a global one on $\Sigma$ is in general not possible (even when the topology of $\Sigma$ allows for $H^1$-fields), we first prove that for every $H^1$-regular unit-norm vector field $\n$ there exists a representation $\alpha\in H^1_{\textup{loc}}(\R^2)$ defined on the universal covering of $\Sigma$ (Proposition \ref{prop:alp}). Then, we express the energies \eqref{eq:napoli_intro} and \eqref{eq:napolif_intro}, and the relative Euler-Lagrange equations,  in terms of $\alpha$. With this representation in hand, we focus on a specific parametrization of the axisymmetric torus in $\R^3$.  The main advantages are that we now deal with the scalar quantity $\alpha$, instead of the vector $\n$, that through the parametrization we can reduce to work on a flat domain, e.g. $Q=[0,2\pi]\times [0,2\pi]$, and that the unit-norm constraint does not appear explicitly. The disadvantages are that the representation is not unique (as $\alpha$ and $\alpha+2k\pi$ yield the same field $\n$) and that the parametrization introduces an unusual condition of ``periodicity modulo $2\pi$" on the boundary of $Q$.

In \cite{SSV14} we used this approach to explicitly calculate the value of the energy \eqref{eq:napolif_intro} on constant deviations $\alpha$.  The interest lies in understanding the dependence of the energy on the mechanical parameters $K_i$ and on the aspect ratio of the torus, even on a special set of configurations. The constant configurations $\alpha_m:= 0$ and $\alpha_p:=\pi/2$ (see Figure \ref{fig:extra}) are of particular interest, as, up to an additive constant, the $\alpha$-representation of \eqref{eq:napoli_intro} is
\begin{equation}
\label{eq:walpha_intro}
		W_\kappa(\alpha)=\frac \kappa 2\int_Q \left\{|\nabla_s \alpha|^2 +\eta\cos(2\alpha)\right\}\, \dvol ,
\end{equation}
where $\eta$ is a function which depends only on the geometry of the torus. This structure, a Dirichlet energy plus a double (modulo $2\pi$) well potential, is well-studied in the context of Cahn-Hilliard phase transitions. Depending on the torus aspect ratio, the sign of $\eta$ may not be constant on $Q$, thus forcing a smooth transition between the states $\alpha_m$, where $\eta<0$, and $\alpha_p$, where $\eta>0$.

\begin{figure}[ht]
	\labellist
			\hair 2pt
			\pinlabel $\alpha_m$ at 330 -10
		\endlabellist				
		\includegraphics[height=2cm]{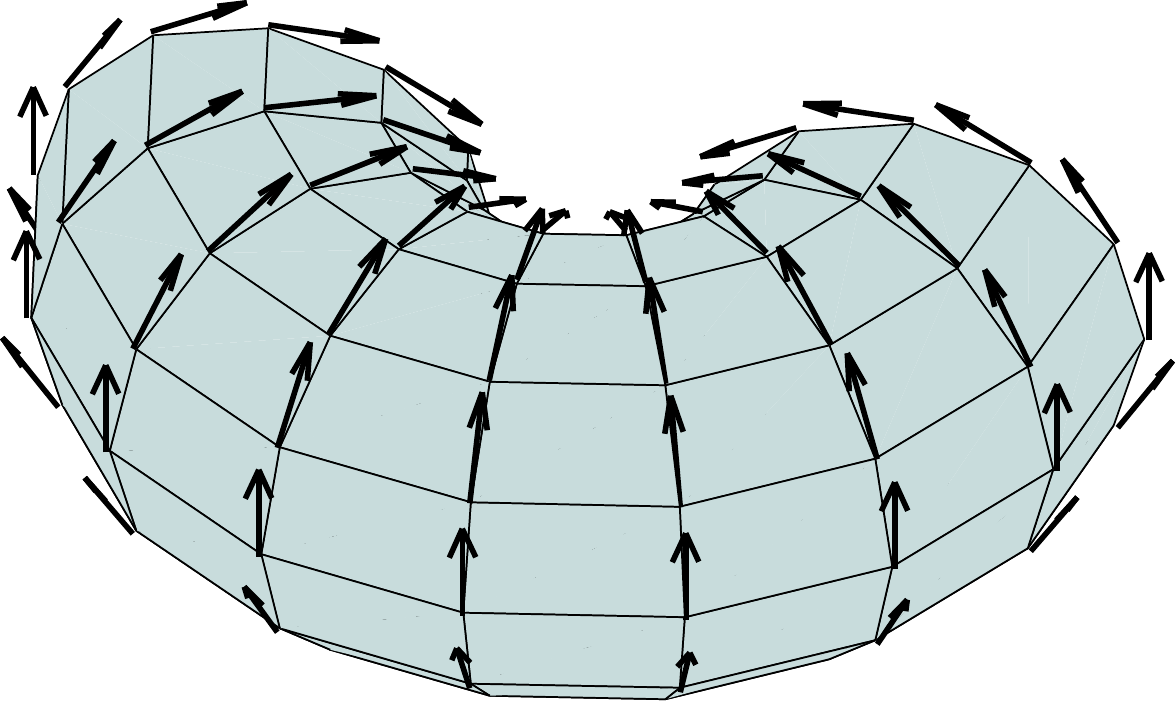} \
		\includegraphics[height=2cm]{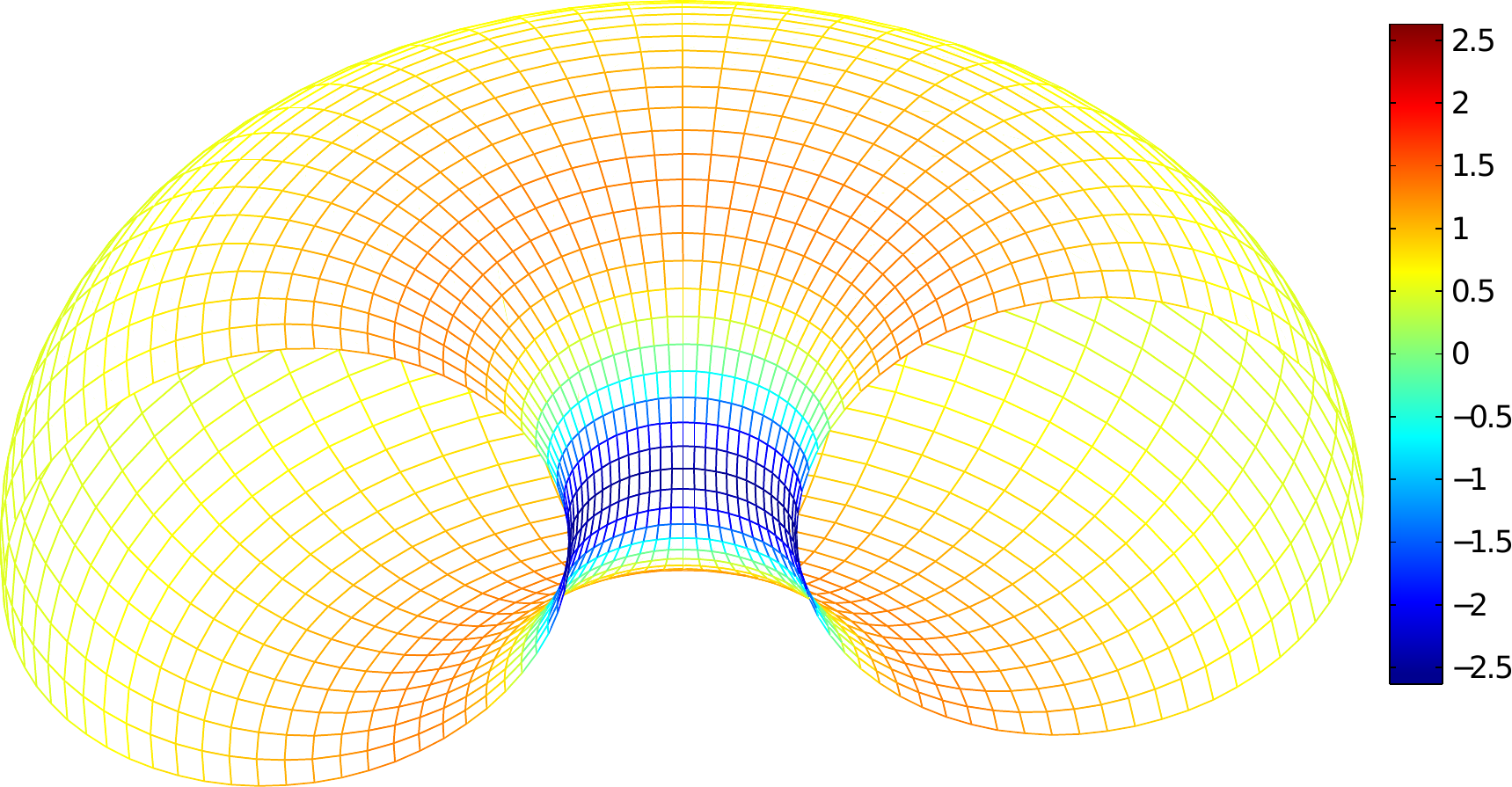} \
			\labellist
			\hair 2pt
			\pinlabel $\alpha_p$ at 330 -10
		\endlabellist				
		\includegraphics[height=2cm]{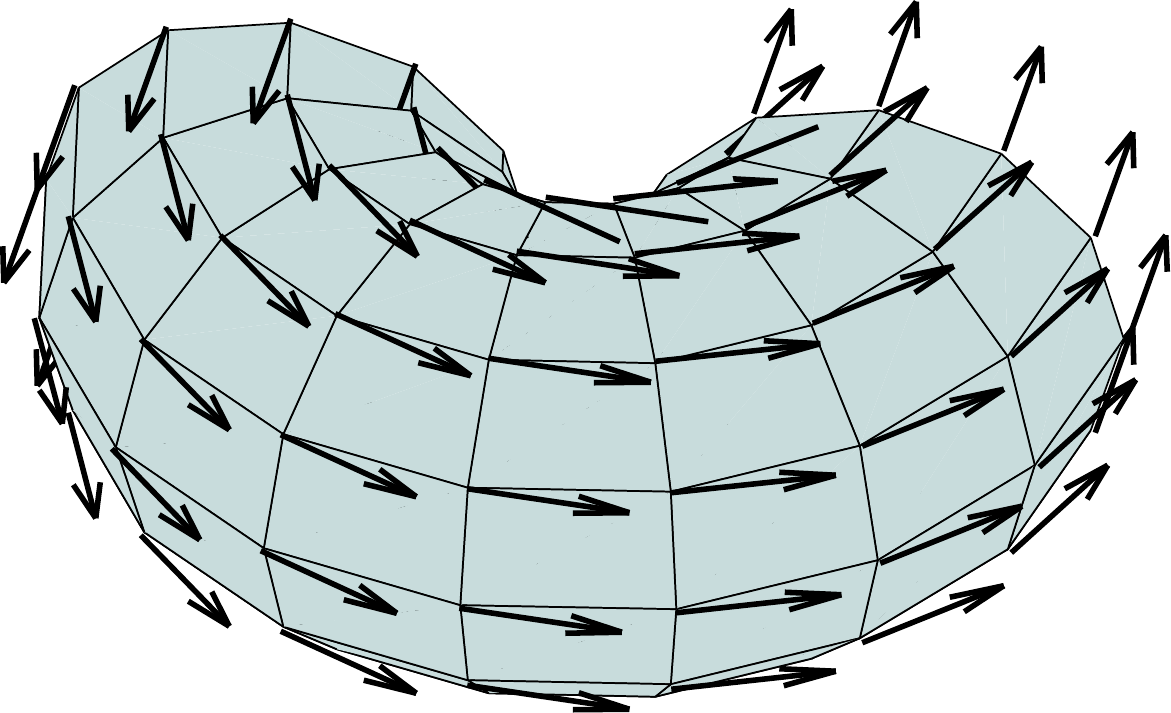} \
		\includegraphics[height=2cm]{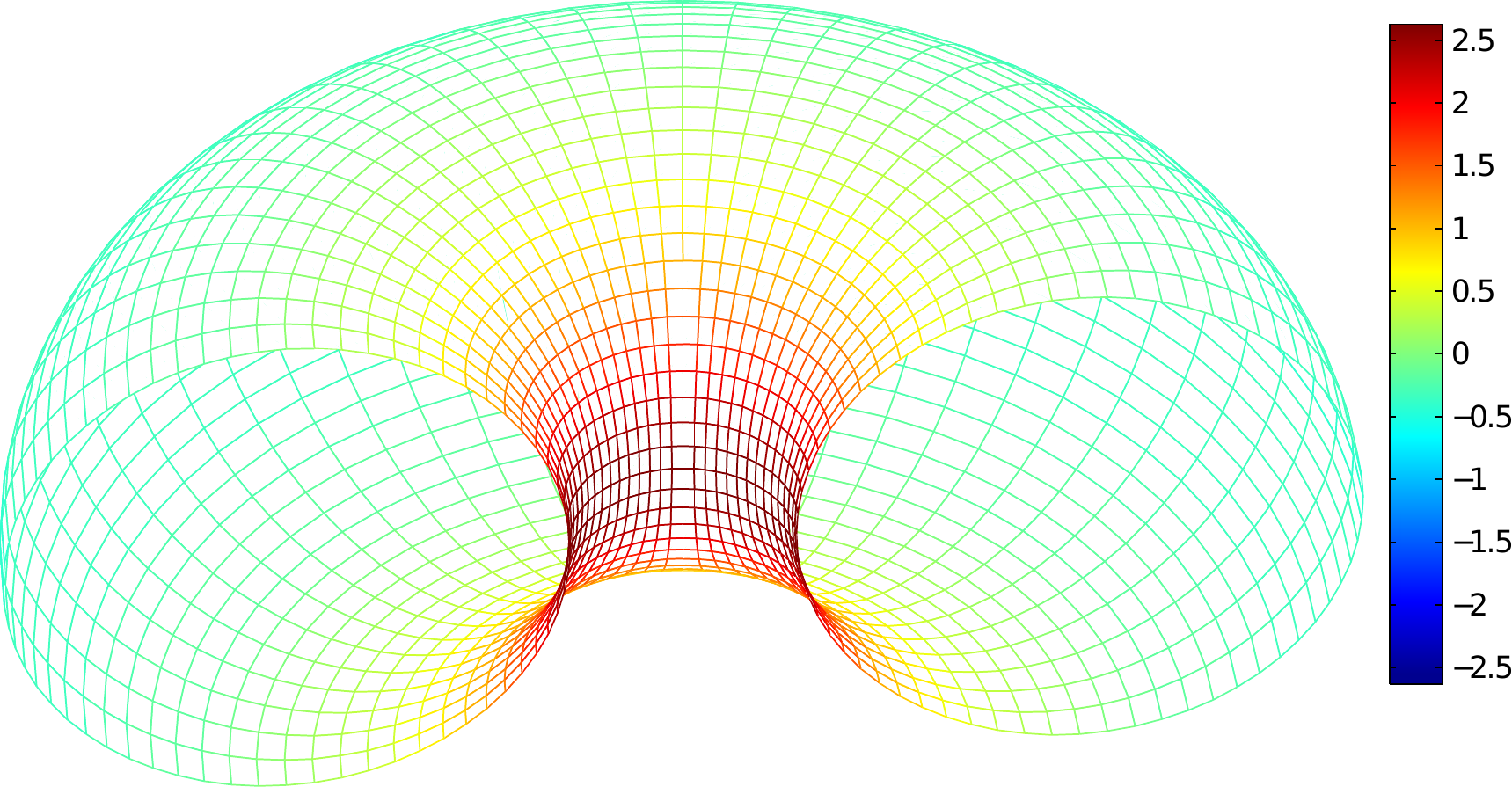}
\caption{The constant states $\alpha_m\equiv 0$ (director oriented along the meridians of the torus), $\alpha_p\equiv\pi/2$ (director oriented along the parallels of the torus) and their respective energy densities.}
\label{fig:extra}		
\end{figure}

In Subsection \ref{ssec:oneconst} we show a correspondence between elements of the fundamental group of the torus ($\Z \times \Z$), classes of functions $\alpha$ with the same boundary conditions, and classes of vector fields $\n$ with the same winding number. In view of this decomposition, in Theorem \ref{Th:gf} we prove that the Euler-Lagrange equation of \eqref{eq:walpha_intro} has a solution for every element of $\Z \times \Z$, and that for every (regular enough) initial datum $\alpha_0$ in a class with fixed boundary conditions, the $L^2$-gradient flow of \eqref{eq:walpha_intro} has a unique classical solution, which converges to a solution of the E.-L. equation as $t\to \infty$.

\subsection{Structure of the paper} In Section \ref{sec:prelimin} we introduce
 the differential geometry notation and tools that we need for our study. In Section \ref{sec:energy} we describe and contextualize the
 energies \eqref{eq:napoli_intro} and \eqref{eq:napolif_intro}. In Section \ref{sec:framework}
 we set up the functional framework and we state the $H^1$-version of the hairy ball Theorem (Theorem \ref{Th:H1empty}).
The existence of minimizers (Proposition \ref{prop:W}) and the gradient flow dynamics (Theorem \ref{Th:gfn})
 on general two-dimensional embedded surfaces are proved in Section \ref{sec:ex_min}. 
 Section \ref{sec:alpha} is devoted to the existence of global $H^1$-representations $\alpha$ (Proposition \ref{prop:alp}), which we use in Subsection \ref{ssec:formulas} to express the energies in terms of $\alpha$ and in Subsection \ref{ssec:proofempty} to prove Theorem \ref{Th:H1empty}.
In Section \ref{sec:torus} we concentrate on the particular case of an axisymmetric parametrised torus. After a short revision of the minimization problem on constant deviations $\alpha$ (Subsection \ref{ssec:constantalpha}), we state and prove the results concerning the correspondence between homotopy classes of the torus, solutions of the Euler-Lagrange equations, and gradient flows (Theorem \ref{Th:gf}). Numerical approximations of these solutions, obtained by evolving the discretized gradient flow, are presented in Section \ref{sec:numerics}. The appendices contain the computations regarding the explicit parametrization of the torus and the derivation of the Euler-Lagrange equation for the full energy \eqref{eq:napolif_intro}, in terms of $\alpha$.


\section{Differential Geometry Preliminaries}
\label{sec:prelimin}

We refer the reader to, e.g., \cite{Lee2013}, for all the material regarding Riemannian geometry. Let $\Sigma\subset \Rt$ be
 an embedded oriented regular surface of $\Rt$. 
We assume that 
$\Sigma$ is compact, connected, smooth and without boundary. For any point $x\in \Sigma$, let 
$T_x\Sigma$ and $N_x\Sigma$ denote the tangent and the normal space to $\Sigma$ in the point $x$, respectively. 
We denote with $\nb$ the inner unit normal field. Let $T\Sigma$ 
denote the tangent bundle of $\Sigma$, i.e.
the (disjoint) union over $\Sigma$ of the tangent planes $T_x\Sigma$. 
Let $\pi:T\Sigma\to \Sigma$ be the (smooth) map that assigns
to any tangent vector its application point on $\Sigma$. 
A vector field $\n$ on a open neighbourhood $A\subset\Sigma$, is a section of $T\Sigma$, i.e.
a map $\n:A\to T\Sigma$ for which $\pi\circ\n$ is the identity on $\Sigma$. 
We denote by $\mathfrak{T}(\Sigma)$ the space
of all the smooth sections of $T\Sigma$. 
For any point $x\in \Sigma$ let $T_x^{*}\Sigma=(T_x\Sigma)^{*}$
be the dual space of $T_x\Sigma$, also named cotangent space. 
Its elements are called covectors. The 
disjoint union over $\Sigma$ of the cotangent spaces $T_x^{*}\Sigma$
is $T^{*}\Sigma$. As we did for vector fields,
we introduce the space of smooth sections of $T^{*}\Sigma$. We denote this space
by $\mathfrak{T}^{*}(\Sigma)$, its elements are the covector fields. 
We denote by $g$ the metric induced on $\Sigma$ by the embedding, i.e. the restriction
of the metric of $\Rt$ to tangent vectors to $\Sigma$.
As a consequence, we can unambiguously
use the inner product notation $(\ub, \vb)_{\Rt}$ instead of $g(\ub,\vb)$ for $\ub,\vb\in T_x\Sigma$, 
$x\in \Sigma$. Similarly, we write $\vert \ub\vert  = \sqrt{(\ub, \ub)_{\Rt}}$ to denote
the norm of a tangent vector $\ub$ to $\Sigma$.  For a two-tensor $\mathbb A=\{a_{i}^j\}$ 
we adopt the norm $|\mathbb A |^2:= \text{tr}(\mathbb A^T \mathbb A)=\sum_{ij}(a_{i}^j)^2$, 
We denote with $\mathbb{A}:\mathbb{B}$ the corresponding scalar product
between the tensors $\mathbb{A}$ and $\mathbb{B}$.
If $\{\eu, \ev\}$ is any local frame for $T\Sigma$, we denote by $g_{ij}=g(\ei,\ej)=(\ei,\ej)_{\R^3}$ 
the components of the metric tensor with respect to $\{\eu, \ev\}$. By $g^{ij}$ and $\bar g$ we denote
the components of the inverse $g^{-1}$ and the determinant of $g$, respectively. 
As it is customary, if $(x^1,x^2)$ is a coordinate system 
for $\Sigma$, then $(\frac{\partial}{\partial x^1}, \frac{\partial}{\partial x^2})$ is the corresponding
local basis for $T\Sigma$ and $(\d x^1,\d x^2)$ is the dual
basis. Given a vector $X$, we denote by $X^\flat$ the covector
such that $X^\flat(\vb) = g(X, \vb)$. In coordinates,  
\[
	X^\flat = X^\flat_{i} \d x^i,\quad \hbox{with}\quad X^\flat_{i} = g_{ij}X^j.   
\]	
Being the flat $^\flat$ operator invertible, we denote 
by the sharp $^\sharp$ symbol its inverse, which acts in the following way: Given a covector $\omega$, let $\omega^\sharp$
be the vector such that $\omega(\vb) = g(\omega^{\sharp},\vb)$. In coordinates,
we have 
\[
	\omega^{\sharp} = (\omega^{\sharp})^i \frac{\partial}{\partial x^i},
	\quad \hbox{with}\quad  (\omega^{\sharp})^i = g^{ij}\omega_j. 
\]

In the formulae above and in the rest of the paper
we use Einstein summation convention: repeated upper and
lower indices will automatically be summed unless otherwise specified.  
In particular, indices with greek letters are summed from $1$ to $3$, while
latin ones are summed from $1$ to $2$. 
\medskip

\paragraph{\textbf{Differential Operators}}
Let $\nabla$ be the connection with respect to the standard metric of $\Rt$,
i.e., given two smooth vector fields $Y$ and $X$ in $\R^3$ (identified with its tangent space),
the vector field $\nabla_{X}Y$ is the vector field whose components are
the directional derivatives of the components of $Y$ in the direction $X$. 
When $\eb_\alpha$ ($\alpha=1,\ldots,3$) is a basis of $\Rt$ we will set $\nabla_\alpha Y := \nabla_{\eb_\alpha} Y$.
 Given $\ub$ and $\vb$ in $\mathfrak{T}(\Sigma)$,  
 we denote with
 $D_{\vb}\ub$ the covariant derivative of $\ub$ in the direction $\vb$,
  with respect to the Levi Civita (or Riemannian) connection $D$ of the metric $g$ on $\Sigma$. 
 Now, if
$\ub$ and $\vb$ are extended arbitrarily to smooth vector fields on $\RN$, we have the Gauss Formula along $\Sigma$: 
\begin{equation}
\label{eq:gauss}
\nabla_{\vb} \ub = D_{\vb} \ub + h(\ub,\vb)\nb.
\end{equation}
 In the relation above, the symmetric bilinear form $h:\mathfrak{T}(\Sigma)\times \mathfrak{T}(\Sigma)\longrightarrow \mathbb{R}$
  is the \emph{scalar second fundamental form} of $\Sigma$. Associated to $h$, there is a linear self adjoint operator, called \emph{shape operator} and denoted with $\B:\mathfrak{T} (\Sigma) \longrightarrow \mathfrak{T} (\Sigma)$, such that $\B \vb = -\nabla_{\vb} \nb$ for any $\vb\in  \mathfrak{T}(\Sigma)$. We recall that the operator $\B$ satisfies the Weingarten relation
 \begin{equation*}
	 (\B \ub,\vb)_{\Rt} = h(\ub,\vb) \,\,\,\,\forall \ub,\vb \in \mathfrak{T}(\Sigma).
 \end{equation*}
 Beside the covariant derivative, we introduce another differential operator for vector fields on $\Sigma$,
 which takes into account also the way that $\Sigma$ embeds in $\Rt$. 
Let $\ub\in \mathfrak{T}(\Sigma)$ and extend it smoothly to a vector field $\tilde \ub$ on $\Rt$; denote its standard gradient by $\nabla \tilde \ub$ on $\Rt$. For $x\in \Sigma$, define the \emph{surface gradient} of $\ub$
 \begin{equation}
 \label{eq:surface_grad_vector}
		\nabla_s \ub(x):=\nabla \tilde \ub(x)P(x),
\end{equation}
	where $P(x):=(Id-\nb \otimes \nb)(x)$ is the orthogonal projection on $T_x \Sigma$. Note that $\nabla_s \ub$ is well-defined, as it does not depend on the particular extension $\tilde \ub$. The object just defined is a smooth mapping $\nabla_s \ub: \Sigma \to \R^{3 \times 3}$, or equivalently $\nabla_s \ub: \Sigma \to \mathcal L(\R^3,\R^3)$ (the space of linear continuous operators on $\R^3$), such that $N_x \Sigma \subseteq \ker \nabla_s \ub(x)$, for all $x \in \Sigma$. In general, $\nabla_s \ub \neq D\ub = P(\nabla \ub)$
 since the matrix product is non commutative. Using the decomposition \eqref{eq:gauss}, it is immediate
to get
\begin{equation*}
	\nabla_s \ub[\vb]= \nabla_{\vb} \ub 
		= D_{\vb}\ub + h(\ub,\vb)\nb,\quad \forall \vb\in T_x \Sigma, \forall x\in \Sigma,
\end{equation*} 
which gives, recalling that the decomposition is orthogonal, 
\begin{equation}
\label{eq:normgrad}
	|\nabla_s \ub |^2 = | D\ub |^2 + |\B \ub |^2,\quad \forall \ub\in T_x \Sigma, \forall x\in \Sigma.
\end{equation}
Having defined $\nabla_s\ub$, we can introduce the related notions of divergence and curl
\begin{equation*}
	\hbox{tr}_g\nabla_s\ub=\hbox{tr}_gD\ub =:\divs \ub,\,\,\,\hbox{ in coordinates, }\divs \ub= \frac{1}{\sqrt{\bar g}} \frac{\partial }{\partial x^i}\left(\sqrt{\bar g}\, \ub^i \right)
\end{equation*}
and $\curls \ub:= -\epsilon \nabla_s \ub,$ where $\epsilon$ is the Ricci alternator: 
\[
	\epsilon_{\alpha \beta \gamma}=
\left\{
	\begin{array}{ll}
		0 	& \text{if any of $\alpha,\beta,\gamma$ are the same,}\\
		+1	& \text{if $(\alpha,\beta,\gamma)$ is a cyclic permutation of (1,2,3),}\\
		-1 	& \text{otherwise.}
	\end{array}
\right.		
\]
Note that the trace operator in the definition of the divergence acts only on 
tangential directions. Moreover, note that, contrary to the so-called covariant
curl (denoted with $\curl_{\Sigma}$, see \cite{Lee2013}) the surface $\curls$ defined above has, unless
the surface $\Sigma$ is a plane, 
also in-plane components. To see this, we introduce the Darboux orthonormal frame (or Darboux trihedron)
$(\n, \tbf, \nb)$, where $\tbf = \nb \times \n$. Let $
\kappa_\n ,\kappa_\tbf$ be the geodesic curvatures of the flux lines of
 $\n$ and $\tbf$, defined as $\kappa_\n:= 	(D_\n \n,\tbf)_{\R^3}$, $\kappa_\tbf:= 	-(D_\tbf \tbf,\n)_{\R^3}$, respectively;
 let $c_\n:=(\B \n,\n)_{\R^3}$ be the normal curvature and let 
$\tau_\n = -(\B\n,\tbf)_{\R^3}$ be the geodesic torsion of the flux lines of $\n$ (see, e.g., \cite{DoCarmo76}). The surface gradient of $\n$, with respect to the Darboux frame, has the simple expression (see, e.g., \cite{Rosso2003})
\begin{equation*}
	\nabla_s \n = 
	\begin{pmatrix}
		0 & 0 & 0 \\
		\kappa_\n & \kappa_\tbf & 0 \\
		c_\n & -\tau_\n & 0 
	\end{pmatrix},
\end{equation*}
from which we read
\begin{equation}
\label{eq:div-curl2}
	\divs \n = \kappa_\tbf\quad \hbox{and}\quad
	\curls \n = -\tau_\n \n - c_\n \tbf + \kappa_\n \nb.
\end{equation}
On the other hand, also the norm of the covariant derivative $D\n$ can 
be expressed in terms of the geodesic curvatures $\kappa_\tbf$
and $\kappa_\n$ as $\vert D\n\vert^2 = \kappa_\tbf^{2} + \kappa_\n^{2}$.
As a result, we have the following useful expression 
\begin{equation}
\label{eq:uc}
(\divs \n)^2 + (\n \cdot \curls \n)^2 + |\n \times \curls \n|^2 = (\divs \n)^2 + |\curls \n|^2 = \kappa_\tbf^{2} + \kappa_\n^{2} +\tau_\n^2 +c_\n^2 =|\nabla_s \n|^2.
\end{equation}

For a smooth scalar function $f:\Sigma \to \R$, with differential application $\d f_x:T_x \Sigma \to T_{f(x)}\R\simeq \R$, we introduce its gradient as
$\grads f = \d f^{\sharp}$, that is, the vector field such that
\[
 	\d f(X) = g(\grads f,X)\quad \hbox{for all } X\in T\Sigma. 
\]	
Since for scalar functions the expressions of $\grads f$ and $\nabla_s f$ coincide, in what follows we replace $\grads$ with the more common notation $\nabla_s f$. In coordinates, denoting $X=X^i \frac{\partial}{\partial x^i}$, the above relation means
\begin{equation*}
	\nabla_s f := g^{ij}\frac{\partial f}{\partial x^j}\frac{\partial}{\partial x^i}.
\end{equation*}
The Laplace Beltrami operator on $\Sigma$ is given by 
\begin{equation*} 
	\Delta := \divs \circ \nabla_s 
	= \frac{1}{\sqrt{\bar g}} \frac{\partial}{\partial x^i}\left(\sqrt{\bar g}g^{ij}\frac{\partial}{\partial x^j} \right).
\end{equation*}
We denote with $\dvolg$ the volume form
of $\Sigma$ (see, e.g., \cite{Lee2013}). 
We recall the following integration by parts formula ($f$ and $h$ are smooth functions on $\Sigma$)
\begin{equation}
\label{eq:int_part}
	-\int_{\Sigma}\Delta f \,h\, \dvolg 
	= \int_{\Sigma}g(\nabla_s f, \nabla_s h)\,\dvolg -\int_{\partial\Sigma}h \,\d f(N)\,\d S', 
\end{equation}
where $f$ and $h$ are smooth functions on $\Sigma$
and $\d S'$
is the element of length of the induced metric on $\partial\Sigma$. For a smooth vector field $\n\in \mathfrak{T}(\Sigma)$, we denote with 
$D^2 \n$ the double covariant derivative of $\n$, i.e. the following tensor
field
\[ 
	D^2\n(X,Y):= D_X(D_Y\n) - D_{D_{X}Y}\n\quad \hbox{for } X, Y\in \mathfrak{T}(\Sigma).
\]	
If $X=\frac{\partial}{\partial x_i}$ and $Y=\frac{\partial}{\partial x_j}$, we set $D^2_{ij}\n := D^2\n(\frac{\partial}{\partial x_i},\frac{\partial}{\partial x_j})$.
Then, 
we denote
with $\Delta_g \n$ the rough laplacian of $\n$, namely the 
vector field defined as
\begin{equation*}
	\Delta_g \n: =g^{ij}(D^2_{ij}\n) = g^{ij}D_{i}(D_j \n) - g^{ij}D_{D_{i}\!\!\!\,\,\frac{\partial}{\partial x_j}}\n.
\end{equation*}
In particular, 
 in a local orthonormal frame $\{\eu,\ev \}$, we have that
\[	
	\Delta_ g \n = \delta^{ij}D_{i}(D_j \n) - \delta^{ij}D_{D_{i}\ej}\n.
\]	
Note that $\Delta_g$ can be expressed in divergence form as $ \Delta_g \n = \hbox{div}_sD\n.$ 
In the flat case, the rough laplacian reduces to the componentwise 
laplacian of $\n$.


\section{Energetics} 
\label{sec:energy}
Let $\Omega\subset \R^3$ be the volume occupied by the crystal and let $\mathbb S^2\subset \R^3$ be the
unit sphere. In the framework of the director theory for nematic liquid crystals,
the configurations of the crystal may be described in terms of the optical
axis, a unit vector field $\n:\Omega \to \mathbb S^2$. A widely used model 
for nematic liquid crystals is the Oseen, Zocher and Frank (OZF) model (see, e.g., \cite{Virga94}), which is
based on the energy
\begin{equation}
\label{eq:energy}
	\begin{split}
		W^{OZF}(\n,\Omega):=\frac 12 \int_\Omega \left[K_1 (\Div \n)^2 
			+ K_2 (\n \cdot \curl \n)^2 +K_3 |\n \times \curl \n|^2\right. \\ 
		\left.+ (K_2 + K_{24})\Div[(\nabla \n) \n -(\Div \n)\n]\right]\,\d x,
	\end{split}	
\end{equation}
 where $K_1$, $K_2$, $K_3$ and $K_{24}$ 
 are positive constants called the splay, twist, bend and saddle-splay moduli, 
 respectively. In what follows, we generally omit the dependence of the energy on the domain. 
 A well-studied case is the so-called \emph{one-constant approximation}, obtained when
  the three constants $K_i$ are equal. In this case \eqref{eq:energy} reduces to
\begin{equation}
\label{eq:energyoc}
	 W^{OZF}_\kappa(\n):=\frac \kappa2 \int_\Omega |\nabla \n|^2\,\d x.
\end{equation} 
This model (both in the general case and the one-constant approximation)
has received considerable attention from the mathematical community. Among the others,
we refer to \cite{Bre_Co_Lieb}, \cite{HKL}.  As it is apparent from the energy \eqref{eq:energyoc},
the analysis of liquid crystals shares some difficulties with the theory of harmonic maps 
into spheres
(see, e.g., \cite{Bre_Co_Lieb}).
More precisely, the study of \eqref{eq:energy} and \eqref{eq:energyoc}
has to face possible topological
obstructions coming from the choice of the boundary conditions.
In particular, choices of the boundary data not satisfying
proper topological constraint lead to the formation of singularities,
named defects, in the director (see \cite{HKL}).

In this paper, we study nematic liquid crystals which are constrained on a surface
$\Sigma\subset \R^3$. We describe their behaviour via a unit norm vector field $\n$ 
tangent to $\Sigma$, that is $\n(x) \in T_x\Sigma$, for $x\in \Sigma$. As in the 
 three-dimensional theory of Oseen, Zocher, and Frank (OZF), the director $\n$
describes the preferred direction of the molecular alignment (which coincides
with the optical axis of the molecule). In all classical models of surface free energy for nematics, the derivatives in \eqref{eq:energyoc} are replaced by the covariant derivative $D\n$ of the surface $\Sigma$ (see \cite{LubPro92,Straley71,TuSei07,VitNel06}, and \cite{Shk02}, where the full hydrodynamic model is considered). 
Consequently, the surface energy in the one-constant approximation is
\begin{equation}
\label{eq:energy2doc}
	 W^{in}_\kappa(\n):=\frac \kappa2 \int_\Sigma |D\n|^2\,\dvolg
\end{equation}
and the surface energy in its full generality is 
\begin{equation}
\label{eq:energy2d}
	W^{in}(\n):=\frac 12 \int_\Sigma K_1 (\divs \n)^2  +K_3 |\curl_{\Sigma} \n|^2 \dvolg,
\end{equation}
where $\curl_{\Sigma}$ is the covariant curl (see \cite{Lee2013}). We adopt the superscript `\emph{in}' in \eqref{eq:energy2doc} and \eqref{eq:energy2d}, referring to the \emph{intrinsic} character of this energy.
A recent approach \cite{NapVer12L,NapVer12E} takes into account also the effects of extrinsic curvature in the deviations of the director. The energy in this case is
\begin{equation}
\label{eq:napoli}
	W(\n):=\frac 12 \int_\Sigma K_1 (\divs \n)^2 + K_2 (\n \cdot \curls \n)^2 +K_3 |\n \times \curls \n|^2\,\dvolg.
\end{equation}
To have a better insight on the extrinsic/intrinsic character of this energy, let us focus on the 
one-constant approximation of \eqref{eq:napoli},  which is given by
\begin{equation}
\label{eq:napolioc}
	 W_\kappa(\n):=\frac \kappa2 \int_\Sigma |\nabla_s \n|^2\,\dvolg,
\end{equation}
where $\nabla_s $ is the operator introduced in \eqref{eq:surface_grad_vector}. Now, 
thanks to \eqref{eq:normgrad} we have 
\begin{equation}
\label{eq:napoliocbis}
	W_\kappa(\n) = \frac \kappa2 \int_\Sigma
	| D\n|^2 + |\B \n|^2\,\dvolg
\end{equation}
which shows a striking difference between the classical energy \eqref{eq:energy2doc} and the newly 
proposed \eqref{eq:napolioc}, namely the presence of the extrinsic term $\B\n$. 
This term takes into account how the surface $\Sigma$, which models the thin substrate on which the liquid crystal is smeared, 
is embedded into the three-dimensional space. The energy \eqref{eq:napoli}
has been derived in \cite{NapVer12L,NapVer12E} starting from the well established
Oseen and Frank's energy $W^{OZF}$ \eqref{eq:energy}. More precisely,
starting from a tubular neighborhood $\Sigma_h$ of thickness $h$ (satisfying a 
suitable constraint related to the curvature of $\Sigma$), 
Napoli and Vergori in \cite{NapVer12L,NapVer12E} obtain that $W(\n)$ in \eqref{eq:napoli} is given by
\begin{equation*}
	W(\n) = \lim_{h\searrow 0} \frac 1h W^{OZF}(\n,\Sigma_h)
\end{equation*}
The limit above holds for any fixed and sufficiently smooth field $\n$ with
the property of being independent of the thickness direction
and tangent to any inner surface of the foliation $\Sigma_h$. As a result,
the null lagrangian related to the coefficient $(K_2+K_{24})$ 
in \eqref{eq:energy} disappears in the limit procedure and hence it
is not considered in \eqref{eq:napoli}.  
More generally, 
the rigorous justification of the surface energy \eqref{eq:napoli} in terms of variational methods 
is an intriguing and interesting problem. There are at least 
two possible approaches: the first one is to directly justify the dimensional reduction in terms 
of Gamma convergence (in the spirit of \cite{ledret_raoult}) 
while the second one consists in obtaining \eqref{eq:napoli} 
starting from a discrete lattice model (cf., e.g., \cite{BCSarxiv}).
In the one constant approximation regime, the rigorous justification of \eqref{eq:napolioc} 
in terms of dimensional reduction and in terms of a micro/macro transition is contained in \cite{LM}. 
It is an open problem to rigorously justify the full energy.


\section{Functional Framework}
\label{sec:framework}
In this Section we introduce the functional framework where to set
the problem. As it will be clear in a moment (see Theorem \ref{Th:H1empty}),
the choice of our functional 
setting reflects the topology of the shell. In particular, we will restrict to surfaces
for which the Poincar\'e - Hopf index Theorem does not force the vector field to 
have defects. 

Here integration is always with respect to the area form of the metric $g$ induced on $\Sigma$ by the euclidean metric of $\R^3$. Let $L^2(\Sigma)$ and  $L^2(\Sigma;\R^3)$ be the standard Lebesgue spaces of square-integrable scalar functions and vector fields, respectively. 
Define the spaces of tangent vector fields
\[
	\Lt:=\left\{ \ub \in L^2(\Sigma;\R^3) : \ub(x)\in T_x\Sigma \hbox{ a.e.}\right\} \quad \text{and}\quad
	\Ht:=\left\{\ub \in L^2_{tan}(\Sigma) : |D\ub|\in L^2(\Sigma) \right\}.
\]	
The latter, endowed with the scalar product
\[
	(\ub,\vb)_{H^1}:= \left( \int_\Sigma \left\{
	 \text{tr}(D\ub^TD\vb) 
	+(\ub,\vb)_{\R^3}\right\} \dvolg \right)^{1/2}
\]
is a separable Hilbert space. Let $\|\ub\|_{H^1}:= \sqrt{(\ub,\ub)_{H^1}}$. We can define another norm by
\[
	\| \ub\|_{H^1_s}:=\left( \int_\Sigma \left\{ | \nabla_s\ub |^2 + |\ub|^2\right\}\dvolg\right)^{1/2}.
\]
Let $\lambda_M$ denote the maximum value attained by the eigenvalues of the shape operator $\B$ on $\Sigma$. Since
\[
 	| D\ub |^2 + (\lambda_M^2+1)  |\ub|^2 \geq | D\ub |^2 + |\B\ub|^2+ |\ub|^2 \geq | D\ub |^2 + |\ub|^2,
\]
by  \eqref{eq:normgrad}  the two norms are equivalent:
\[
	(\lambda_M^2+1)^{1/2} \|\ub\|_{H^1} \geq \|\ub\|_{H^1_s} \geq \|\ub\|_{H^1}.
\]
Finally, the ambient space for the directors $\n$ is defined as
\[
	\Hut :=\left\{\ub \in \Ht : \vert\ub\vert =1 \hbox{ a.e.}  \right\}.
\]	
Since for $\ub \in \Hut$
\[
	\|\ub\|^2_{H^1_s} = \frac{2}{\kappa}W_\kappa(\ub) + \textup{Vol}(\Sigma),
\]
it will often be useful to adopt $\|\cdot\|_{H^1_s}$ instead of $\|\cdot\|_{H^1}$. Note that $\Hut\subset L^2(\Sigma;\R^3)$ with compact embedding, and thus $\Hut$ is a weakly closed subset of  $\Ht$. Note also that $\Hut$ lacks a linear structure, while $\Ht$ is a linear space. 

There are two major problems to address before discussing the existence of minimizers
of \eqref{eq:napoli}:

\begin{itemize}
\item Choice of the topology of the surface $\Sigma$. This is related to the choice
of the functional space thanks to the $H^1$-version of the hairy ball Theorem 
(see the next Theorem \ref{Th:H1empty}).  
\item Choice of the boundary conditions. 
Given a boundary datum $\n_b$ in some functional class,
  we have to show that the set of competitors $\mathcal{A}(\n_{b})$
is not empty, where 
\begin{equation*}
	\mathcal{A}(\n_{b}):=\left\{ \ub \in \Hut: \ub = \n_{b} \hbox{ on } \partial \Sigma \right\}.
\end{equation*}
This fact is related to some precise topological properties of $\n_b$, which we analyze in \cite{CSV13}.
\end{itemize}
Here, we restrict to the case of a smooth surface without boundary. In this context, 
the following $H^1$ form of the classical hairy ball Theorem, clarifies the situation.

\begin{theorem}
\label{Th:H1empty}
Let $\Sigma$ be a compact smooth surface without boundary, embedded in $\R^3$.
Let $\chi(\Sigma)$ be the Euler characteristic of $\Sigma$. Then 
\[
	\Hut \neq \emptyset \,\,\Leftrightarrow \chi(\Sigma) = 0.
\]
\end{theorem}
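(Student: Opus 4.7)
The easy implication is $(\Leftarrow)$: if $\chi(\Sigma)=0$, the classical Poincar\'e--Hopf/hairy ball theorem furnishes a smooth nowhere--vanishing tangent vector field on $\Sigma$, which after normalization lies in $\Hut$. The content of the theorem is the converse: from the non-emptiness of $\Hut$ we must manufacture a \emph{continuous} unit tangent vector field on $\Sigma$, so that the classical hairy ball theorem forces $\chi(\Sigma)=0$. My plan is exactly to produce such a field by picking a Dirichlet-energy minimizer inside $\Hut$ and then bootstrapping its regularity.

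\textbf{Step 1: existence of a minimizer of the Dirichlet energy in $\Hut$.} Assume $\Hut\neq\emptyset$, so that $m:=\inf\{\int_\Sigma|D\ub|^2\,\dvolg : \ub\in\Hut\}<\infty$. Take a minimizing sequence $\ub_k\in\Hut$. Since $|\ub_k|=1$ a.e., $\|\ub_k\|_{L^2(\Sigma)}$ is bounded, so $\{\ub_k\}$ is bounded in $\Ht$ (using the equivalence of $\|\cdot\|_{H^1}$ and $\|\cdot\|_{H^1_s}$ recalled above, bounded Dirichlet energy gives bounded $\Ht$-norm). Up to a subsequence, $\ub_k\weakto\n_*$ in $\Ht$ and $\ub_k\to\n_*$ in $L^2$ by compact embedding; the pointwise unit-norm constraint then passes to the $L^2$-limit, so $\n_*\in\Hut$. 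Weak lower semicontinuity of the Dirichlet energy yields $\int_\Sigma|D\n_*|^2\,\dvolg\leq m$, so $\n_*$ is a minimizer.

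\textbf{Step 2: continuity of $\n_*$.} On any simply-connected coordinate patch $U\subset\Sigma$, choose a smooth orthonormal tangent frame $\{\eu,\ev\}$ and write $\n_*=\cos(\alpha)\eu+\sin(\alpha)\ev$ for a scalar $\alpha\in H^1(U)$ (this is the local version of the representation built in Proposition \ref{prop:alp}, and is enough for a purely local regularity statement). Computing $|D\n_*|^2$ in this frame gives a smooth Riemannian quadratic form in $\nabla_s\alpha$ plus lower-order terms depending smoothly on the frame, so the minimality of $\n_*$ translates into $\alpha$ being a local minimizer (equivalently, a weak solution of the Euler--Lagrange equation) of a quasilinear scalar elliptic functional on $U$ with smooth coefficients and no pointwise constraint on $\alpha$. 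Standard elliptic regularity theory (De Giorgi--Nash--Moser for H\"older continuity, followed by Schauder bootstrap) gives $\alpha\in C^{0}(U)$, in fact $C^{\infty}(U)$. Hence $\n_*=\cos(\alpha)\eu+\sin(\alpha)\ev$ is continuous on $U$, and by covering $\Sigma$ with finitely many such patches, $\n_*$ is continuous on all of $\Sigma$.

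\textbf{Step 3: conclusion.} Since $\n_*$ is a continuous, unit-norm, tangent vector field on the compact boundaryless surface $\Sigma$, the classical hairy ball theorem (i.e.\ Poincar\'e--Hopf applied to a nowhere-vanishing field) forces $\chi(\Sigma)=0$.

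\textbf{Main obstacle.} Steps 1 and 3 are standard; the real work is Step 2. The subtlety is that the natural Euler--Lagrange equation for $\n_*$ as an element of $\Hut$ is that of a harmonic map into $\mathbb{S}^1\subset T\Sigma$, whose right-hand side contains the \emph{critical} nonlinearity $|D\n_*|^2\n_*$ against which plain $H^1$ regularity theory fails. The device that removes this obstacle is precisely the passage to the scalar variable $\alpha$: since $\mathbb{S}^1$ is one-dimensional and flat, the constraint $|\n|=1$ is absorbed into the change of unknown, and the PDE for $\alpha$ is an unconstrained scalar equation to which standard regularity theory applies directly. Justifying that this local $\alpha$-representation is globally usable for a regularity argument (independence of patch, uniqueness modulo $2\pi$) is what makes Section \ref{sec:alpha} the essential input, and is the technical step I would devote the most care to.
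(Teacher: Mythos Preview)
Your proposal is correct and follows essentially the same route as the paper: minimize the Dirichlet energy in $\Hut$, pass locally to the scalar lift $\alpha$, and use elliptic regularity for the scalar equation to upgrade the minimizer to a continuous (in fact smooth) field, then invoke Poincar\'e--Hopf.

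The only refinement worth noting is that the paper sharpens your Step~2 by exploiting a gauge freedom you do not mention. From Lemma~\ref{lemma:gradalp} one has exactly $|D\n|^2=|\nabla_s\alpha-\mathbb A|^2$ (no further lower-order terms), so the Euler--Lagrange equation for $\alpha$ is the \emph{linear} equation $\Delta\alpha=\divs\mathbb A$; the paper then uses Lemma~\ref{lemma:newomega} to rotate the local frame so that $\divs\mathbb A=0$, reducing to $\Delta\alpha=0$ and making the regularity step immediate. Your appeal to De~Giorgi--Nash--Moser is therefore more than is needed (the equation is linear with smooth coefficients and smooth right-hand side, not genuinely quasilinear), but it is certainly sufficient. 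Your identification of the ``main obstacle'' and its resolution via the $\alpha$-lift is exactly the point of the argument.
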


According to the above Theorem, 
in Section \ref{sec:ex_min} we will make this basic topological assumption:
\begin{equation}
\label{eq:main_ass}
\Sigma 
\hbox{ is a compact and smooth two-dimensional 
surface without boundary, with $\chi(\Sigma) = 0$.}
\end{equation}

We postpone the proof of the above result to the end of Section \ref{sec:alpha}.
In particular, we have that the two-dimensional sphere cannot 
be combed with $H^1$-regular vector fields. 
On the other hand, the above Theorem (as well as its smooth classical
counterpart) does not hold for odd-dimensional spheres as the following
example shows. Take $x=(x_1,\ldots, x_{2N})\in \mathbb{S}^{2N-1}$. The vector
field $\ub$ given by 
\[
\ub(x)=(x_2,-x_1,\ldots,x_{2i},-x_{2i-1},\ldots, x_{2N},-x_{2N-1})
\]
is smooth, tangent, and with unit norm.


\section{Existence of minimizers and gradient flow of the energy}
\label{sec:ex_min}

Now, we come to the question of existence of minimizers of the energy \eqref{eq:napoli}. Choosing $\Sigma$ satisfying \eqref{eq:main_ass}, namely in such a way that $\Hut \neq\emptyset$, we have the following 
(see \cite{HKL} for 
the flat case)

\begin{proposition}
\label{prop:W}
Let $\Sigma$ be a smooth, compact surface in $\R^3$, without boundary, 
satisfying \eqref{eq:main_ass}  and let $W:\Hut \to \R$ be the energy functional defined in \eqref{eq:napoli}.
Set $K_*:=\min\left\{K_1,K_2,K_3\right\}$ and $ K^*:= 3(K_1 + K_2 + K_3)$.
We have that
\[
 	\frac{ K_*}{2} \int_{\Sigma}\big(\vert D \ub(x) \vert^2 
		+ \vert\B \ub(x)\vert^2\big)\dvolg\le W(\ub)\le
 		\frac{ K^*}2 \int_{\Sigma}\big(\vert D \ub(x) \vert^2 + \vert\B \ub(x)\vert^2\big)\dvolg.
\]
Moreover, the energy $W$ is lower semicontinuous with respect to the weak convergence of $H^1(\Sigma;\R^3)$. 
\end{proposition}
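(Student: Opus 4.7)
The two-sided energy inequality is a direct consequence of the pointwise identity \eqref{eq:uc}. For every $\ub\in\Hut$ the constraint $|\ub|=1$ holds almost everywhere, so \eqref{eq:uc} gives
\[
(\divs \ub)^2 + (\ub \cdot \curls \ub)^2 + |\ub \times \curls \ub|^2 = |D\ub|^2 + |\B\ub|^2 \quad \text{a.e. on } \Sigma.
\]
The three summands on the left are non-negative, so replacing each $K_i$ by $K_*$ yields the lower bound after integration, while using $K_i \le K_1+K_2+K_3 \le K^*$ yields the upper bound.

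For the weak lower semicontinuity I would fix a sequence $\ub_n \weakto \ub$ in $H^1(\Sigma;\R^3)$ with $\ub_n\in \Hut$; the limit again lies in $\Hut$, since $\Hut$ is weakly closed, as remarked earlier. Two tools are available: by the compact Sobolev embedding on the two-dimensional surface $\Sigma$, together with the pointwise bound $|\ub_n|=1$, one gets strong convergence $\ub_n \to \ub$ in every $L^p(\Sigma;\R^3)$, $p<\infty$; and weak convergence $D\ub_n \weakto D\ub$ in $L^2$ is inherited from the weak $H^1$ convergence. Using $\nabla_s \ub_n[\vb] = D_\vb \ub_n + h(\ub_n,\vb)\nb$ with $h$ smooth, each of $\divs \ub_n$, $\ub_n \cdot \curls \ub_n$, and $\ub_n \times \curls \ub_n$ splits as a strong--weak product plus a term that converges strongly in $L^2$. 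The classical weak--strong product lemma then gives weak $L^2(\Sigma)$ convergence of each of these three quantities to its counterpart for $\ub$, so by weak lower semicontinuity of the $L^2$-norm each of the three integrals $\int_\Sigma(\divs \ub)^2\,\dvolg$, $\int_\Sigma(\ub\cdot \curls \ub)^2\,\dvolg$, and $\int_\Sigma |\ub\times \curls \ub|^2\,\dvolg$ is weakly lsc on $\Hut$; the positive linear combination $W$ inherits the same property.

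The step requiring the most care is the weak $L^2$ convergence of the \emph{nonlinear} quantities $\ub_n\cdot\curls \ub_n$ and $\ub_n\times \curls\ub_n$: each pairs a factor that is strongly convergent in every $L^p$ (where the pointwise bound $|\ub_n|=1$ is decisive, as it promotes $L^2$-strong convergence to $L^p$-strong convergence for all finite $p$) against a factor only weakly convergent in $L^2$, and it is precisely here that the unit-norm constraint plays its critical role.
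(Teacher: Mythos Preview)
Your proof is correct and follows the same route as the paper's: the pointwise identity \eqref{eq:uc} immediately gives the two-sided bound (in fact your upper estimate $K_i\le K_1+K_2+K_3$ is sharper than needed, since $K^*=3(K_1+K_2+K_3)$ carries extra slack), and lower semicontinuity is obtained term by term. The paper's own proof is a two-line sketch that simply asserts each term of \eqref{eq:napoli} is weakly lower semicontinuous; you have supplied the mechanism behind that assertion, correctly isolating the one nontrivial point---that the nonlinear products $\ub_n\cdot\curls\ub_n$ and $\ub_n\times\curls\ub_n$ converge weakly in $L^2$ because the constraint $|\ub_n|=1$ upgrades the strongly convergent factor to $L^\infty$, so the pairing against any $L^2$ test function passes to the limit.
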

\begin{proof}
The upper and the lower bound follow by the \emph{one-constant approximation} (see \eqref{eq:normgrad}) and the equality
\eqref{eq:uc}.
The lower semicontinuity
can be proved by noting that all the terms in  \eqref{eq:napoli} are
indeed weakly lower semicontinuous in $H^1(\Sigma;\R^3)$ and are multiplied by the positive constants $K_1, K_2$
and $K_3$.
\end{proof}
Thus, the existence of a minimizer of the energy $W$ follows from the direct method
of calculus of variations
\begin{proposition}
\label{propo:ex_min}
There exists $\n\in \Hut$ such that $W(\n) = \inf_{\ub \in \Hut} W(\ub)$.
\end{proposition}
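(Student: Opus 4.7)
The plan is the direct method of the calculus of variations applied to $W : \Hut \to \R$. Let $\{\n_k\} \subset \Hut$ be a minimizing sequence, which exists because $\Hut$ is non-empty by Theorem \ref{Th:H1empty} under the standing assumption \eqref{eq:main_ass}, and because $W \geq 0$ so the infimum is a finite nonnegative real.

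First I would establish uniform $H^1$-boundedness of $\{\n_k\}$. The lower bound in Proposition \ref{prop:W} gives
\[
\frac{K_*}{2}\int_\Sigma \bigl(|D\n_k|^2 + |\B \n_k|^2\bigr)\,\dvolg \leq W(\n_k),
\]
while the pointwise constraint $|\n_k(x)|=1$ a.e.\ yields $\|\n_k\|_{L^2(\Sigma;\R^3)}^2 = \textup{Vol}(\Sigma)$. Since $\{W(\n_k)\}$ is bounded, $\{\n_k\}$ is bounded in $\Ht$, and by reflexivity, up to a subsequence, $\n_k \weakto \n$ weakly in $\Ht$.

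Next I would argue $\n \in \Hut$, i.e.\ use that $\Hut$ is weakly closed in $\Ht$ as already observed in Section \ref{sec:framework}: by the compact embedding $\Ht \hookrightarrow L^2(\Sigma;\R^3)$ and extraction of a further subsequence, $\n_k(x) \to \n(x)$ for a.e.\ $x\in\Sigma$; both the tangency $\n_k(x) \in T_x\Sigma$ (a linear pointwise condition) and the unit-norm constraint $|\n_k(x)|=1$ are preserved under a.e.\ convergence. The weak lower semicontinuity of $W$ on $H^1(\Sigma;\R^3)$, recorded in Proposition \ref{prop:W}, then yields
\[
W(\n) \leq \liminf_{k\to\infty} W(\n_k) = \inf_{\ub \in \Hut} W(\ub),
\]
so $\n$ is a minimizer.

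Every ingredient — non-emptiness of $\Hut$, coercivity, weak closedness, and weak lower semicontinuity — has already been prepared, so there is no serious obstacle. The only step that requires some care is the passage to the limit $|\n|=1$, for which weak $H^1$-convergence alone does not suffice; the compact embedding $\Ht \hookrightarrow L^2$ combined with an a.e.-convergent subsequence is what makes it work.
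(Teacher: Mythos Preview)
Your proof is correct and follows essentially the same direct-method argument as the paper: both extract a weakly convergent minimizing sequence, use the compact embedding into $L^2$ to recover the unit-norm constraint, and conclude via the weak lower semicontinuity of $W$ established in Proposition~\ref{prop:W}. If anything, your version is slightly more explicit about why the sequence is bounded and why $\Hut$ is nonempty.
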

\begin{proof}
Let $\ub_n$ be a minimizing sequence uniformly bounded in $\Hut$. This means
that  $|\ub_n|=1$ and that $\left\{\ub_n\right\}$ is uniformly bounded in $\Ht$.  
Thus, up to a not relabeled
subsequence of $n$, we have that there exists a vector field $\n\in \Ht$ with $|\n|=1$ such that
\[
	\ub_n \xrightarrow{n\nearrow +\infty} \n \hbox{ weakly in } \Hut  \hbox{ and strongly in } L^2(\Sigma).
\]
Thus, the lower semicontinuity of $W$ gives that
$ \inf_{\ub\in \mathcal A} W (\ub)=\liminf_{n\nearrow +\infty}W(\ub_n)\ge W(\n)$ which means that $\n$ is a 
minimizer for $W$. 

\end{proof}

Now, in the case of the one-constant approximation,
we compute the Euler Lagrange equation
associated to the minimization of \eqref{eq:napoliocbis}
 (see also). Incidentally, note that up
to technical modifications, the same computations are valid for an $(n-1)$-hypersurface in $\mathbb{R}^n$. 
Thus, let $\n\in \Hut$ be a minimizer for \eqref{eq:napoliocbis}.  Take a smooth
$\vb\in \Hut$ and consider the family of deformations 
$\varphi(t):=\frac{\n + t\vb}{\vert \n + t\vb\vert}$, for $t\in (0,1)$. Note that
$\vert\varphi\vert = 1$ by construction and that $\varphi\in  \Hut$. Moreover, 
$\varphi(0) = \n$ and $\dot\varphi(0) = \vb - (\vb,\n)\n$ and thus $ W_\kappa(\varphi(t))$ 
has a minimum at $t=0$.
Hence, we have
\begin{align*}
	0 = \frac{\d}{\d t}_{|t=0} W_\kappa (\varphi(t))&= \kappa\int_{\Sigma}
		(D\varphi(0),D\dot\varphi(0))_{\R^3}\dvolg + \kappa\int_{\Sigma}(\B\varphi(0),\B\dot\varphi(0))_{\R^3}\dvolg \\
	&=  \kappa\int_{\Sigma}(D \n,D\vb)_{\R^3}\dvolg + \kappa\int_{\Sigma}(\B \n,\B\vb)_{\R^3}\dvolg \\
	&  \quad -\kappa\int_{\Sigma} \vert D\n\vert^2(\n,\vb)_{\R^3} \dvolg 
		- \kappa\int_{\Sigma}\vert \B \n\vert^2 (\n,\vb)_{\R^3}\dvolg,
\end{align*}
where we have used that, being $\vert \n\vert = 1$, there holds that $(D\n,\n)_{\R^3}=0$, and the
fact that $\B[\n (\n,\vb)_{\R^3}] = -\nabla_{\n(\n,\vb)_{\R^3}}\nb = -(\n,\vb)_{\R^3}\nabla_{\n}\nb=(\n,\vb)_{\R^3}\B\n$.
Now, since the shape operator $\B$ is self-adjoint, we may introduce the operator
$\Bd$ given by 
\begin{equation*}
	(\Bd \ub, \vb)_{\R^3}:= (\B \ub,\B\vb)_{\R^3}\quad \hbox{for any } \ub, \vb \in \mathfrak T(\Sigma).
\end{equation*}
Thus, collecting all the computations, we obtain that a minimizer $\n$ of $W_\kappa$ 
is a solution of the following system of nonlinear partial differential equations
\begin{equation}
\label{eq:eul_general}
	-\Delta_g \n +  \Bd\n = \vert D\n\vert^2 \n + \vert \B\n\vert^2 \n\quad \hbox{in } \Sigma.
\end{equation}
Since the equations do not depend on $\kappa$, in the remainder of this section we take $\kappa=1$, but we still write $W_\kappa$, to tell the one-constant energy from the general $W$ with three constants.

\begin{rem} 
\label{rem:decomposition}
As it happens for harmonic maps, a vector field $\n$ solving \eqref{eq:eul_general}
is parallel to $ -\Delta_g \n + \Bd \n$. Viceversa, if $ -\Delta_g \n + \Bd \n$ is parallel to $\n$, 
then there exists a function $\lambda$ on $\Sigma$ (the Lagrange multiplier) such that 
\begin{equation*}
	 -\Delta_g \n + \Bd \n = \lambda \n,
\end{equation*}
from which it follows that (recall that $\vert \n\vert =1$)
\begin{equation*}
	\lambda = \lambda (\n,\n)_{\Rt} = (-\Delta_g \n,\n)_{\Rt} + (\Bd \n,\n)_{\Rt} = \vert D\n\vert^2 + \vert \B\n\vert^2,
\end{equation*}
where we have used 
the general identity
\begin{equation}
\label{eq:ortholan} 
	0 \stackrel{|\n|=1}{=}\Delta_g \vert \n\vert^2 = 2\left\{\vert D\n\vert ^2 + (\Delta_g \n, \n)_{\Rt}\right\},
\end{equation}	
holding for any smooth vector field $\n$ on $\Sigma$. Therefore, a smooth unitary vector field $\n \in \mathfrak T(\Sigma) $ is a solution of \eqref{eq:eul_general} if and only if it solves
\begin{equation}
\label{eq:ortho}
	(-\Delta_g \n + \Bd \n) \times \n   =0.
\end{equation}
\end{rem}

\paragraph{\textbf{Evolution of the energy \eqref{eq:napolioc}
}} 
In this paragraph, we study the $L^2$ gradient flow of the energy 
\eqref{eq:napolioc}, namely the following evolution
\begin{eqnarray}
\label{eq:gfn}
&  \partial_t \n -\Delta_g \n + \Bd \n = \vert D\n\vert^2 \n + \vert \B\n\vert^2 \n\quad \hbox{ a.e. in  } \Sigma \times (0,+\infty),\\
&  \n(0) = \n_0\quad  \hbox{ a.e. in } \Sigma.
\label{eq:initialn}
\end{eqnarray}

We make precise the definition of weak solution to \eqref{eq:gfn}.
\begin{definition}
\label{def:weak_sol}
$\n$ is a {\itshape global weak solution} to \eqref{eq:gfn} if 
\begin{align*}
	&\n\in L^\infty(0,+\infty;\Hut),\,\,\,\,\partial_t\n \in L^2(0,+\infty;\Lt),\\
	& \n \hbox{ weakly solves  \eqref{eq:gfn}, that is} 
\end{align*}
\end{definition}
\begin{equation}
\label{eq:sol_distr}
\int_\Sigma (\partial_t\n, \varphi)_{\Rt}\dvolg + \int_{\Sigma}(D\n,D\varphi)_{\Rt}\dvolg +
\int_{\Sigma} (\Bd \n - \vert D\n\vert^2 \n - \vert \B\n\vert^2 \n,\varphi)_{\Rt}\dvolg = 0,
\end{equation} 
for all $\varphi\in \Ht$.

We are going to prove the following Theorem.
\begin{theorem}
\label{Th:gfn}
Let $\Sigma$ be a two-dimensional surface satisfying \eqref{eq:main_ass}. Given 
$\n_0\in \Hut$ there exists a global weak solution to \eqref{eq:gfn} with $\n(\cdot,0)=\n_0(\cdot)$
in $\Sigma$. 
\end{theorem}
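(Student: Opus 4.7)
The plan is to construct a solution by a Ginzburg--Landau penalization scheme, following the strategy outlined by the authors in the introduction and implicitly summarized in Remark \ref{rem:decomposition}. Concretely, for $\eps>0$ I would introduce the relaxed energy
\[
W^\eps(\ub):=\frac12\int_\Sigma |D\ub|^2+|\B\ub|^2\,\dvolg+\frac{1}{4\eps^2}\int_\Sigma(|\ub|^2-1)^2\,\dvolg,
\]
defined on the \emph{linear} space $\Ht$, and consider its $L^2$--gradient flow
\[
\partial_t\n^\eps-\Delta_g\n^\eps+\Bd\n^\eps+\frac{1}{\eps^2}(|\n^\eps|^2-1)\n^\eps=0,\qquad \n^\eps(0)=\n_0.
\]
Since the Ginzburg--Landau term is a locally Lipschitz perturbation of a coercive, linear, self-adjoint elliptic operator, I would obtain existence of a unique $\n^\eps\in L^\infty(0,T;\Ht)$ with $\partial_t\n^\eps\in L^2(0,T;\Lt)$ by a standard Galerkin or semigroup argument; the only nontrivial point is that $\Bd$ is a bounded, positive semi-definite perturbation of $-\Delta_g$, which poses no issue.

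Next I would derive two crucial a priori estimates, both uniform in $\eps$. First, the energy identity obtained by testing with $\partial_t\n^\eps$ yields
\[
\int_0^T\!\!\int_\Sigma|\partial_t\n^\eps|^2\,\dvolg\,dt+W^\eps(\n^\eps(T))\le W^\eps(\n_0)=W_\kappa(\n_0),
\]
which bounds $\n^\eps$ in $L^\infty(0,T;\Ht)$, $\partial_t\n^\eps$ in $L^2(0,T;\Lt)$, and $(|\n^\eps|^2-1)/\eps$ in $L^\infty(0,T;L^2(\Sigma))$. Second, a maximum-principle computation applied to $\phi^\eps:=|\n^\eps|^2-1$ gives
\[
\tfrac12(\partial_t-\Delta)\phi^\eps=-|D\n^\eps|^2-(\Bd\n^\eps,\n^\eps)_{\R^3}-\tfrac{1}{\eps^2}\phi^\eps(\phi^\eps+1);
\]
since $\Bd$ is positive semi-definite, the right-hand side is $\le 0$ wherever $\phi^\eps\ge 0$, so $|\n^\eps|\le 1$ a.e.\ in $\Sigma\times(0,\infty)$. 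By Aubin--Lions I can then extract a subsequence with $\n^\eps\weakto\n$ weakly-$\ast$ in $L^\infty(0,T;\Ht)$, strongly in $L^2(0,T;\Lt)$ and a.e., with $\partial_t\n^\eps\weakto\partial_t\n$ weakly in $L^2$; the $L^2$ bound on $(|\n^\eps|^2-1)/\eps$ forces $|\n|=1$ a.e., so $\n\in L^\infty(0,T;\Hut)$.

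The main obstacle is that the Lagrange-multiplier term $\eps^{-2}(|\n^\eps|^2-1)\n^\eps$ may blow up in $L^1$ and that $|D\n^\eps|^2$ is not weakly continuous, so one cannot pass to the limit directly in the equation. To circumvent this I would adopt the trick suggested by Remark \ref{rem:decomposition}: take the cross product of the approximate equation with $\n^\eps$, which cancels the Ginzburg--Landau term (parallel to $\n^\eps$) and yields
\[
\partial_t\n^\eps\times\n^\eps-\Delta_g\n^\eps\times\n^\eps+\Bd\n^\eps\times\n^\eps=0.
\]
Using the identity $\Delta_g\n^\eps\times\n^\eps=\divs(D\n^\eps\times\n^\eps)$ (which holds because the ``diagonal'' term $D_i\n^\eps\times D_i\n^\eps$ vanishes), I would test against a smooth tangent field $\varphi\in\Ht$ and integrate by parts, so the nonlinear term becomes $\int\langle D\n^\eps\times\n^\eps,D\varphi\rangle$. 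Here $D\n^\eps\weakto D\n$ weakly in $L^2$ and $\n^\eps\to\n$ strongly in $L^2$ with $|\n^\eps|\le 1$, so $D\n^\eps\times\n^\eps\weakto D\n\times\n$ weakly in $L^2$; the remaining terms pass to the limit by weak-strong convergence. I obtain
\[
(\partial_t\n-\Delta_g\n+\Bd\n)\times\n=0\quad\text{in }\mathcal{D}'.
\]
Finally, since $|\n|=1$ a.e., this cross-product equation is equivalent, via Remark \ref{rem:decomposition}, to the distributional formulation \eqref{eq:sol_distr}; the scalar Lagrange multiplier is recovered as $\lambda=|D\n|^2+|\B\n|^2$, which lies in $L^1$ thanks to the $H^1$ bound on $\n$. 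A standard diagonal extraction as $T\to\infty$ yields a global weak solution in the sense of Definition \ref{def:weak_sol}.
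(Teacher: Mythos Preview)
Your strategy is exactly the one the paper uses: Ginzburg--Landau penalization, uniform energy estimate, and passage to the limit via the cross-product reformulation that kills both the penalty term and the quadratic Lagrange multiplier. Two remarks on differences and one on a loose point.

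\medskip
\textbf{Differences.} The paper constructs $\n^\eps$ by a minimizing-movements time discretization rather than Galerkin/semigroup; either works. Your maximum-principle bound $|\n^\eps|\le 1$ is correct (the computation for $\phi^\eps$ is right, since $(\Bd\n,\n)=|\B\n|^2\ge 0$) but the paper does not use it: the energy estimate alone already gives $\|\,|\n^\eps|^2-1\,\|_{L^2}\le C\eps$, which suffices to conclude $|\n|=1$ and to pass to the limit.

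\medskip
\textbf{A loose point.} After taking the cross product with $\n^\eps$ you write that you would ``test against a smooth tangent field $\varphi\in\Ht$''. But every term in $(\partial_t\n^\eps-\Delta_g\n^\eps+\Bd\n^\eps)\times\n^\eps$ is \emph{normal} to $\Sigma$ (cross product of two tangent vectors), so pairing with a tangent $\varphi$ gives zero identically. What you actually need is to test the original equation with the \emph{tangent} field $\psi\,\nb\times\n^\eps$ for scalar $\psi$; this is how the paper formulates it (Lemma~\ref{lemma:weak_equiv}). Relatedly, the identity you quote, $\Delta_g\n^\eps\times\n^\eps=\divs(D\n^\eps\times\n^\eps)$, is not well-posed as written on a curved surface: $D_j\n^\eps\times\n^\eps$ is a \emph{normal}-vector-valued $1$-form, and the intrinsic $\divs$ acts on tangent objects. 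The paper resolves this by the technical identity $D_{\ei}(\nb\times\n)=\nb\times D_{\ei}\n$ (Lemma~\ref{lemma:tecnico}), which is what makes the integration by parts
\[
\int_\Sigma g^{ij}(D_i\n,D_j(\psi\,\nb\times\n))_{\R^3}\,\dvolg
=\int_\Sigma g^{ij}(D_i\n,\nb\times\n)_{\R^3}\,\partial_j\psi\,\dvolg
\]
go through (the symmetric part $g^{ij}(D_i\n,\nb\times D_j\n)$ vanishes by antisymmetry of $\times$). Once you phrase the cross-product step this way, your weak-strong convergence argument for $D\n^\eps\times\n^\eps$ becomes the convergence of $g^{ij}(D_i\n^\eps,\nb\times\n^\eps)$ in $L^2_{\mathrm{loc}}$, and the rest of your plan (including the appeal to the equivalence in Remark~\ref{rem:decomposition}, made rigorous for weak solutions in Lemma~\ref{lemma:weak_equiv}) goes through.
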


Note that equation \eqref{eq:gfn} has some similarities with 
the heat flow for harmonic maps and it offers similar difficulties. In particular,
the treatment of the quadratic terms in the right hand side requires some
care.
Note that these terms are related to the constraint 
$\n(x)\in \mathbb{S}^2$ for a.a. $x\in \Sigma$. As it happens in the study of the heat 
flow for harmonic maps (see \cite{chen, chen_struwe}), we relax this
constraint with a Ginzburg-Landau type
approximation, i.e.,  we allow for vectors $\n$ with $\vert\n\vert \neq 1$, but we penalize deviations from unitary length. The approximating equation is then obtained as the Euler-Lagrange equation of the \emph{unconstrained} functional
\begin{equation}
\label{def:ginzburg}
	\mathcal E_\eps: \Ht \to\R,\qquad \mathcal{E}_\eps(\vb):=W_{\kappa}(\vb) + \frac{1}{4\eps^2}\int_{\Sigma}(\vert \vb\vert^2-1)^2\dvolg.
\end{equation}
Thus, we approximate the solutions to \eqref{eq:gfn}-\eqref{eq:initialn}
with solutions of ($\eps$ is a small parameter intended to go to zero)
\begin{eqnarray}
\label{eq:gfn_approx}
&  \partial_t \n^\eps -\Delta_g \n^\eps + \Bd \n^\eps + \frac{1}{\eps^2}(\vert \n^\eps\vert ^2 - 1)\n^\eps  = 0\quad \hbox{ a.e. in  } \Sigma \times (0,+\infty),\\
&  \n^\eps(0) = \n_0\quad  \hbox{ a.e. in } \Sigma.
\label{eq:initialn_approx}
\end{eqnarray}
Existence of a global solution to \eqref{eq:gfn_approx}-\eqref{eq:initialn_approx},
with all the terms in $L^2(0,\infty;\Lt)$
follows from the time discretization procedure we are going to briefly outline. 

First of all we introduce a uniform partition $\mathcal P$ of $(0,+\infty)$, i.e.
\begin{equation*}
 \mathcal P:=\left\{ 0=t_0< t_1<t_2<\ldots<t_k<\ldots\right\}, \quad
	\tau:= t_{i}-t_{i-1}, \quad \lim_{k\nearrow +\infty} t_{k} = +\infty.
\end{equation*}
Now, starting from the initial value $\n_0$, we find an approximate solution 
$N_k \approx \n^\eps(t_k)$, $k=1,\ldots,$ by solving iteratively the following problem in
 the unknown $N_k$ (for notational simplicity, we skip for a while the $\eps$ dependence)
\begin{eqnarray}
\label{pb:timediscr}
\hbox{ Find } N_k \hbox{ with } N_k(x) \in T_x\Sigma \hbox{ for a.a. } x\in \Sigma \hbox{ such that }\nonumber\\
\frac{N_k - N_{k-1}}{\tau} -\Delta_g N_{k} + \Bd N_k + \frac{1}{\eps^2}(\vert N_k\vert ^2 - 1)N_k = 0 \hbox{ for a.a. } x\in \Sigma.
\end{eqnarray}
The existence of a solution to the above problem follows by minimization. More precisely, 
given \eqref{def:ginzburg}, it 
is not difficult to show that the solution of the iterative minimization problem
\begin{equation}
\label{pb:timediscr2}
\hbox{ Given } N_{k-1}\in \Ht, \hbox{ find }N_k \in \argmin_{\vb \in \Ht }
\left\{\frac{\vert \vb- N_{k-1}\vert^2}{\tau} +  \mathcal{E}_{\eps}(\vb) \right\},
\end{equation} 
is a solution to \eqref{pb:timediscr}. Problem \eqref{pb:timediscr2} can be easily solved 
using the direct method of calculus of variations as we did in Proposition \ref{prop:W}.
Subsequently, we introduce the piecewise linear $(\hat N_\tau)$ and the piecewise
constant ($\bar N_\tau$) interpolants of the discrete values $\left\{N_k\right\}_{k\ge 1}$. Namely, 
given $\n_0,N_1,\ldots,N_k,\ldots,$ we set
\begin{align*}
	\hat N_{\tau}(0)&:=\n_0,  \qquad \hat{N}_{\tau}(t):=a_k(t) N_{k} + (1-a_k(t))N_{k-1},\\
	\bar N_\tau(0)&:=\n_0,  \qquad \bar{N}_{\tau}(t):=N_k\quad \hbox{ for } t\in ((k-1)\tau,k\tau],\ k\ge 1,
\end{align*}
where $a_k(t):=(t-(k-1)\tau)/\tau$ for $t\in ((k-1)\tau,k\tau]$, $k\ge 1$.
Note that, for almost any $(x,t)\in \Sigma\times (0,+\infty)$, we have that 
$\hat{N}_\tau\in T_x\Sigma$ and  $\bar{N}_\tau\in T_x\Sigma$, being $T_x\Sigma$
a linear space for any fixed $x\in \Sigma$. 
Hence, we can rewrite \eqref{pb:timediscr} in the form
\begin{equation}
\label{eq:discr2}
\partial_t \hat{N}_\tau -\Delta_g \bar{N}_\tau + \Bd\bar N_\tau +
 \frac{1}{\eps^2}(\vert \bar N_\tau\vert^2-1)\bar N_\tau =0 \,\,\,\hbox{ for a.a. } (x,t)\in \Sigma\times (0,+\infty).
\end{equation}
Once we have \eqref{eq:discr2}, we can obtain in a standard way some uniform 
(with respect to $\tau$) 
a priori estimates and we can pass to the limit as $\tau\searrow 0$. As a consequence,
we obtain 
a solution to \eqref{eq:gfn_approx}. Note that this procedure 
provides a map $\n^\eps$ which, besides solving \eqref{eq:gfn_approx} pointwise,
is a tangent vector field, namely for a.a. $x\in \Sigma$ there holds
 $\n^\eps(x)\in T_x\Sigma$. This
property follows from the fact that $N_k(x)\in T_x\Sigma$ and from the fact
that the convergence of the discrete solutions to $\n^\eps$ is strong enough.

The question is clearly to pass to the limit as $\eps\searrow 0$ and to recover a solution of \eqref{eq:gfn}-\eqref{eq:initialn}.
To this end, we perform some a priori estimates on the solutions to $\eqref{eq:gfn_approx}$
that are independent of $\eps$. We take the scalar product of $\mathbb{R}^3$
between the approximate equation and $\partial_t\n^\eps$ and then we integrate
over $\Sigma$. We have
\begin{equation}
\label{eq:energy_est}
	\| \partial_t \n^\eps(t)\|^2 +\frac{\d}{\d t}
	\mathcal{E}_{\eps}(\n^\eps(t))=  \| \partial_t \n^\eps(t)\|^2
	 +\frac{\d}{\d t}W_{\kappa}(\n^\eps(t)) 
		+ \frac{1}{4\eps^2}\frac{\d}{\d t}\int_{\Sigma}
		(\vert \n^\eps(t)\vert^2-1)^2\dvolg = 0.
\end{equation}  
Thus, integrating on $(0,T)$, $T>0$, and using that $\n_0\in \Hut$, we get the following estimate

	\begin{align}
		\| \partial_t\n^\eps\|^2_{L^2(0,T;\Lt)} &+  \| D \n^\eps\|^2_{L^\infty(0,T;\Lt)}
			+ \| \B\n^\eps \|^2_{L^\infty(0,T;\Lt)} \nonumber\\
			&+\sup_{t\in(0,T)}\frac{1}{4\eps^2}\int_\Sigma (\vert \n^\eps(t)\vert^2-1)^2 \dvolg \le 
			 3\mathcal E_\eps(\n_0)= 3W_\kappa(\n_0).\label{eq:energy_est2}
	\end{align}
 Now, the estimate above gives the existence of a vector field $\n\in H^1(0,T;\Lt)\cap
 L^\infty(0,T;\Ht)$ with $\n(0)=\n_{0}$
and of a not relabeled subsequence of $\eps$ such that
\begin{align}
	\displaystyle \n^\eps &\xrightarrow{\eps\searrow 0}\n & &\hbox{weakly star in }  
		L^\infty(0,T;\Ht) \hbox{ and strongly in } L^2(0,T;\Lt), \label{eq:weak_convn}\\
	\displaystyle \partial_t\n^\eps &\xrightarrow{\eps\searrow 0}\partial_t\n & &\hbox{weakly in } 
		L^2(0,T;\Lt), \label{eq:weak_convnt}\\
	\displaystyle \Bd \n^\eps  &\xrightarrow{\eps\searrow 0} \Bd\n & &\hbox{strongly in } 
		L^2(0,T;\Lt),\label{eq:conv3forma}
\end{align}
where the last convergence follows directly from the 
continuity of the shape operator with respect to the strong convergence
in $L^2$ and from 
the definition of
the operator $\Bd$. 
Moreover, from \eqref{eq:energy_est2} we have that
\[
  	\int_\Sigma (\vert \n^\eps(t) \vert^2-1)^2 \dvolg  \le 12W_\kappa(\n_0)\eps^2 \quad \forall \eps>0,\ \forall t\in (0,T),  
\]	
which implies that (up to subsequences)
\begin{equation}
\label{eq:conv_vincolo}
	\vert \n^\eps\vert^2 \xrightarrow{\eps\searrow 0} 1\quad \hbox{a.e. on  } \Sigma\times (0,T).
\end{equation}
As a consequence, we have that $\vert \n\vert =1$ a.e. in $\Sigma$ for any time interval 
$(0,T)$, and hence 
\[
\n \in L^\infty(0,+\infty;\Hut).
\]
 Moreover, integrating \eqref{eq:energy_est} 
between $0$ and $+\infty$, we have
\[
\partial_t \n\in L^2(0,+\infty;\Lt).
\]
To conclude we have to prove that $\n$ solves \eqref{eq:gfn}. 
It is important to observe that the bounds at our disposal on the sequence
$\n_\eps$ are too weak for directly pass to the limit in the singular term in equation 
\eqref{eq:gfn_approx}. 
To this end, we follow the approach devised
in \cite{chen}. 
This is based on the observation that 
a smooth solution of \eqref{eq:gfn} actually solves
\begin{equation}
\label{eq:equivalent_heatflow}
(\partial_t \n -\Delta_g \n + \B^2\n)\times \n = 0 
\end{equation}
and viceversa (see Remark \ref{rem:decomposition} for the stationary case).
This observation is of extreme importance
in the analysis of these kind of problems and it has been noticed
and used in \cite{chen} and \cite{chen_struwe}, for instance. 

In the next Lemma \ref{lemma:weak_equiv}, we prove the equivalence between \eqref{eq:sol_distr}
(namely the weak version of \eqref{eq:gfn}) and a proper weak formulation of \eqref{eq:equivalent_heatflow}. 
In particular, this new formulation will be fundamental in the limit procedure due to its divergence-like structure.
Our argument is inspired by \cite{chen}, \cite[Lemma 7.5.4]{Lin_Wang}.

\begin{lemma}
\label{lemma:weak_equiv}
A vector field $\n\in H^1(0,1;\Lt)\cap L^\infty(0,T;\Hut)$ 
solves \eqref{eq:sol_distr} if and only if it solves
\begin{align}
\label{eq:weak_equiv}
-\int_{\Sigma}(\partial_t \n\times \n, \nb)_{\Rt}\psi\,\dvolg + 
\int_{\Sigma}g^{ij}(D_i\n,\nb\times \n)_{\Rt}\partial_{j}\psi\,\dvolg 
- \int_{\Sigma}(\B^2\n\times\n,\nb)_{\Rt}\psi\,\dvolg = 0
\end{align}
for all smooth functions $\psi:\Sigma\to \mathbb{R}$.
\end{lemma}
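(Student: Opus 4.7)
The plan is to work with the \emph{Darboux mate} $\tbf:=\nb\times\n$, which belongs to $\Hut$ because $\nb$ is smooth on $\Sigma$, $\n\in\Hut$, and $|\tbf|\equiv 1$. Since $\{\n,\tbf\}$ is then a global orthonormal frame of $T\Sigma$, the algebraic key will be the triple-product identity $(A,\nb\times\n)_{\Rt}=-(A\times\n,\nb)_{\Rt}$ for any $A\in\R^3$, which converts scalar products along $\tbf$ into scalar products along $\nb$.

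For the implication \eqref{eq:sol_distr}$\Rightarrow$\eqref{eq:weak_equiv}, I would test \eqref{eq:sol_distr} against $\varphi=\psi\tbf\in\Ht$ for an arbitrary $\psi\in C^{\infty}(\Sigma)$. The $\partial_t\n$ and $\Bd\n$ contributions become $-(\partial_t\n\times\n,\nb)_{\Rt}\psi$ and $-(\Bd\n\times\n,\nb)_{\Rt}\psi$ via the triple-product identity, while the Lagrange-multiplier terms drop because $(\n,\tbf)_{\Rt}=0$. For the gradient term, the orthogonality relations $|\tbf|^{2}\equiv 1$ and $(\tbf,\n)_{\Rt}\equiv 0$ force $(D_j\tbf,\tbf)_{\Rt}=0$ and $(D_j\tbf,\n)_{\Rt}=-(\tbf,D_j\n)_{\Rt}$, whence $D_j\tbf=-(\tbf,D_j\n)_{\Rt}\,\n$ in the Darboux frame. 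Then $(D_i\n,D_j\tbf)_{\Rt}=-(\tbf,D_j\n)_{\Rt}(D_i\n,\n)_{\Rt}=0$, and the Leibniz rule yields
\[
g^{ij}(D_i\n,D_j(\psi\tbf))_{\Rt}=g^{ij}(D_i\n,\tbf)_{\Rt}\,\partial_j\psi=g^{ij}(D_i\n,\nb\times\n)_{\Rt}\,\partial_j\psi,
\]
which is exactly the middle term of \eqref{eq:weak_equiv}.

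For the converse, I would first extend \eqref{eq:weak_equiv} from $C^{\infty}(\Sigma)$ to $\psi\in H^{1}(\Sigma)$ by density, which is legitimate because $\partial_t\n,D_i\n\in L^{2}(\Sigma)$ while $\n,\nb,\Bd\n\in L^{\infty}(\Sigma)$ (the last using that $\B$ is smooth on the compact $\Sigma$ and $|\n|=1$). Given $\varphi\in\Ht$, I would then decompose $\varphi=f\n+\psi\tbf$ with $f:=(\varphi,\n)_{\Rt}$, $\psi:=(\varphi,\tbf)_{\Rt}\in H^{1}(\Sigma)$, and check \eqref{eq:sol_distr} on each summand. For $\varphi_1=f\n$: the first term vanishes by $(\partial_t\n,\n)_{\Rt}=0$; the gradient term collapses to $\int_{\Sigma}f|D\n|^{2}\,\dvolg$ using $(D_i\n,\n)_{\Rt}=0$; the remaining contribution gives $\int_{\Sigma}f\bigl((\Bd\n,\n)_{\Rt}-|D\n|^{2}-|\B\n|^{2}\bigr)\,\dvolg=-\int_{\Sigma}f|D\n|^{2}\,\dvolg$ via $(\Bd\n,\n)_{\Rt}=|\B\n|^{2}$, so the total is zero. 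For $\varphi_2=\psi\tbf$ the forward computation applies verbatim, producing the (density-extended) left-hand side of \eqref{eq:weak_equiv}, which vanishes by hypothesis. Summing the two recovers \eqref{eq:sol_distr}.

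The main subtlety is the density step in the converse — specifically the $L^{\infty}$-bound on $\Bd\n$ — together with careful sign-bookkeeping in the triple-product identity. The geometric heart of the argument, however, is the single identity $D_j\tbf=-(\tbf,D_j\n)_{\Rt}\,\n$, which annihilates the cross-term $(D_i\n,D_j\tbf)_{\Rt}$ and reduces the Dirichlet contribution to a scalar $\partial_j\psi$-derivative, precisely matching the divergence-like structure of \eqref{eq:weak_equiv}.
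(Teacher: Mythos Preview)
Your proof is correct and follows essentially the same strategy as the paper: both directions test \eqref{eq:sol_distr} against $\varphi=\psi(\nb\times\n)$, and for the converse both exploit the decomposition of a tangent test function into its $\n$- and $\tbf$-components (the paper does this via the substitution $\hat\nb=\n\times\varphi$ together with $(\n\times\varphi)\times\n=\varphi-(\n,\varphi)\n$, which is your Darboux decomposition in disguise). Your derivation of $D_j\tbf=-(\tbf,D_j\n)_{\Rt}\,\n$ directly from the orthogonality relations is a clean shortcut that replaces the paper's separate Lemma~\ref{lemma:tecnico} and its symmetry/antisymmetry argument for killing the cross term $g^{ij}(D_i\n,\nb\times D_j\n)_{\Rt}$.
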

 \begin{proof}
 
 {\underline{\noindent Step $1$}}
 \par
 Let $\n$ be a solution of \eqref{eq:sol_distr} with the
 regularity specified in the statement. Choose in the weak formulation \eqref{eq:sol_distr}
 the tangent vector field $\phi=\hat{\nb}\times \n$,
 where $\hat{\nb}:=\psi\,\nb$ and $\psi:\Sigma\to\mathbb{R}$.
 First, the quadratic term in the right hand side disappears, as 
 for any $a,b\in \Rt$
 $(a,b\times a)_{\Rt} = (b, a\times a)_{\Rt}= 0$. Thus, 
 \[
 \int_{\Sigma} (\vert D\n\vert^2 \n + \vert \B\n\vert^2 \n,\hat{\nb}\times \n)_{\Rt}\dvolg
 = \int_{\Sigma} (\vert D\n\vert^2 \n + \vert \B\n\vert^2 \n, \nb\times \n)_{\Rt}\psi\,\dvolg
 =0.
 \]
Then, 
 \begin{align}
 \int_\Sigma(\partial_t \n,\hat{\nb}\times \n)_{\Rt}\dvolg &=
  -\int_{\Sigma}(\partial_t \n\times\n,\nb)_{\Rt}\psi\,\dvolg\label{eq:weak_equiv1}\\
 \int_{\Sigma}(\B^2\n,\hat{\nb}\times \n)_{\Rt}\dvolg &= -\int_{\Sigma}(\B^2\n\times\n,\nb)_{\Rt}\psi\,\dvolg,
 \end{align}
 for all $\psi:\Sigma\to\mathbb{R}$.
Regarding the remaining term, by Lemma \ref{lemma:tecnico},
 there holds  
 \begin{equation}
 \label{eq:deriv_norm}
 D_j (\nb\times \n) = \nb\times D_j \n, \,\,\,\,\forall j=1,2,
\end{equation}
 which implies
 \[
 D_j (\hat{\nb}\times \n) = D_j(\psi\nb\times \n)= (\partial_j\psi)\nb\times\n + \psi\nb\times D_j\n.
 \] 
 Thus, we get 
 \begin{align}
 \int_{\Sigma}D\n: D\phi\,\dvolg &= \int_{\Sigma}g^{ij}(D_{i}\n,D_{j}(\hat{\nb}\times \n))_{\Rt}\dvolg \nonumber\\
 &= \int_{\Sigma}g^{ij}(D_i \n, \nb\times D_j\n)_{\Rt}\psi\,\dvolg 
 + \int_{\Sigma}g^{ij}(D_i \n,\nb\times\n)_{\Rt}\partial_{j}\psi\,\dvolg\nonumber\\
 &= \int_{\Sigma}g^{ij}(D_i \n,\nb\times\n)_{\Rt}\partial_{j}\psi\,\dvolg,\label{eq:weak_equiv3}
 \end{align}
 where the first addendum vanishes since 
 \[
 \int_{\Sigma}g^{ij}(D_i \n, \nb\times D_j\n)_{\Rt}\psi\,\dvolg
 =- \int_{\Sigma}g^{ij}(D_i \n \times D_j\n, \nb)_{\Rt}\psi\,\dvolg
 \]
and since the metric tensor $g$ is symmetric and
the cross product is skew symmetric. \\

\underline{{\noindent Step $2$}}
\par
Now, let be given a vector field $\n$ satisfying \eqref{eq:weak_equiv}  
and the regularity of the statement. 
As above, we indicate with $\hat{\nb}$ the vector field
$\hat{\nb} := \psi \nb$. Thus, $\n$ verifies
\begin{eqnarray*}
\int_\Sigma(\partial_t \n,\hat{\nb}\times \n)_{\Rt}\dvolg  + 
\int_{\Sigma}D\n:D(\hat{\nb}\times \n)\dvolg 
+ \int_{\Sigma}(\B^2\n,\hat{\nb}\times \n)_{\Rt}\dvolg = 0.
\end{eqnarray*}
Choosing $\hat{\nb}$ of the form $\hat{\nb}=\n\times \phi$,
 and recalling that, being $\vert \n\vert=1$,
 $(\n\times\phi)\times \n = \phi - \n(\n,\phi)$,
  we get
 \begin{align*}
 \int_\Sigma(\partial_t \n,\hat{\nb}\times \n)_{\Rt}\dvolg 
 &= \int_\Sigma(\partial_t \n,(\n\times\phi)\times \n)_{\Rt}\dvolg\nonumber\\
 &=\int_{\Sigma}(\partial_t\n, \phi)_{\Rt}\dvolg - 
 \int_{\Sigma}(\partial_t\n,\n)_{\Rt}(\phi,\n)_{\Rt}\dvolg\nonumber\\
 &\stackrel{\vert \n\vert = 1}{=}\int_{\Sigma}(\partial_t\n, \phi)_{\Rt}\dvolg.
 \end{align*}
 Analogously, 
 \begin{align*}
  \int_{\Sigma}(\B^2\n,\hat{\nb}\times \n)_{\Rt}\dvolg = 
  \int_{\Sigma}(\B^2\n, \phi)_{\Rt}\dvolg
  - \int_{\Sigma}(\B^2\n,\n)_{\Rt}(\n,\phi)_{\Rt}\dvolg\nonumber\\
  = \int_{\Sigma}(\B^2\n, \phi)_{\Rt}\dvolg
  -\int_{\Sigma}\vert\B\n\vert^2(\n,\phi)_{\Rt}\dvolg.
 \end{align*}
 Finally,  
 \begin{align*}
 \int_{\Sigma}D\n:D(\hat{\nb}\times \n)\dvolg &= \int_{\Sigma}D\n:D((\n\times\phi)\times\n)\dvolg\nonumber\\
 & = \int_{\Sigma}D\n:D(\phi - \n(\n,\phi)_{\Rt})\dvolg\\
& = \int_{\Sigma}D\n:D\phi\,\dvolg - \int_{\Sigma}\vert D\n\vert^2(\n,\phi)_{\Rt}\dvolg. 
 \end{align*}
Collecting the above computations we get that $\n$ is a solution
of \eqref{eq:sol_distr}.
 \end{proof}
We are now in the position to conclude the proof
of Theorem \ref{Th:gfn}.
We recall that for the moment, we have proven that, 
up to a subsequence, 
\[
\n_\eps \xrightarrow{\eps \searrow 0} \n \,\,\,\hbox{ weakly in } H^1(0,T;\Lt) \hbox{ and weakly star in } L^\infty(0,T;\Ht),
\] 
where $\n$ is such that $\vert \n\vert =1$. Moreover, 
\[
\n_\eps \xrightarrow{\eps\searrow 0 }\n \,\,\,\hbox{ strongly in } L^2(0,T;\Lt).
\]
Now, test \eqref{eq:gfn_approx} with $\phi = \hat{\nb}\times \n_\eps$, where as before
$\hat{\nb}:=\psi\nb$ with $\psi:\Sigma\to\mathbb{R}$ smooth. 
The above computations give 
\begin{align}
\label{eq:weak_equiv_eps}
-\int_{\Sigma}(\partial_t \n_\eps\times \n_\eps, \nb)_{\Rt}\psi\,\dvolg + 
\int_{\Sigma}g^{ij}(D_i\n_\eps,\nb\times \n_\eps)_{\Rt}\partial_{j}\psi\,\dvolg 
- \int_{\Sigma}(\B^2\n_\eps\times\n_\eps,\nb)_{\Rt}\psi\,\dvolg = 0,
\end{align}
where the penalization term has disappeared again thanks to 
 $(a,b\times a)_{\Rt} = (b, a\times a)_{\Rt}= 0$, for $a,b\in \Rt$. 
Since \eqref{eq:weak_equiv_eps} is in divergence form, we can easily
pass to the limit as $\eps\searrow 0$ using the above proved 
convergences. 
Note indeed that all the terms easily pass to the limit 
as they are products of weakly and strongly convergent 
sequences in $L^2$. Consequently, we obtain that the limit $\n$
verifies \eqref{eq:weak_equiv} and thus solves \eqref{eq:sol_distr} thanks
to Lemma \ref{lemma:weak_equiv}.

In the next Lemma, we prove the formula 
\eqref{eq:deriv_norm}. 
\begin{lemma}
\label{lemma:tecnico}
Let us given $\n\in T_p\Sigma$. Let $\nb$ be the outer normal 
vector at the point $p$. Then, for $i=1,2$, there holds
\begin{equation}
\label{eq:tecnico}
D_{\ei}(\nb\times \n) = \nb\times D_{\ei}\n.
\end{equation}

\end{lemma}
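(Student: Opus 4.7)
The plan is to compute the ambient derivative of $\nb\times\n$ in $\R^3$, split it into tangential and normal parts, and observe that the only surviving tangential piece is exactly $\nb\times D_{\ei}\n$.

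First, extend $\n$ smoothly to a tangent vector field in a neighborhood of $p$ on $\Sigma$. By definition of the Levi-Civita connection as the tangential projection of the flat connection $\nabla$ on $\R^3$, we have $D_{\ei}(\nb\times\n) = P\bigl(\nabla_{\ei}(\nb\times\n)\bigr)$, where $P = \mathrm{Id}-\nb\otimes\nb$ is the orthogonal projection onto $T_p\Sigma$. Applying the standard product rule for the cross product,
\[
\nabla_{\ei}(\nb\times\n) = (\nabla_{\ei}\nb)\times\n + \nb\times(\nabla_{\ei}\n).
\]

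Next, I would rewrite each term using the two structural identities at our disposal. The shape operator gives $\nabla_{\ei}\nb = -\B\ei$, which is itself tangent to $\Sigma$; hence $\nabla_{\ei}\nb\times\n = -\B\ei\times\n$ is the cross product of two tangent vectors and therefore lies in the \emph{normal} direction, i.e.\ it is parallel to $\nb$. For the second term, the Gauss formula \eqref{eq:gauss} yields $\nabla_{\ei}\n = D_{\ei}\n + h(\n,\ei)\nb$, so that
\[
\nb\times\nabla_{\ei}\n = \nb\times D_{\ei}\n + h(\n,\ei)\,\nb\times\nb = \nb\times D_{\ei}\n,
\]
and this vector is tangent to $\Sigma$ since $\nb$ is normal and $D_{\ei}\n$ is tangent.

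Applying $P$ kills the purely normal piece $-\B\ei\times\n$ and preserves the tangent piece, giving $D_{\ei}(\nb\times\n) = \nb\times D_{\ei}\n$, as desired. There is no real analytic obstacle here; the only point requiring care is keeping track of which cross products produce tangent and which produce normal vectors, for which the Gauss and Weingarten relations are exactly the right tools.
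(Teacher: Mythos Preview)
Your proof is correct and follows essentially the same route as the paper's: both compute the ambient derivative $\nabla_{\ei}(\nb\times\n)$ via the product rule, identify $(\nabla_{\ei}\nb)\times\n=-\B\ei\times\n$ as normal (hence killed by the tangential projection), and use the Gauss formula on $\nabla_{\ei}\n$ to reduce $\nb\times\nabla_{\ei}\n$ to $\nb\times D_{\ei}\n$. The only cosmetic difference is that you write the covariant derivative as $P(\nabla_{\ei}\cdot)$ while the paper writes it as $\nabla_{\ei}\cdot - h(\ei,\cdot)\nb$, which are of course the same thing.
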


\begin{proof}
Let $\left\{\eu,\ev\right\}$ be a basis for the tangent space 
at the point $p$. 
For $i=1,2$, the Gauss formula \eqref{eq:gauss} gives
\[
D_{\ei}(\nb\times \n) = \nabla_i(\nb\times \n) - h(\ei, \nb\times \n)\nb.
\]
Now, 
\[
h(\ei,\nb\times \n) = -(\nabla_i\nb,\nb\times \n)_{\Rt} = (\nabla_i\nu\times \n, \nb)_{\Rt},
\]
by definition. Then,
expanding $\nabla_i(\nb\times \n) = \nabla_i \nb\times \n + \nb\times \nabla_i \n$,
we get 
\[
D_{\ei}(\nb\times \n) = \nb\times \nabla_i\n + \nabla_i\nb\times \n -(\nabla_i\nu\times \n, \nb)_{\Rt}\nb
= \nb\times \nabla_i\n,
\]
being the last two terms normal vectors. 
We conclude if we prove that $\nb\times \nabla_i\n = \nb\times D_{\ei}\n$. 
This follows from the Gauss formula \eqref{eq:gauss} since
\[
\nb\times\nabla_i \n = \nb\times D_{\ei}\n + h(\ei,\n)\nb\times \nb= \nb\times D_{\ei}\n.
\]
\end{proof}

\begin{remark}
It is important to note that the above computations hold
for an hypersurface of dimension $n$, upon replacing the cross 
product $\times$ with the wedge product $\wedge$. Thus, Theorem 
\ref{Th:gfn} remains valid more generally on a hypersurface $N$ of dimension $n$ 
provided the corresponding space $H^1_{tan}(N;\mathbb{S}^2)$ is well defined (see \cite[Proposition 1.1]{CSV13}). 
\end{remark}


\newcommand{\CM}{C^1_{\text{tan}}(\Sigma;\mathbb S^2)}

\section{Representation of vector fields $\n$ via local deviation $\alpha$}   
\label{sec:alpha}

A reference textbook to the material covered in this Section is \cite{Lee2013}. Let $\Sigma \subset \R^3$ be a regular orientable compact surface (with or without boundary) with a maximal parametrization $(V_j,x_j)$, $x_j:V_j\subseteq \R^2\to \Sigma$. A set $U\subset \Sigma$ is said to be open in $\Sigma$ if $x_j^{-1}(U\cap x_j(V_j))$ is open in $\R^2$ for all $j$. For any open set $U \subseteq \Sigma$, let $\{\eu,\ev\}$ be a smooth local orthonormal frame, i.e. a pair of smooth sections of the tangent bundle $T\Sigma$ such that  $\{\eu(p),\ev(p)\}$ is an orthonormal basis for $T_p\Sigma$, for all $p\in U$.  

\subsubsection*{Winding number.} Define the 1-form 
\begin{equation}
\label{eq:windform}
	\omega:= \frac{ x\, \d y- y\,\d x }{x^2+y^2}\qquad \text{on }\R^2\setminus \{\mathbf 0\}.
\end{equation}
Given an oriented curve $\gamma$ in $\R^2\setminus \{\mathbf 0\}$, the line integral $\int_\gamma \omega$ measures the winding of $\gamma$ around $\mathbf 0$ in counterclockwise direction. The winding number of a closed curve $\gamma$ with respect to $\mathbf 0$  is the integer $\mathscr W(\gamma):=(2\pi)^{-1}\int_\gamma \omega$.  
Given a regular parametrization $\gamma:[0,1]\to \R^2\setminus \{\mathbf 0\}$ with components $\gamma(t)=(\gamma^1(t),\gamma^2(t))$,  its winding number can be computed via the pullback of $\omega$: 
\[	
	\mathscr W(\gamma)	= \frac{1}{2\pi}\int_{[0,1]}\gamma^* \omega 
		= \frac{1}{2\pi}\int_0^1 \frac{\gamma^1(t)\dot \gamma^2(t) - \gamma^2(t)\dot \gamma^1(t)}{|\gamma(t)|^2}\d t.
\] 
The relation between degree and winding number for a regular simple closed curve $\gamma:[0,1]\to \R^2\setminus \{\mathbf 0\}$ is then $\mathscr W(\gamma) = \deg(\gamma/|\gamma|,\mathbb S^1,\mathbb S^1)$  (after identifying the endpoints $\{0\}$ and $\{1\}$ in $[0,1]$).

Let now $\mathbf v: U\subseteq \R^2 \to \R^2$ be a smooth vector field $\mathbf v = (v^1,v^2)$. If $\gamma\cap \mathbf v^{-1}(\mathbf 0) = \emptyset$, it is natural to measure the winding of $\mathbf v$ along $\gamma$ by
\begin{equation}
\label{def:wind}
	\mathscr W_\gamma(\mathbf v):=\int_{\gamma} \mathbf v^*\omega.
\end{equation}

\subsubsection*{Winding of fields on surfaces} Assume that $T\Sigma$ admits a global orthonormal frame $\{\mathbf e_1,\mathbf e_2\}$. The mapping
\[
	\iota:T\Sigma \to \Sigma \times \R^2,\quad (p,\mathbf v)\mapsto (p, (v^1,v^2))\qquad  \forall p\in \Sigma, \forall \mathbf v=v^i\ei \in T_p \Sigma
\]	
defines a smooth diffeomorphism (see, e.g., \cite[Corollary 10.20]{Lee2013}). We can then extend the winding $\mathscr W_\gamma(\mathbf v)$ to sections of the tangent bundle $T\Sigma$, i.e., to smooth vector fields  $\Sigma \ni p\mapsto \mathbf v(p)\in T_p\Sigma$. 
For every smooth curve $\gamma$ on $\Sigma$ such that $\gamma\cap (\iota \circ\mathbf v)^{-1}(\mathbf 0) = \emptyset$, we define the winding of $\mathbf v$ along $\gamma$, with respect to $\{\mathbf e_1,\mathbf e_2\}$, by
\begin{equation}
\label{def:winde}
	\mathscr W_\gamma(\mathbf v):=\int_{\gamma}  (\iota \circ \mathbf v)^*\omega,
\end{equation}
where $\omega$ is the angle 1-form defined in \eqref{eq:windform}.

Moreover, the existence of a global orthonormal frame allows, via the mapping $\iota$, to associate to $\n \in \Hut$ an $\S^1$-valued field. A classical result by Bethuel and Zheng ensures then the existence of a local lifting $\alpha$. More precisely: 
\begin{proposition}[Lemma 4, \cite{BetZhe88}]
\label{prop:alp}
Let $\Sigma$ be a smooth surface embedded in $\R^3$ and $Q:=[0,2\pi]\times [0,2\pi] \subset \R^2$. Assume that there exist: 
\begin{itemize}
	\item[\_] a smooth global orthonormal frame $\{\eu,\ev\}$ on $\Sigma$,
	\item[\_] a smooth covering map $\pi_\Sigma: Q \to \Sigma$,
	\item[\_] a vector field $\n \in \Hut$.
\end{itemize}	
  Then there exists $\alpha \in H^1(Q)$ such that 
\begin{equation}
\label{eq:nalpha2}
	\n\circ \pi_\Sigma= \cos (\alpha) (\eu\circ \pi_\Sigma) +\sin (\alpha) (\ev\circ \pi_\Sigma) \qquad \text{a.e. in }Q.
\end{equation}
Moreover, if $\alpha_1,\alpha_2 \in H^1(Q)$ are two maps that by \eqref{eq:nalpha2} define the same vector field $\n \in \Hut$, then there exists $n\in \Z$ such that $\alpha_1 = \alpha_2 +2\pi n$ a.e. on $Q$. 
\end{proposition}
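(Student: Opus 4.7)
The plan is to reduce the statement to the classical $H^1$-lifting theorem of Bethuel--Zheng on the simply connected square $Q$, and then handle uniqueness by a standard argument on the difference of two liftings.

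First, I would use the global orthonormal frame to trivialize $\n$. Define
\[
	u(\xb) := \big(\n(\pi_\Sigma(\xb)),\eu(\pi_\Sigma(\xb))\big)_{\R^3},\qquad v(\xb):=\big(\n(\pi_\Sigma(\xb)),\ev(\pi_\Sigma(\xb))\big)_{\R^3},
\]
for $\xb \in Q$. Since $\pi_\Sigma$ is a smooth covering and $\eu,\ev$ are smooth sections, the pair $(u,v)$ inherits the $H^1$ regularity of $\n$; indeed, $H^1$-regularity is preserved by composition with a smooth local diffeomorphism and by taking inner products with smooth vector fields, so $(u,v)\in H^1(Q;\R^2)$. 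Because $\{\eu,\ev\}$ is orthonormal and $|\n|=1$ a.e., we have $u^2+v^2=1$ a.e. in $Q$, i.e.\ $(u,v)$ is an $H^1$ map into $\mathbb S^1$.

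The next and central step is the lifting. The square $Q$ is simply connected, so by the Bethuel--Zheng lifting lemma (Lemma 4 of \cite{BetZhe88}) there exists $\alpha \in H^1(Q)$ with $u=\cos\alpha$ and $v=\sin\alpha$ a.e.\ in $Q$. Expanding the tangent vector $\n\circ \pi_\Sigma$ in the orthonormal basis $\{\eu\circ\pi_\Sigma,\ev\circ\pi_\Sigma\}$ as $u(\eu\circ\pi_\Sigma)+v(\ev\circ\pi_\Sigma)$ then yields exactly \eqref{eq:nalpha2}. This is the step that I expect to be the main substantive point of the proof, and it is handled by directly quoting the Bethuel--Zheng result; reproving it would require the nontrivial observation that the mollified maps of $(u,v)/|(u,v)_\delta|$ admit smooth liftings that converge in $H^1$.

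For uniqueness, suppose $\alpha_1,\alpha_2\in H^1(Q)$ both satisfy \eqref{eq:nalpha2}. Subtracting and using the linear independence of $\eu,\ev$, one gets $\cos\alpha_1=\cos\alpha_2$ and $\sin\alpha_1=\sin\alpha_2$ a.e., hence $\alpha_1-\alpha_2 \in 2\pi\Z$ a.e.\ in $Q$. The function $f:=\alpha_1-\alpha_2 \in H^1(Q)$ takes values in the countable discrete set $2\pi\Z$; writing $E_k:=\{f=2\pi k\}$ and applying the chain rule (for instance, $\nabla f = 0$ a.e.\ on each $E_k$ by Stampacchia's lemma, since $f$ is locally constant at its values), one concludes $\nabla f = 0$ a.e.\ in $Q$. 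Since $Q$ is connected, $f$ is a.e.\ equal to a constant $2\pi n$ with $n\in \Z$, which is the claimed uniqueness.
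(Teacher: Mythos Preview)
Your proposal is correct. The paper does not supply its own proof of this proposition: it is stated as a direct quotation of Lemma~4 from Bethuel--Zheng \cite{BetZhe88}, with only the preceding sentence ``the existence of a global orthonormal frame allows, via the mapping $\iota$, to associate to $\n \in \Hut$ an $\S^1$-valued field'' indicating the reduction. Your argument spells out exactly that reduction (trivialize via the frame to get an $H^1(Q;\mathbb S^1)$ map, then invoke the cited lifting lemma on the simply connected square) together with the standard uniqueness step, so it matches the paper's intended route.
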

This proposition applies, in particular, to the cases where $\Sigma$ is diffeomorphic to a torus or to a disc. 
It is a common notation, which we adopt in the following sections, to drop ``$\circ \pi_\Sigma$". It will be clear from the context whether $\alpha,\n,\ei$ are defined on $\Sigma$ or parametrized on $Q$.

We notice that, if $\Sigma$ is not simply connected, it may not be possible to define $\alpha$ on the whole surface $\Sigma$. For example, given the standard parametrization of the torus $X:[0,2\pi]\times [0,2\pi]\to \bbT$ \eqref{eq:paramtorus}, $\n(\theta,\phi):=\cos(\theta)\eu(\theta,\phi)+\sin(\theta)\ev(\theta,\phi)$ defines a smooth vector field on $]0,2\pi[\times]0,2\pi[$. The only possible $\alpha$ is clearly $\alpha(\theta,\phi)=\theta +2h\pi$, for $h\in \Z$, which cannot be continuously extended to $[0,2\pi]\times[0,2\pi]$ since $2h\pi =\lim_{t\to0^+}\alpha(t,\phi)\neq \lim_{t\to 2\pi^-}\alpha(t,\phi)= 2\pi(1+h)$.


\subsection{Formulas for the deviation $\alpha$}
\label{ssec:formulas}
In this subsection, we perform the formal computations which lead to the representation of $\nabla_s \n$, in terms of $\alpha$. 

First of all, we introduce the \emph{spin connection} $\mathbb{A}$, which, 
for a two-dimensional manifold $\Sigma$ embedded in $\R^3$, can be expressed using
the  1-form $\omega$ defined as
\begin{equation}
\label{eq:one_form_om} 
\omega(\vb) = (\eu, D_{\vb} \ev)_{\R^3} \quad \forall \vb \in T_p \Sigma,
\end{equation}
where $\left\{\eu, \ev\right\}$ is a local orthonormal frame for $T\Sigma$. 
Deriving the relation $(\ei,\ej)_{\R^3}=\delta_{ij}$ one obtains
\begin{equation}
\label{eq:on}
	0=\partial_k (\ei,\ej)_{\R^3}=(D_k \ei,\ej)_{\R^3}+ (\ei,D_k\ej)_{\R^3}, \,\,\,\hbox{ for }\,\,k=1,2,
\end{equation}
which implies 
that $\omega(\vb) = - (\ev, D_{\vb}\eu)_{\R^3}$ for any $\vb$ tangent
and that $(\eu,D_i \eu)_{\R^3} = (\ev, D_i \ev)_{\R^3} = 0$ for $i=1,2$.
Consequently, we have that 
\begin{align}
& D_i \eu = -\omega(\ei) \ev,\label{eq:spin1}\\
& D_i \ev = \omega(\ei)\eu\label{eq:spin2}.
\end{align} 
The spin connection $\mathbb{A}$ is the tangent vector field 
$\mathbb{A} := \omega^{\sharp}$, that is 
 $\mathbb{A}^{i} = g^{ij}\omega_{j}$.
 In what follows we will unambiguously refer to $\mathbb{A}$ and 
 to $\omega$ as the spin connection. Let  
 $\kappa_1 ,\kappa_2$ be the geodesic curvatures of the flux lines of
 $\eu$, $\ev$, respectively. By the definition of geodesic curvature, \eqref{eq:one_form_om} and \eqref{eq:on}, it is immediate to see that
 \begin{equation}
\label{eq:defspin}
 	\mathbb A =-\kappa_1 \eu -\kappa_2\ev.
\end{equation}

Now we show how the spin connection $\mathbb{A}$ and
its related 1-form $\omega$
change when we change the orthonormal frame. In particular, 
it will be important to be able to choose a local orthonormal frame
with divergence-free spin connection (see \cite[Lemma 3.2.9]{Lin_Wang} for a similar result).
Thus, let 
$\left\{\fu, \fd\right\}$ be another smooth local orthonormal frame
centered $U$. We denote with $\beta$ the angle 
that $\fu$ forms with $\eu$. Thus, have 
\begin{eqnarray*}
\fu &=& \cos\beta \eu + \sin\beta \ev,\\
\fd &=& -\sin\beta\eu + \cos\beta \ev.
\end{eqnarray*}
\begin{lemma}
\label{lemma:newomega}
Let $\omega'$ denote the spin connection in the frame $\left\{\fu, \fd\right\}$, namely
the 1-form $\omega'(\vb) = (\fu, D_{\vb}\fd)_{\R^3}$ for $\vb$ tangent. Then there
holds
\begin{equation}
\label{eq:newomega}
\omega'(\vb) = \omega(\vb) - \d\beta(\vb).
\end{equation}
Moreover, if $\mathbb{A}' = (\omega')^{\sharp}$, we have
\begin{equation}
\label{eq:divomega}
	\textup{div}_s \mathbb{A}' = \textup{div}_s\mathbb{A} -\Delta \beta. 
\end{equation}
\end{lemma}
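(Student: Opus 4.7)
The plan is to prove \eqref{eq:newomega} by a direct computation using the Leibniz rule for the covariant derivative together with the structure equations \eqref{eq:spin1}--\eqref{eq:spin2}, and then to obtain \eqref{eq:divomega} as an immediate consequence by raising indices and applying $\divs$.

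For the first identity, I would expand $D_\vb \fd$ by differentiating $\fd = -\sin\beta\,\eu + \cos\beta\,\ev$ with the Leibniz rule, obtaining
\begin{equation*}
D_\vb \fd = -\cos\beta\,\d\beta(\vb)\,\eu - \sin\beta\,D_\vb \eu - \sin\beta\,\d\beta(\vb)\,\ev + \cos\beta\,D_\vb \ev.
\end{equation*}
Substituting \eqref{eq:spin1}--\eqref{eq:spin2} so that $D_\vb \eu = -\omega(\vb)\ev$ and $D_\vb \ev = \omega(\vb)\eu$, the coefficient of $\eu$ becomes $\cos\beta\,(\omega(\vb)-\d\beta(\vb))$ and the coefficient of $\ev$ becomes $\sin\beta\,(\omega(\vb)-\d\beta(\vb))$. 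Taking the Euclidean inner product with $\fu = \cos\beta\,\eu + \sin\beta\,\ev$ and using orthonormality of $\{\eu,\ev\}$, the Pythagorean identity $\cos^2\beta + \sin^2\beta = 1$ collapses everything to
\begin{equation*}
\omega'(\vb) = (\fu, D_\vb \fd)_{\R^3} = \omega(\vb) - \d\beta(\vb),
\end{equation*}
which is \eqref{eq:newomega}.

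For the second identity, I would raise indices via $^\sharp$ on both sides of \eqref{eq:newomega}. Since the sharp is linear and $(\d\beta)^\sharp = \nabla_s \beta$ by the definition of the surface gradient of a scalar function given in Section~\ref{sec:prelimin}, I get $\mathbb{A}' = \mathbb{A} - \nabla_s \beta$. Applying the surface divergence $\divs$ to both sides and using that $\Delta = \divs \circ \nabla_s$ is the Laplace--Beltrami operator yields $\divs \mathbb{A}' = \divs \mathbb{A} - \Delta \beta$, which is \eqref{eq:divomega}.

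There is no real obstacle here: the content is purely algebraic once the Leibniz rule and the structure equations \eqref{eq:spin1}--\eqref{eq:spin2} are available, and the second identity follows by applying sharp and $\divs$ to the first. The only point worth checking carefully is the sign convention in \eqref{eq:spin1}--\eqref{eq:spin2}, which must be consistent with the definition $\omega(\vb) = (\eu, D_\vb \ev)_{\R^3}$; the computation above uses exactly that convention.
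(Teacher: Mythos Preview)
Your proof is correct and follows essentially the same approach as the paper: a direct computation of $(\fu, D_\vb \fd)_{\R^3}$ using the Leibniz rule and orthonormality, followed by raising indices and applying $\divs$ for the second identity. The only organisational difference is that you first compute $D_\vb \fd$ via the structure equations \eqref{eq:spin1}--\eqref{eq:spin2} and then pair with $\fu$, whereas the paper expands the inner product directly and invokes \eqref{eq:on}; the content is the same.
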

\begin{proof}
We have
\begin{equation*}
\begin{split} 
	\omega'\left(\delj\right) = ( \fu, D_j\fd)_{\R^3} 
		&= (\cos\beta \eu + \sin\beta \ev, D_j(-\sin\beta\eu + \cos\beta \ev))_{\R^3}\\
		& = \cos\beta ( \eu, D_j(-\sin\beta\eu))_{\R^3} + \sin\beta( \ev, D_j(\cos\beta \ev))_{\R^3}\\
		& \quad+ \cos\beta( \eu, D_j(\cos \beta\ev))_{\R^3} + \sin\beta ( \ev, D_j(-\sin\beta \eu))_{\R^3} \\
		& \stackrel{\eqref{eq:on}}{=}  -\cos \beta \partial_j(\sin \beta) -\sin^2\!\beta\, ( \ev, D_j \eu)_{\R^3} 
			+ \cos^2\!\beta\, (\eu,D_j \ev)_{\R^3} + \sin\beta \partial_j(\cos \beta)\\
		& \stackrel{\eqref{eq:one_form_om}}{=} \omega_j -\partial_j \beta = \omega\left(\delj\right) - \d \beta \left(\delj\right).
\end{split}
\end{equation*}
By linearity of $\omega',\omega,$ and $\d \beta$, we conclude \eqref{eq:newomega}. Now, to prove \eqref{eq:divomega},
we notice that \eqref{eq:newomega} corresponds, after the  $^\sharp$ isomorphism, to
\begin{equation*}
\mathbb{A}' = \mathbb{A} - \nabla_s\beta,
\end{equation*}
thus \eqref{eq:divomega} follows. 
\end{proof}

 We are going to prove the following
\begin{lemma}
\label{lemma:gradalp}
Let $U\subset \Sigma$ be open and simply connected and let $\n\in H^1_{tan}(U;\mathbb{S}^2).$
Then, for a.a. $x\in U$,
\begin{align}
\label{eq:cov_alp}
D \n = (\omega-\d \alpha)\tbf,
\end{align}
where
\[
\tbf = \sin \alpha \eu - \cos\alpha \ev.
\]
Consequently, we have
\begin{align}
	\vert D\n \vert^2 =& \vert \nabla_s \alpha - \mathbb{A}\vert^2, \label{eq:normcovalp}\\
	\vert \nabla_s \n\vert^2 = &\vert \nabla_s \alpha - \mathbb{A}\vert^2 
		+ \vert \B\eu\vert^2 \cos^2\alpha + \vert\B\ev\vert^2\sin^2\alpha
		+2(\B\eu,\B\ev)_{\mathbb{R}^3}\sin\alpha\cos\alpha.
\label{eq:normsgradalp}
\end{align} 
\end{lemma}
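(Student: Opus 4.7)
The plan is a direct computation starting from the representation
\[
\n = \cos(\alpha)\,\eu + \sin(\alpha)\,\ev
\]
given by Proposition \ref{prop:alp} on the simply connected patch $U$. I would first establish the identity on smooth fields and then invoke a density/chain-rule argument to extend it to $H^1_{tan}(U;\mathbb S^2)$; since $\alpha \in H^1$ and $\cos,\sin$ are Lipschitz, the componentwise differentiation below is justified by the standard Sobolev chain rule applied in the trivialization $\iota$.

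First I would apply $D_i$ to the decomposition and use the Leibniz rule together with the spin connection formulas \eqref{eq:spin1}--\eqref{eq:spin2}:
\begin{align*}
D_i \n &= -\sin(\alpha)\,\partial_i\alpha\,\eu + \cos(\alpha)\,D_i\eu + \cos(\alpha)\,\partial_i\alpha\,\ev + \sin(\alpha)\,D_i\ev \\
&= -\sin(\alpha)\,\partial_i\alpha\,\eu - \cos(\alpha)\,\omega(\ei)\,\ev + \cos(\alpha)\,\partial_i\alpha\,\ev + \sin(\alpha)\,\omega(\ei)\,\eu \\
&= \bigl(\omega(\ei) - \partial_i\alpha\bigr)\bigl(\sin(\alpha)\,\eu - \cos(\alpha)\,\ev\bigr) = \bigl(\omega - \d\alpha\bigr)(\ei)\,\tbf,
\end{align*}
which is \eqref{eq:cov_alp}. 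Note that $\tbf$ is tangent, of unit norm, and orthogonal to $\n$, so it plays the role of the rotated companion vector.

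Next I would take the squared norm. Because $|\tbf|=1$ and the coefficient $(\omega-\d\alpha)(\ei)$ is a scalar, we get
\[
|D\n|^2 = g^{ij}\bigl(\omega - \d\alpha\bigr)(\ei)\bigl(\omega - \d\alpha\bigr)(\ej) = |\omega - \d\alpha|_{g^{-1}}^2,
\]
and the sharp isomorphism $\omega \mapsto \omega^\sharp = \mathbb A$ converts this into \eqref{eq:normcovalp}. For the surface gradient, I would simply invoke the orthogonal splitting \eqref{eq:normgrad}, which yields $|\nabla_s \n|^2 = |D\n|^2 + |\B\n|^2$, and then expand
\[
\B\n = \cos(\alpha)\,\B\eu + \sin(\alpha)\,\B\ev,
\]
whose squared norm produces the three terms $|\B\eu|^2\cos^2\alpha + |\B\ev|^2\sin^2\alpha + 2(\B\eu,\B\ev)_{\R^3}\sin\alpha\cos\alpha$, completing \eqref{eq:normsgradalp}.

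The only genuinely delicate point is the differentiation step for $H^1$-regular $\n$. Since $\{\eu,\ev\}$ is smooth on $U$ and $\alpha \in H^1(U)$ by Proposition \ref{prop:alp}, the Sobolev chain rule applied to $\cos(\alpha)$ and $\sin(\alpha)$ and the Leibniz formula for products of a smooth field and an $H^1$ function give the identity pointwise almost everywhere; no boundary terms arise because the computation is algebraic and local. Everything else is a routine trigonometric simplification, so I do not anticipate further obstacles.
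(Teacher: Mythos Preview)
Your proposal is correct and follows essentially the same approach as the paper: apply the Leibniz rule to $\n=\cos\alpha\,\eu+\sin\alpha\,\ev$, substitute the spin-connection identities \eqref{eq:spin1}--\eqref{eq:spin2}, read off \eqref{eq:cov_alp}, then obtain \eqref{eq:normcovalp} by taking norms and \eqref{eq:normsgradalp} via the orthogonal splitting \eqref{eq:normgrad} together with the expansion of $\B\n$. Your explicit remark on the Sobolev chain rule for $\alpha\in H^1$ is a welcome addition that the paper leaves implicit.
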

\begin{proof}
For a.a $x\in U$, let $\left\{\eu,\ev\right\}$ be a smooth local orthonormal
frame for $T_x\Sigma$, 
Then $\n\in H^1_{tan}(U;\mathbb{S}^2)$
is represented as in \eqref{eq:nalpha2} with $\alpha\in H^1(U)$ being the angle
between $\n$ and $\eu$. We have that, for a.a. $x\in U$ and for $i=1,2$, 
\[
	D_{\ei} \n = (\cos\alpha) D_{\ei}\eu + (\sin\alpha) D_{\ei}\ev
	-(\d \alpha(\ei))\sin\alpha\eu + (\d \alpha(\ei))\cos\alpha\ev.
\]
Thus, using \eqref{eq:spin1} and \eqref{eq:spin2} and grouping in terms of $\d\alpha$, $\omega$ and $\tbf$
we get 
\[
D_i \n = (\omega(\ei)-\d \alpha(\ei))\tbf, 
\]
which is \eqref{eq:cov_alp}.
Then, since $\eu$ and $\ev$ are orthonormal, we have that
\begin{eqnarray*}
 	\vert D_{\eu} \n\vert^2 &=& \vert \omega(\eu) - \d\alpha(\eu)\vert^2,\\
 	\vert D_{\ev} \n\vert^2 &=&  \vert \omega(\ev) - \d\alpha(\ev)\vert^2,
\end{eqnarray*}
which implies \eqref{eq:normcovalp} by recalling again
the orthonormality of $\eu$ and $\ev$.
Once we have \eqref{eq:normcovalp}, we can easily obtain 
\eqref{eq:normsgradalp} using the orthogonal decomposition
\eqref{eq:normgrad} once we have written $\vert \B\n\vert$
in terms of $\alpha$. We have 
$$
\B \n = \B (\cos \alpha \eu + \sin\alpha \ev) = \cos\alpha \B\eu 
+ \sin\alpha\B\ev,
$$
thus, 
$$\vert\B\n\vert^2 = \cos^2\alpha\vert \B\eu\vert^2
+\sin^2\alpha \vert \B\eu\vert^2 + 2(\B\eu,\B\ev)_{\R^3}\cos\alpha\sin\alpha,
$$
which, combined with \eqref{eq:normcovalp}, gives 
\eqref{eq:normsgradalp}.
\end{proof}
The expression \eqref{eq:normsgradalp} further simplifies
if we choose, for any point $x\in \Sigma$, $\left\{\eu,\ev\right\}$ to be the 
{\itshape principal directions}
of $\Sigma$ at $x$. In particular, $\left\{\eu,\ev\right\}$ are orthonormal eigenvectors
of $\mathfrak{B}$. The relative eigenvalues $c_1$ and $c_2$ are
named {\itshape principal curvatures}
of $\Sigma$ at $x$ (\cite{DoCarmo76}). As a result, we have
\begin{align}
\label{eq:normgradalp2}
	\vert\nabla_s \n\vert^2 =&  \vert \nabla_s \alpha - \mathbb{A}\vert^2 
		+ \vert \B\eu\vert^2\cos^2\alpha + \vert\B\ev\vert^2\sin^2\alpha\nonumber\\ 
	= &\vert \nabla_s \alpha - \mathbb{A}\vert^2
		+ \frac{(c_1^{2} - c_{2}^2)}{2} \cos(2\alpha) + \frac{(c_1^{2} + c_{2}^2)}{2}.
\end{align}
An immediate consequence of the representation of the the covariant derivative of $\n$ in terms of $\alpha$ is 
the representation of its Rough Laplacian. We have the following
\begin{lemma}
\label{lem:rough_alpha}
Let $U\subset \Sigma$ be open and simply connected and let $(\eu,\ev)$ a local  be a local orthonormal frame. Let $\n$ be
 sufficiently smooth vector field with unit norm.
 Then, when 
 $\n$ is represented as 
 \[
 \n = \cos\alpha\, \eu + \sin\alpha\,\ev,
 \]
 there holds
\begin{equation}
\label{eq:rough_alpha}
\Delta_ g \n = (\Div \mathbb{A}-\Delta \alpha )\tbf - \vert \nabla_s \alpha - \mathbb{A}\vert^2\n,\quad \hbox{ in } U,
\end{equation}
where $\tbf = \sin\alpha\, \eu - \cos\alpha\,\ev$.
\end{lemma}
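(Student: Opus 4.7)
The plan is to compute $\Delta_g \n$ directly in the orthonormal frame $\{\eu,\ev\}$ using the frame-independent expression
\[
	\Delta_g \n = \sum_{i=1}^{2}\bigl(D_{\ei}D_{\ei}\n - D_{D_{\ei}\ei}\n\bigr).
\]
Set $\psi_i := \omega(\ei)-\d\alpha(\ei)$, so that $|\psi|^2 = \psi_1^2+\psi_2^2 = |\nabla_s\alpha-\mathbb{A}|^2$ by \eqref{eq:normcovalp}. The proof of Lemma~\ref{lemma:gradalp} already yields $D_{\ei}\n = \psi_i\tbf$. A parallel computation, using $\tbf = \sin\alpha\,\eu - \cos\alpha\,\ev$ together with \eqref{eq:spin1}--\eqref{eq:spin2}, shows $D_{\ei}\tbf = -\psi_i\n$; this is consistent with $\{\n,\tbf\}$ being orthonormal along $\Sigma$.

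Differentiating once more gives $D_{\ei}D_{\ei}\n = (\ei\psi_i)\tbf - \psi_i^2\n$. Summing in $i$, the $\n$-component is immediately $-|\psi|^2\n = -|\nabla_s\alpha-\mathbb{A}|^2\n$, which supplies the second term in \eqref{eq:rough_alpha}. For the correction terms I use \eqref{eq:spin1}--\eqref{eq:spin2} to get $D_{\eu}\eu = -\omega_1\ev$ and $D_{\ev}\ev = \omega_2\eu$ (writing $\omega_j := \omega(\ej)$), hence $\sum_i D_{D_{\ei}\ei}\n = (\omega_2\psi_1-\omega_1\psi_2)\tbf$. Combining, the $\tbf$-coefficient of $\Delta_g\n$ equals $\eu\psi_1+\ev\psi_2+\omega_1\psi_2-\omega_2\psi_1$.

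The last step is to recognize this $\tbf$-coefficient as $\divs\mathbb{A}-\Delta\alpha$. Using the frame formula $\divs X = \sum_i(D_{\ei}X,\ei)_{\R^3}$ for $\mathbb{A} = \omega_1\eu+\omega_2\ev$ and for $\nabla_s\alpha = (\eu\alpha)\eu+(\ev\alpha)\ev$, together with the same spin-connection relations, a short computation gives
\[
	\divs\mathbb{A} = \eu\omega_1+\ev\omega_2, \qquad \Delta\alpha = \eu(\eu\alpha)+\ev(\ev\alpha) + \omega_1\,\ev\alpha-\omega_2\,\eu\alpha.
\]
Subtracting and using the algebraic identity $\omega_1\psi_2-\omega_2\psi_1 = \omega_2\,\eu\alpha-\omega_1\,\ev\alpha$ converts $\divs\mathbb{A}-\Delta\alpha$ into exactly $\eu\psi_1+\ev\psi_2+\omega_1\psi_2-\omega_2\psi_1$, which proves \eqref{eq:rough_alpha}.

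The only technical obstacle is bookkeeping: since $\{\eu,\ev\}$ is orthonormal but not a coordinate frame, non-holonomic correction terms proportional to $\omega_1$ and $\omega_2$ appear simultaneously in $D_{D_{\ei}\ei}\n$, in $\divs\mathbb{A}$, and in $\Delta\alpha$. The identity \eqref{eq:rough_alpha} hinges on their mutual cancellation, which the direct computation above exhibits explicitly.
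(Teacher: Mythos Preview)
Your proof is correct and follows essentially the same approach as the paper: both arguments differentiate the identity $D_{\ei}\n=(\omega(\ei)-\d\alpha(\ei))\tbf$ once more and then take the trace. The paper's version is simply more compact---it writes the second covariant derivative tensorially as $D^2\n=(D\omega-\d^2\alpha)\tbf-(\omega-\d\alpha)\otimes(\omega-\d\alpha)\,\n$ and traces directly, so that the identification $\hbox{tr}\,D\omega=\divs\mathbb{A}$ and $\hbox{tr}\,\d^2\alpha=\Delta\alpha$ absorbs all of the non-holonomic bookkeeping you spell out by hand.
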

\begin{proof}
We compute the second covariant derivative of $\n$ using \eqref{eq:cov_alp}. We have
\begin{equation}
\label{eq:sec_cov_alp}
D^2 \n = (D\omega -\d^2\alpha )\tbf + (\omega-\d\alpha)\otimes D\tbf.
\end{equation}
Now, since $D\tbf = -(\omega - \d \alpha)\n$, we conclude that 
\[
	D^2 \n = (D\omega- \d^2\alpha)\tbf - (\omega-\d\alpha)\otimes(\omega - \d\alpha)\n.
\]
The rough laplacian is the trace of $D^2 \n$. Thus, by tracing the above identity we get
\[
\Delta_g \n = \hbox{tr}D^2\n = (\hbox{tr} D\omega - \Delta \alpha) \tbf - \vert \d\alpha - \omega\vert^2\n,
\]
which is \eqref{eq:rough_alpha}.
\end{proof}
It is worthwhile noting that thanks to Lemma \ref{lemma:newomega} we can always 
choose a local orthonormal frame for which the spin connection is divergence free. As a result, 
\eqref{eq:rough_alpha} simplifies in
\begin{equation}
\label{eq:smoothrough}
\Delta_ g \n = -\Delta \alpha \,\tbf - \vert \nabla_s \alpha - \mathbb{A}\vert^2\n\,\,\,\,\,\hbox{ in } U.
\end{equation}

\begin{lemma}
Let $\{\eu,\ev\}$ be the orthonormal frame provided by the principal directions on $\Sigma$. Let $c_1,c_2$ be the corresponding principal curvatures and let $\kappa_1,\kappa_2$ be the corresponding geodesic curvatures. The energy \eqref{eq:napoli} of a director field $\n$, in terms of the deviation angle $\alpha$ characterized by $\n=\cos(\alpha)\eu+\sin(\alpha)\ev$ and of the spin connection \eqref{eq:defspin} is
 \begin{align} 
 	W(\n) &= \frac 12 \int_\Sigma \left\{K_1 ((\nabla_s \alpha -\mathbb A)\cdot \tbf)^2 +K_2(c_1-c_2)^2\sin^2\alpha\,\cos^2\alpha\right. \nonumber\\
		&\qquad \left.+K_3 ((\nabla_s \alpha -\mathbb A)\cdot \n)^2 +K_3(c_1\cos^2\alpha+ c_2\sin^2\alpha )^2\, \right\}\dvolg. \label{eq:energyb}
	\end{align}	
The corresponding one-constant approximation ($\kappa=K_1=K_2=K_3$) is
		\begin{equation}
	\label{eq:oneconst}
	 W_\kappa(\n) = \frac{\kappa}{4}\int_\Sigma \left\{c_1^2+c_2^2\right\} \dvolg  + \frac \kappa2 \int_\Sigma \left\{ |\nabla_s \alpha -\mathbb A|^2 +\frac12(c_1^2-c_2^2)\cos(2\alpha) \right\} \dvolg.
	\end{equation}	
\end{lemma}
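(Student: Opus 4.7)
The plan is to work pointwise in the principal-directions frame $\{\eu,\ev\}$, in which $\B\eu=c_1\eu$ and $\B\ev=c_2\ev$, and to compute each of the three Frank terms separately using the identities already established in the paper. Writing $\n=\cos\alpha\,\eu+\sin\alpha\,\ev$ and $\tbf=\sin\alpha\,\eu-\cos\alpha\,\ev$, the first step is to express the surface divergence and the surface curl of $\n$ via the Darboux decomposition \eqref{eq:div-curl2}, which reduces the computation to the four scalars $\kappa_\n$, $\kappa_\tbf$, $c_\n$, and $\tau_\n$.

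Two of these are essentially algebraic in the principal frame. The Weingarten identity $(\B\n,\vb)_{\R^3}=h(\n,\vb)$ together with $\B\n=c_1\cos\alpha\,\eu+c_2\sin\alpha\,\ev$ gives
\[
c_\n=(\B\n,\n)_{\R^3}=c_1\cos^2\alpha+c_2\sin^2\alpha,\qquad
\tau_\n=-(\B\n,\tbf)_{\R^3}=-(c_1-c_2)\sin\alpha\cos\alpha,
\]
which immediately yields the $K_2$ term $\tau_\n^2=(c_1-c_2)^2\sin^2\alpha\cos^2\alpha$ and the geometric piece of the $K_3$ term $c_\n^2=(c_1\cos^2\alpha+c_2\sin^2\alpha)^2$. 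The remaining two scalars are read off from Lemma \ref{lemma:gradalp}: by \eqref{eq:cov_alp}, $D\n=(\omega-\d\alpha)\tbf$, so that for any tangent $\vb$,
\[
(D_\vb\n,\tbf)_{\R^3}=\omega(\vb)-\d\alpha(\vb)=-(\nabla_s\alpha-\mathbb A,\vb)_{\R^3}.
\]
Specialising to $\vb=\n$ gives $\kappa_\n^2=((\nabla_s\alpha-\mathbb A)\cdot\n)^2$, and taking the trace in the frame $\{\eu,\ev\}$ (using that $(\tbf,\eu)=\sin\alpha$ and $(\tbf,\ev)=-\cos\alpha$) gives the identity $\divs\n=-(\nabla_s\alpha-\mathbb A)\cdot\tbf$, whence $(\divs\n)^2=((\nabla_s\alpha-\mathbb A)\cdot\tbf)^2$, which is the $K_1$ term. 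Assembling the pieces through $(\n\cdot\curls\n)^2=\tau_\n^2$ and $|\n\times\curls\n|^2=c_\n^2+\kappa_\n^2$, which follow from \eqref{eq:div-curl2}, yields \eqref{eq:energyb}.

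For the one-constant formula, one can simply add the three pieces with $K_1=K_2=K_3=\kappa$, use $\sin^2\alpha\cos^2\alpha=\tfrac14(1-\cos(4\alpha))$ only if needed, and repackage via the identities $\cos^4\alpha+\sin^4\alpha=1-2\sin^2\alpha\cos^2\alpha$ and $\cos(2\alpha)=\cos^2\alpha-\sin^2\alpha$. A quicker route, however, is to invoke \eqref{eq:normgradalp2} directly: the right-hand side there is exactly the integrand of $\tfrac12|\nabla_s\n|^2$, which is the one-constant integrand up to the factor $\kappa$. Splitting off the purely geometric integral $\tfrac{\kappa}{4}\int_\Sigma(c_1^2+c_2^2)\dvolg$ (independent of $\alpha$) gives \eqref{eq:oneconst}.

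I do not anticipate a genuine obstacle; the only step that needs some care is the sign and frame bookkeeping in the computation of $\divs\n$ and $\kappa_\n$, where one has to keep track of the fact that $\mathbb A=\omega^{\sharp}$ and hence $\omega(\vb)=(\mathbb A,\vb)_{\R^3}$, and that $\tbf=\nb\times\n$ is chosen so that $(\n,\tbf,\nb)$ is positively oriented. A secondary consistency check is to verify, by adding the three terms in the one-constant case, that one recovers \eqref{eq:normgradalp2}; this amounts to the elementary identity $((\nabla_s\alpha-\mathbb A)\cdot\tbf)^2+((\nabla_s\alpha-\mathbb A)\cdot\n)^2=|\nabla_s\alpha-\mathbb A|^2$ together with $(c_1-c_2)^2\sin^2\alpha\cos^2\alpha+(c_1\cos^2\alpha+c_2\sin^2\alpha)^2=\tfrac12(c_1^2-c_2^2)\cos(2\alpha)+\tfrac12(c_1^2+c_2^2)$, which is a direct trigonometric computation.
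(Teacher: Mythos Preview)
Your proposal is correct and follows essentially the same architecture as the paper's proof: both compute the four Darboux scalars $\kappa_\tbf$, $\kappa_\n$, $c_\n$, $\tau_\n$ and then assemble the energy via \eqref{eq:div-curl2}, and both obtain the one-constant formula directly from \eqref{eq:normgradalp2}. The only difference is in how $\kappa_\n$ and $\kappa_\tbf$ are obtained: the paper quotes Liouville's formula from \cite{DoCarmo76} to write $\kappa_\n=\kappa_1\cos\alpha+\kappa_2\sin\alpha+\d\alpha(\n)$ and $\kappa_\tbf=-\kappa_1\sin\alpha+\kappa_2\cos\alpha+\d\alpha(\tbf)$, then identifies these with $(\nabla_s\alpha-\mathbb A)\cdot\n$ and $(\nabla_s\alpha-\mathbb A)\cdot\tbf$ using \eqref{eq:defspin}; you instead extract them from the covariant-derivative identity \eqref{eq:cov_alp} already proved in Lemma~\ref{lemma:gradalp}, which is a slightly more self-contained route and avoids the external reference. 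One small caution: the $\tbf$ in Lemma~\ref{lemma:gradalp} and the Darboux $\tbf=\nb\times\n$ may differ by a sign depending on orientation conventions, but since every occurrence in \eqref{eq:energyb} is squared this is harmless, as you implicitly note.
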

\begin{proof}
The expression in \eqref{eq:oneconst} follows directly from \eqref{eq:normgradalp2}. Regarding \eqref{eq:energyb}, we use Liouville's formula \cite[Proposition 4, Section 4--4]{DoCarmo76} to compute
\begin{align*}
	\kappa_\n &= \kappa_1 \cos(\alpha) + \kappa_2 \sin(\alpha) + \d\alpha (\n) = (\nabla_s \alpha -\mathbb A)\cdot \n,\\
	\kappa_\tbf &= -\kappa_1 \sin(\alpha) + \kappa_2 \cos(\alpha) + \d\alpha (\tbf) = (\nabla_s \alpha -\mathbb A)\cdot \tbf.	
\end{align*}
Using the definitions of $\tau_\n$ and $c_\n$ and the choice of $\eu,\ev$ as principal directions, we get
\begin{align*}
	c_\n & = (\B \n,\n)_{\R^3} = c_1 \cos^2(\alpha) + c_2 \sin^2(\alpha),\\
	\tau_\n &= - (\B \n,\tbf)_{\R^3} = c_1\cos(\alpha)\sin(\alpha) -c_2\cos(\alpha)\sin(\alpha).
\end{align*}
The expression in \eqref{eq:energyb} follows then by \eqref{eq:div-curl2}.
\end{proof}

\subsection{Proof of Theorem \ref{Th:H1empty}}
\label{ssec:proofempty}

\begin{proof}[Proof of Theorem \ref{Th:H1empty}]
Let $\Sigma$ be given, as in the hypothesis of Theorem \ref{Th:H1empty}. Referring to Section \ref{sec:framework}, we consider $E:=\Hut$ as a subset of the Hilbert space $X:=H^1_{tan}(\Sigma)$. Assume that $E \neq \emptyset$, we need to prove that $\chi(\Sigma)= 0$.  We study the minimization problem related to the energy 
\begin{equation}
\label{eq:frakE}
	\mathfrak{E}:X \to \R,\qquad \mathfrak{E}(\ub) := \frac 12 \int_{\Sigma}\vert D \ub\vert^2 \dvolg.
\end{equation}
Since the function $f:\Sigma \times \R^3 \to \R$, $f(x,\xi)=g(x)(\xi,\xi)\sqrt{g(x)}$ is continuous and convex in $\xi$ for all $x\in \Sigma$, the energy 
$\mathfrak{E}$ is weakly lower semicontinuous on $X$. As 
the constraint ``$\vert \ub\vert=1$ a.e. on $\Sigma$" is continuous with respect
to the $L^2$ convergence, we deduce that sublevel sets of $\mathfrak E$ in $E$ are sequentially weakly compact in $X$. Hence, using the direct method of the calculus of variations we can find 
 a field $\ub^*\in E$ which minimizes $\mathfrak E$ on $E$. 
In the next step we prove that $\ub^*$ is actually continuous,
so that we can apply the classical Poincar\'e-Hopf Theorem (see \cite{milnor}). 
Thanks to the local representation of tangent vectors in Proposition \ref{prop:alp}, 
for any given point $x\in \Sigma$ we can find an open neighbourhood 
$U\subset \Sigma$ and a real function $\alpha:U\to \mathbb{R}$
such that any vector field $\ub\in E$ can be locally represented as
$ \ub = \cos\alpha\, \eu +\sin\alpha\, \ev $ a.e. in $U$. Here
$\left\{\eu, \ev\right\}$ is a smooth local orthonormal frame for
$T_x \Sigma$ for all $x\in U$, and 
$\alpha\in H^1(U)$ is the angle that $\ub$ forms with $\eu$. 
Owing to Lemma \ref{lemma:newomega}, it is not restrictive to assume that
the spin connection $\mathbb A$ corresponding to $\{\eu, \ev \}$ is divergence-free: indeed
if $\divs \mathbb A \neq 0$,
we can define a new orthonormal frame by rotating $\{\eu, \ev \}$ of an angle $\beta$ such that
$\Delta\beta = \divs \mathbb{A}$ in $U$. The spin connection $\mathbb A'$ in the new frame, owing to \eqref{eq:divomega}, satisfies then $ \divs \mathbb{A}' = \divs \mathbb{A} - \Delta\beta=0$.
  
Now, since $\ub^*$ minimizes \eqref{eq:frakE} on $E$, by Lemma \ref{lemma:gradalp} any function $\alpha^*\in H^1(U)$, such that $\ub^*:=\cos\alpha^*\, \eu +\sin\alpha^*\, \ev$ on $U$, minimizes
\begin{equation}
\label{eq:enalpha}
 	\mathfrak F:H^1(U)\to\R,\qquad \mathfrak{F}(\alpha) := \frac 12\int_{U}\vert \nabla_s\alpha-\mathbb{A}\vert^2 \dvolg,
\end{equation}
on the set $\{\alpha \in H^1(U): \alpha^{\phantom *}_{|\partial U}=\alpha^*_{|\partial U}\}$.  As a result, $\alpha^*$
is a stationary point of \eqref{eq:enalpha}, with respect to variations in $H^1_0(U)$, and hence it solves
\[
	\Delta \alpha^* = 0\quad \hbox{ in } U.
\]
As the Laplace Beltrami operator on a smooth compact manifold is an elliptic operator with smooth coefficients, we have that $\alpha^*$, hence $\ub^*$, is smooth in $U$. Being 
the choice of the point $x$ completely arbitrary, we have proved that $\ub^*$ is 
a unit norm vector field which is smooth
everywhere in $\Sigma$. Thanks to the classical Poincar\'e-Hopf Theorem, $\Sigma$ must be a genus-1 surface, i.e. $\chi(\Sigma) = 0$. The opposite implication is straightforward. More precisely,  assuming that
$\chi(\Sigma) =0$, classical results give the existence of a smooth 
vector field on $\Sigma$ with unit norm, which, in particular, belongs to $\Hut$.

\end{proof}


\section{Energy minimizers on a torus} 
\label{sec:torus}
 
In this section we study the problem of minimizing the surface energy \eqref{eq:napoli} and its one-constant approximation \eqref{eq:napolioc} in the particular case of an axisymmetric torus $\bbT\subset \R^3$. Given the radii $0<r<R$ (see Figure \ref{fig1}), we consider the parametrization $X:\R^2\to \R^3$ defined by
\begin{equation}
\label{eq:paramtorus}
	X(\theta,\phi) = 
		\begin{pmatrix} 
			(R+r\cos \theta)\cos \phi \\ 
			(R+ r\cos \theta)\sin \phi \\ 
			r\sin \theta
		\end{pmatrix}.
\end{equation}
\begin{figure}[h]
\begin{center}
\labellist
		\hair 2pt
		\pinlabel $\phi$ at 265 200
		\pinlabel $\bbT$ at 150 320	
		\pinlabel $\theta$ at 360 200
		\pinlabel $R$ at 295 165		
		\pinlabel $r$ at 330 220				
	\endlabellist				
\centering{
 \includegraphics[height=4.8cm]{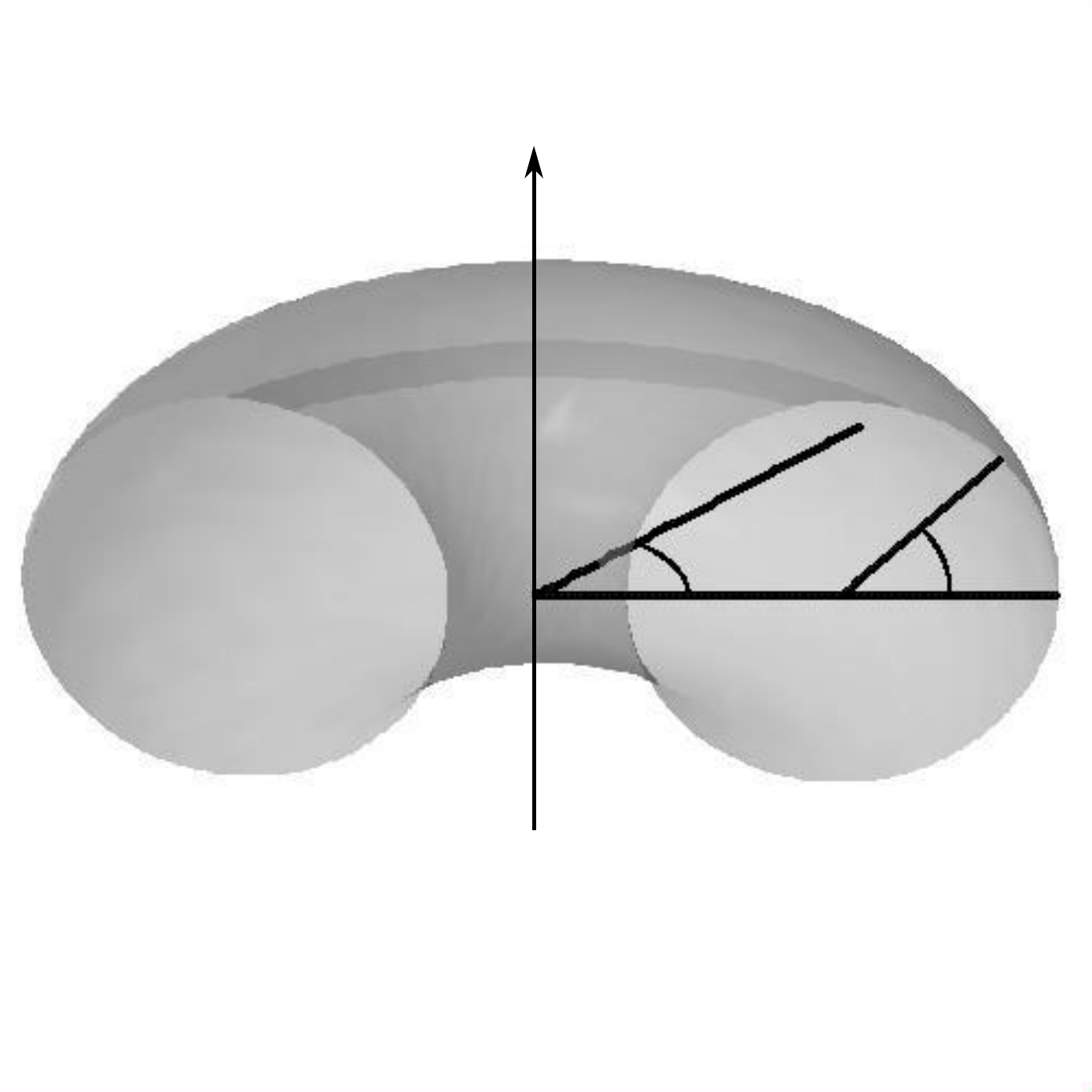}\\  \vspace{-1cm}
}
\end{center}
\caption{Schematic representation of the torus $\bbT$ parametrized by \eqref{eq:paramtorus}.}
\label{fig1}
\end{figure}
Let $\{\eu,\ev\}$ be the orthonormal frame associated to $X$ (see Appendix A).  By Proposition \ref{prop:alp}, any vector field $\n \in \HT$ can be represented by a scalar deviation $\alpha$, with respect to $\eu$, such that
\[
	\n\circ X=\cos(\alpha)\eu +\sin(\alpha)\ev.
\] 
Moreover, since $X$ is $2\pi$-periodic in both variables, we can assume that $\alpha \in H^1(Q)$,  for $Q:=[0,2\pi]\times [0,2\pi] \subset \R^2$. (Note that, since $\bbT$ is not simply connected, we cannot define $\alpha$ directly on $\bbT$.)

\subsection{A toy-problem: constant deviation}
\label{ssec:constantalpha}

In this section, with a slight abuse of notation, we let $W(\alpha):=W(\n)$, for $\n=\cos\alpha \eu +\sin\alpha \ev$. We study the simpler case of $\alpha\equiv const$, where the energy $W(\alpha)$ in \eqref{eq:energyb} reduces to
 \begin{align*} 
 	W(\alpha) &= \frac 12 \int_Q \left\{K_1 \cos^2\alpha \left(\kappa_2 \right)^2 +K_2(c_1-c_2)^2\sin^2\alpha\,\cos^2\alpha\right. \\
		&\qquad \qquad \left.+K_3 \sin^2\alpha \left(\kappa_2 \right)^2 +K_3(c_1\cos^2\alpha+ c_2\sin^2\alpha )^2\, \right\}\dvol. 
	\end{align*}	
Here $K_1,K_2,K_3$ are positive constants and (see Appendix A)
\[
		c_1=\frac{1}{r^2},\qquad c_2=\frac{\cos\theta}{R+r\cos\theta},\qquad  \kappa_2
			=-\frac{\sin\theta}{R+r\cos\theta},\qquad \dvol=r(R+r\cos\theta)\d\theta \d\phi.
\]
\begin{lemma}
\label{lemma:expl}
Let $b:=R/r$. In the case of constant deviation $\alpha$, the energy $W$ has the explicit expression
\begin{align*}
	W(\alpha) &=\pi^2\left[ (K_1 +K_3)\left(b - \sqrt{b^2-1}\right) 
					+ \frac{K_2+K_3}{2}\left(\frac{b^2}{\sqrt{b^2-1}}\right)\right] \\
		 	&\qquad + \pi^2\cos(2\alpha)\left[ (K_1-K_3)\left(b - \sqrt{b^2-1}\right) 
					+ K_3 \left(2b-\frac{b}{\sqrt{b^2-1}}\right)\right]\nonumber \\
		&\qquad + \pi^2\cos^2(2\alpha)\left[ \frac{K_3-K_2}{2}\left(\frac{b^2}{\sqrt{b^2-1}}\right)\right]. 
\end{align*}
\end{lemma}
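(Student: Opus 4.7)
The proof is essentially a direct computation: substitute the parametrization-induced quantities into the integral and reduce to elementary definite integrals. I outline the plan.

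First, I would use the double-angle identities to rewrite every trigonometric function of $\alpha$ in the integrand in the form $a_0+a_1\cos(2\alpha)+a_2\cos^2(2\alpha)$, where $a_0,a_1,a_2$ depend only on $(\theta,\phi)$. Specifically,
\[
\cos^2\alpha=\tfrac{1+\cos(2\alpha)}{2},\quad \sin^2\alpha=\tfrac{1-\cos(2\alpha)}{2},\quad \sin^2\alpha\cos^2\alpha=\tfrac{1-\cos^2(2\alpha)}{4},
\]
and, expanding the square,
\[
(c_1\cos^2\alpha+c_2\sin^2\alpha)^2=\frac{(c_1+c_2)^2}{4}+\frac{(c_1+c_2)(c_1-c_2)}{2}\cos(2\alpha)+\frac{(c_1-c_2)^2}{4}\cos^2(2\alpha).
\]
Since $\alpha$ is constant on $Q$, all $\alpha$-dependence factors out of the integral, yielding
\[
W(\alpha)=A+B\cos(2\alpha)+C\cos^2(2\alpha),
\]
with $A,B,C$ expressible as combinations of the $(\theta,\phi)$-integrals
\[
\int_Q \kappa_2^{\,2}\,\dvol,\quad \int_Q c_i c_j\,\dvol\ (i,j\in\{1,2\}),\quad \int_Q (c_1-c_2)^2\,\dvol.
\]

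Next, since none of the integrands depend on $\phi$, the $\phi$-integration gives an overall factor of $2\pi$, and it remains to evaluate one-dimensional integrals in $\theta\in[0,2\pi]$ after incorporating $\dvol=r(R+r\cos\theta)\,d\theta\,d\phi$. With $c_1=1/r$, $c_2=\cos\theta/(R+r\cos\theta)$ and $\kappa_2=-\sin\theta/(R+r\cos\theta)$, every integrand reduces, after cancellation with the factor $(R+r\cos\theta)$ in $\dvol$, to a rational function of $\cos\theta$ plus a term of the form $\mathrm{(const)}/(R+r\cos\theta)$ or $\sin^2\theta/(R+r\cos\theta)$. All of these reduce, by writing $\sin^2\theta=1-\cos^2\theta$ and $\cos\theta=\tfrac{1}{r}\bigl[(R+r\cos\theta)-R\bigr]$, to a linear combination of the elementary integrals
\[
\int_0^{2\pi}d\theta=2\pi,\quad \int_0^{2\pi}\cos\theta\,d\theta=0,\quad \int_0^{2\pi}\cos^2\theta\,d\theta=\pi,\quad \int_0^{2\pi}\frac{d\theta}{R+r\cos\theta}=\frac{2\pi}{\sqrt{R^2-r^2}},
\]
the last of which is the only non-obvious one and follows from a Weierstrass substitution (or residue calculus), valid since $R>r>0$.

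Finally, I would substitute $b=R/r$ to rewrite $\sqrt{R^2-r^2}=r\sqrt{b^2-1}$ and collect the coefficients of $1,\cos(2\alpha),\cos^2(2\alpha)$. The only expected obstacle is bookkeeping: tracking how the pieces combine, and in particular verifying that the cross term $c_1c_2$ assembles correctly with the contribution from $c_2^2$ to yield the compact expressions $\bigl(b-\sqrt{b^2-1}\bigr)$ and $\bigl(2b-b/\sqrt{b^2-1}\bigr)$ that appear in the statement. Matching the three coefficients $A$, $B$, $C$ against the claimed formula then completes the proof.
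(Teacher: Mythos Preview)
Your approach is correct and coincides with the paper's: the authors state that the proof ``relies on algebraic manipulations and integration of trigonometric functions'' (deferring details to a companion reference), and the key integrals you identify are exactly those collected in Appendix~B. There is nothing to add beyond careful bookkeeping.
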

The proof relies on algebraic manipulations and integration of trigonometric functions, which are detailed in \cite{SSV14}.
There are four parameters which influence the minimizers of $W$, that is $R/r$, $K_1, K_2,K_3$. In Figure \ref{fig2} we plot the graph $\{(\alpha,W(\alpha)/\pi^2\}$ for some especially meaningful choices of these parameters. The rescaling by $\pi^2$ is just for plotting purposes.	

\begin{figure}[h]
\begin{center}
\begin{minipage}{7cm}
\centering{
\labellist
		\hair 2pt
		\pinlabel $\displaystyle\frac{W(\alpha)}{\pi^2}$ at -25 180
		\pinlabel $\alpha$ at 380 -5
	\endlabellist				
 \includegraphics[height=3.5cm]{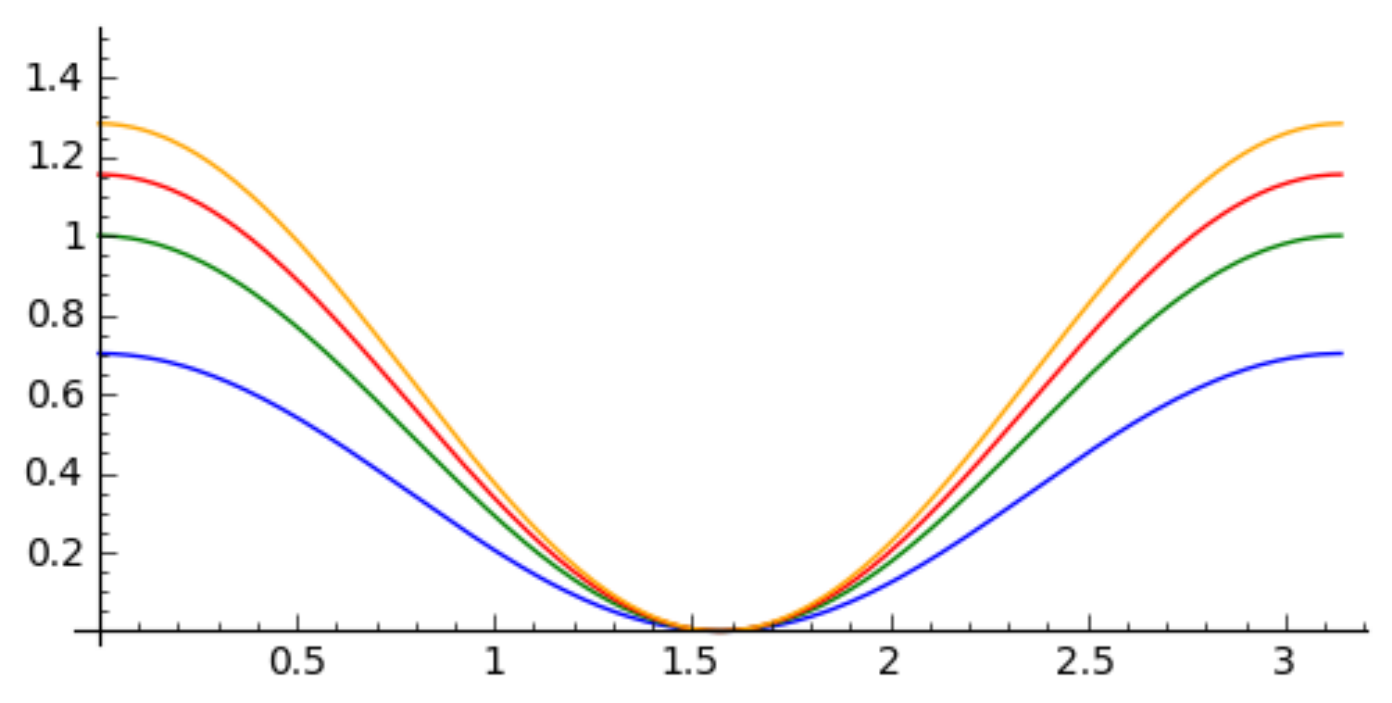}\\ 
\small    Pure splay\ ($K_1=1,$ $K_2=K_3=0$).}
\end{minipage}
\hspace{1cm}
\begin{minipage}{7cm}
\labellist
		\hair 2pt
		\pinlabel $\displaystyle\frac{W(\alpha)}{\pi^2}$ at -25 180
		\pinlabel $\alpha$ at 380 -5
	\endlabellist				
\centering{
 \includegraphics[height=3.5cm]{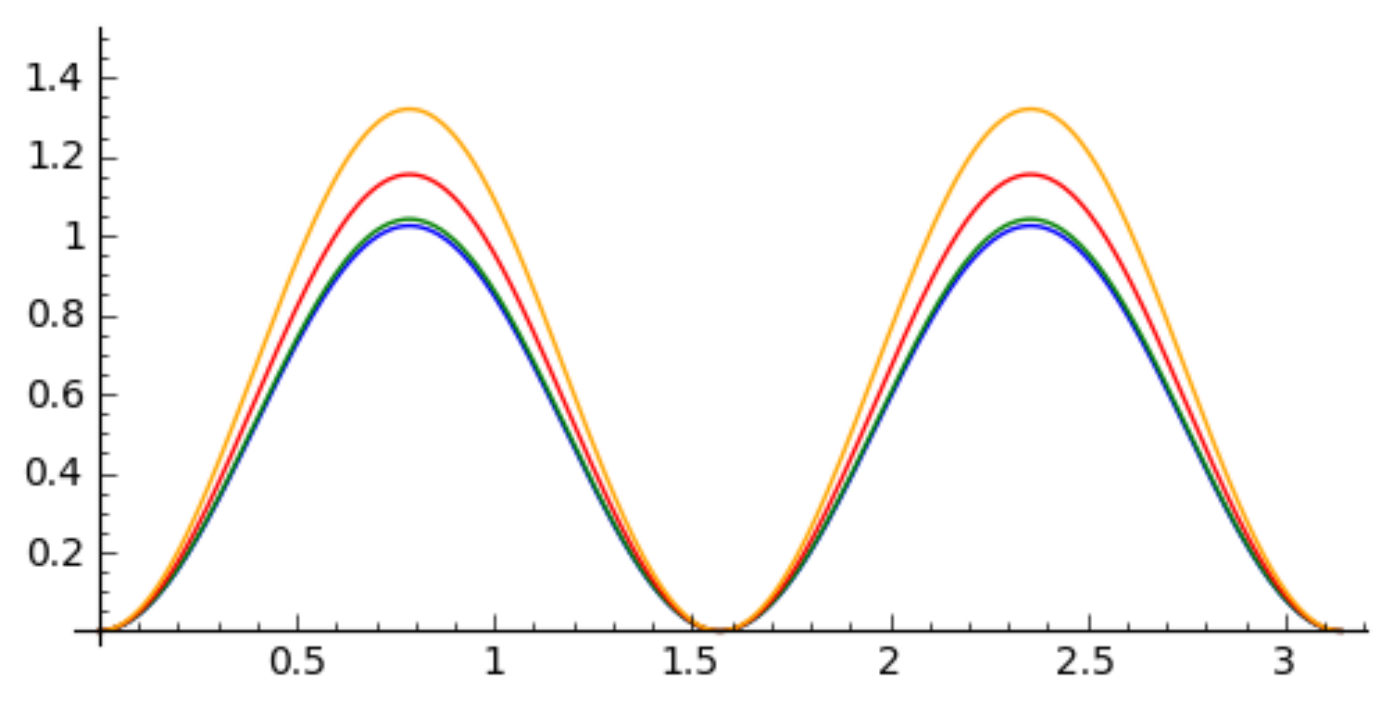}\\  
\small    Pure twist \ ($K_2=1,$ $K_1=K_3=0$).}\normalsize
\end{minipage}
\vspace{1cm}

\begin{minipage}{6.5cm}
\centering{
\labellist
		\hair 2pt
		\pinlabel $\displaystyle\frac{W(\alpha)}{\pi^2}$ at -25 380
		\pinlabel $\alpha$ at 350 -5
	\endlabellist				
	 \includegraphics[height=6.5cm]{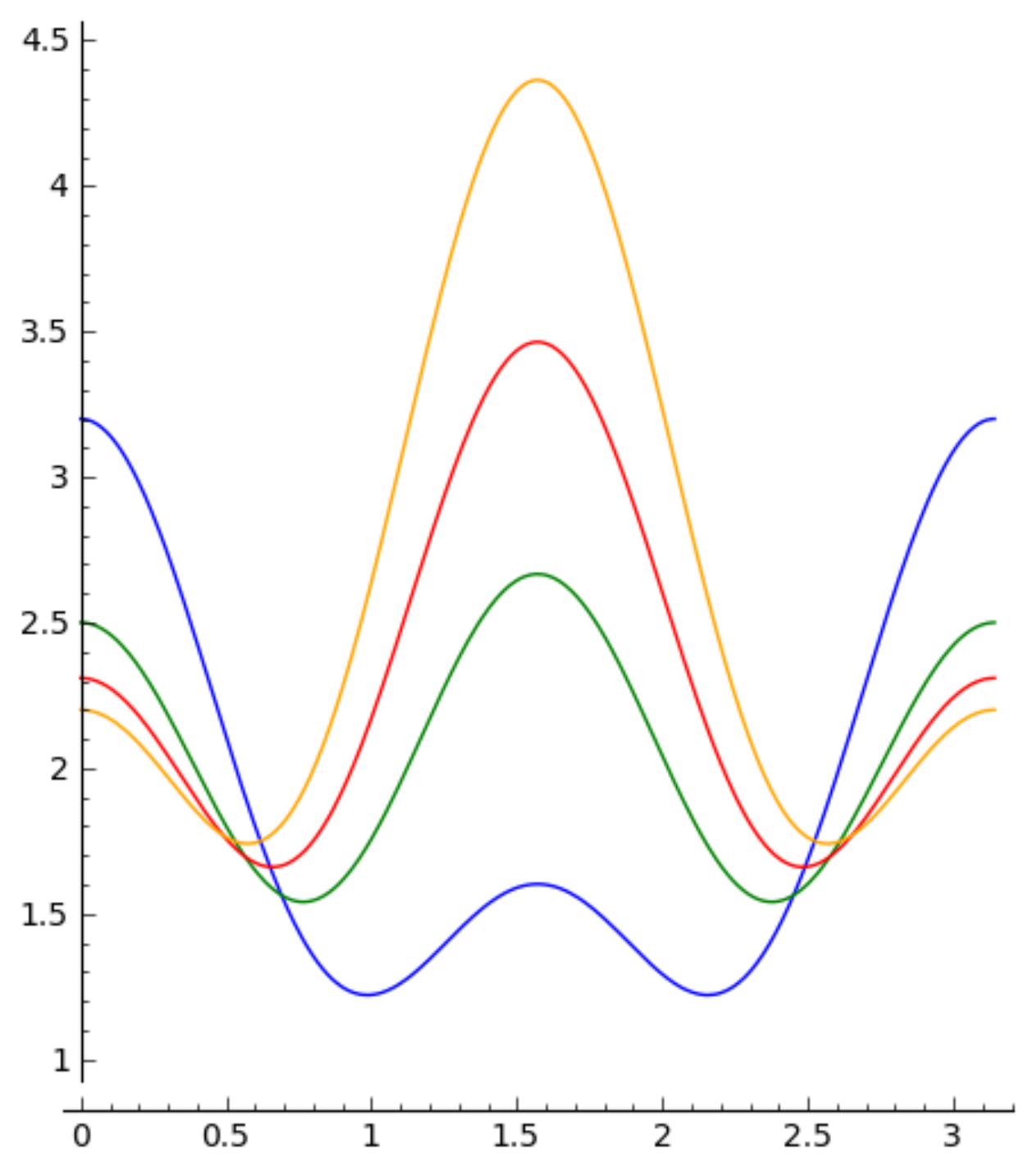}\\ 
\small    Pure bend\  ($K_3=1,$ $K_1=K_2=0$).}
\end{minipage}
\hspace{1cm}
\begin{minipage}{6.5cm}
\labellist
		\hair 2pt
		\pinlabel $\displaystyle\frac{W(\alpha)}{\pi^2}$ at -25 380
		\pinlabel $\alpha$ at 350 -5
	\endlabellist				
\centering{
 \includegraphics[height=6.5cm]{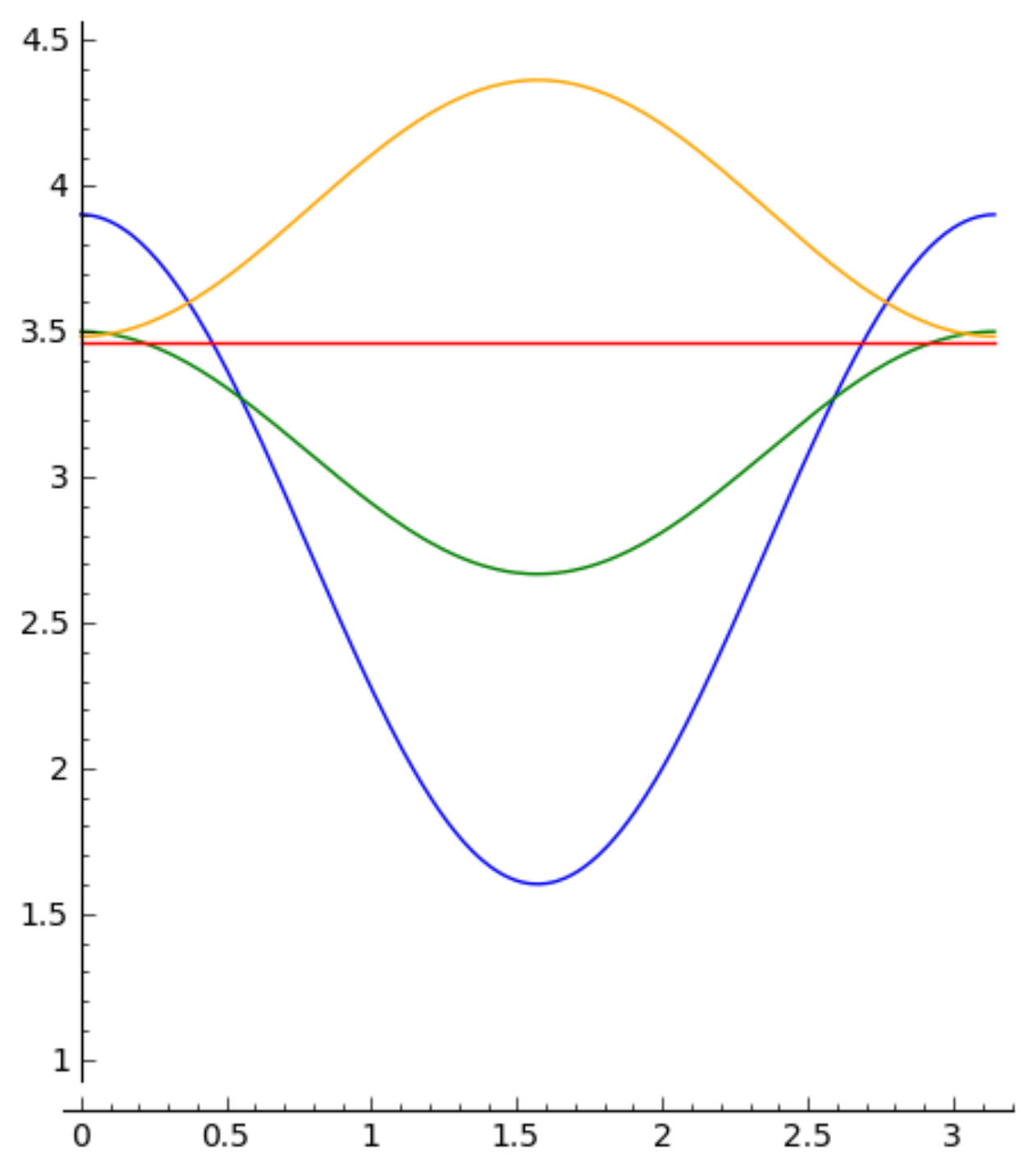}\\  
\small    One-constant approximation\\ \quad ($K_1=K_2=K_3=1)$.}\normalsize
\end{minipage}
\end{center}
\caption{Frank energy $W$ (rescaled by $\pi^2$) as a function of deviation $\alpha$ from $\eu$, for different choices of the parameters $K_i$. The four colours represent four different choices of the ratio $R/r$, namely: $R/r=1.1$ (orange), $R/r= 2/\sqrt 3$ (red), $R/r=1.25$ (green), $R/r=1.6$ (blue).}
\label{fig2}
\end{figure}

Since we are assuming that $\alpha=const$, instead of the first variation of $W$ we can just take the first derivative with respect to $\alpha$:
\begin{align*}
\frac{\d}{\d\alpha}W(\alpha) &= 2 \pi^2 \sin(2\alpha) \big[A(K_3-K_1) -CK_3\big]+ 2B(K_2-K_3)\cos (2\alpha) \sin (2\alpha)\\
			&=2\sin(2\alpha)\Big( A(K_3-K_1)+B\cos(2\alpha)(K_2-K_3)-CK_3\Big),
\end{align*}
where
\[
	A:=b - \sqrt{b^2-1},\qquad B:=\frac{b^2}{\sqrt{b^2-1}} ,\qquad C:=2 b - \frac{b^2}{\sqrt{b^2-1}}.
\]	
Therefore, $W'(\alpha)=0$ if and only if 
\[
	\sin(2\alpha)=0\qquad \text{or}\qquad  \cos(2\alpha) = \frac{CK_3-A(K_3-K_1)}{B(K_2-K_3)},
\]	
i.e.
\[ 
	\alpha = m\frac \pi 2 \qquad \text{or}\qquad \alpha = \pm \frac 12 \arccos \left( \frac{CK_3-A(K_3-K_1)}{B(K_2-K_3)} \right)+m\pi,
\]	
for $m\in \mathbb Z$, provided the argument of the arccos function is in $[-1,1].$ For short, we refer to the critical points obtained via the arccos function as to points of the \emph{second type}.

To check stability, we compute the second derivative of $W$
\begin{align*}
	\frac{1}{\pi^2}\frac{d^2}{d\alpha^2}W(\alpha) &=4\cos(2\alpha)\Big( A(K_3-K_1)+B\cos(2\alpha)(K_2-K_3)-CK_3\Big) -4B\sin^2(2\alpha)(K_2-K_3)\\
		&=4 A(K_3-K_1)\cos(2\alpha) + 4 B(K_2-K_3)\cos(4\alpha) -4 C K_3\cos(2\alpha).
\end{align*}
Therefore, 
\begin{itemize}
	\item critical points of type $\alpha=m\pi$ are stable local minimizers if
$$ A(K_3-K_1) +B(K_2-K_3)-CK_3>0$$
i.e. if
$$ K_1(\sqrt{b^2-1}-b) +K_2\frac{b^2}{\sqrt{b^2-1}} -K_3(\sqrt{b^2-1}+b) >0,$$
	\item critical points of type $\alpha=(2m+1)\frac\pi 2$ are stable local minimizers if
$$ -A(K_3-K_1) +B(K_2-K_3) +CK_3>0,$$
	\item critical points of the second type are (stable local) minimizers if $K_3>K_2$. 
\end{itemize}	

We make now a special choice of the parameters, in order to be able to plot a stability diagram for the minimizers. Namely, we assume that $K_1=K_3$, $K_2\neq 0$, and we introduce the variables 
\[
	\lambda:=\frac{K_3}{K_2},\qquad \eta:=\frac CB = 2\frac{\sqrt{b^2-1}}{b}-1,
\]	
so that second type minimizers take the form
\[
	\alpha = \pm \frac 12 \arccos \left( \frac{C K_3}{B(K_2-K_3)} \right)=\pm \frac 12 \arccos\left( \eta \frac{\lambda}{1-\lambda}\right).
\]
Note that $\lambda\geq 0$ and, since $b=R/r>1$, then $\eta \in (-1,1)$ and $\eta=0$ if and only if $R/r=2/\sqrt 3$. A necessary condition for $\alpha=m\pi$ to be a stable local minimum for $W$ is then
\[ 
	\frac{B}{B+C}=\frac{2b}{\sqrt{b^2-1}}=\frac{1}{1+\eta}>\lambda.
\]	
A necessary condition for a second type $\alpha$ to be a critical point of $W$ is that
\[ 
	\left| \frac{\lambda}{1-\lambda}\right|\leq \frac{1}{|\eta|},
\]	
while a sufficient condition for a critical point to be a stable local minimum is that $\lambda>1$. Finally, $ \lambda_1:=\frac{1}{1+\eta}$ is a bifurcation point for the unstable critical points of $W$, while $\lambda_2:=\frac{1}{1-\eta}$ is a bifurcation point for the stable global minimizers of $W$.

\begin{figure}[h]
\centering{
\labellist
		\hair 2pt
		\pinlabel $\alpha$ at -10 255
		\pinlabel $\displaystyle\frac{\pi}{2}$ at 16 342		
		\pinlabel $-\displaystyle\frac{\pi}{2}$ at 9 33		
		\pinlabel $\lambda_1$ at 100 171
		\pinlabel $\lambda_2$ at 178 171		
		\pinlabel $\lambda$ at 368 171
		\pinlabel $0$ at 20 171
	\endlabellist				
 \includegraphics[height=6cm]{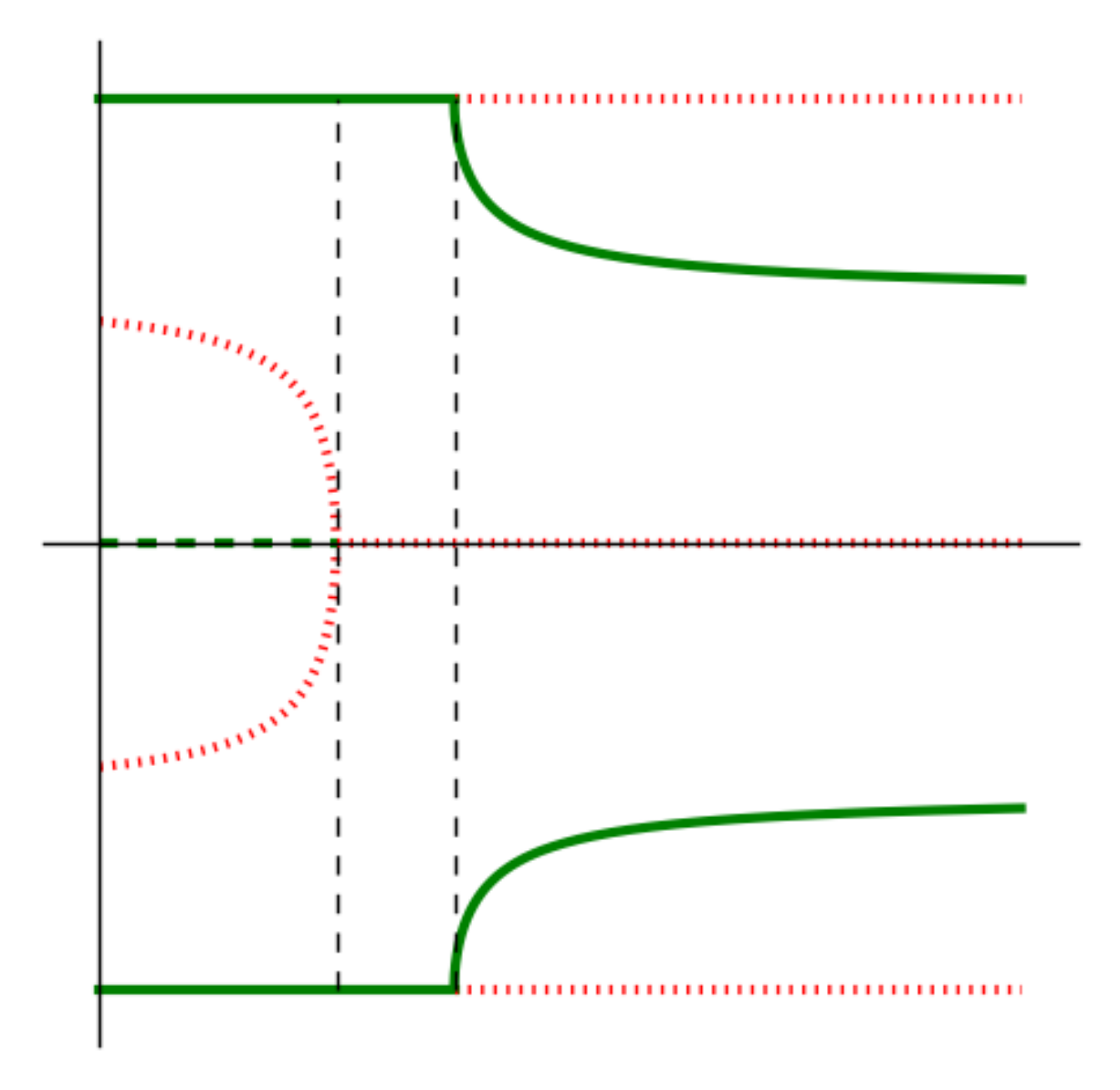}\\ 
}
\caption{Bifurcation diagram for minimizers $\alpha$ of $W$ as a function of $\lambda=K_3/K_2$, for $\lambda \in (0,3.25)$. The other parameters are chosen as $K_1=K_3$, $R/r=1.25$. The diagram shows the stable global minimizer (green continuous line), the stable local minimizer (green dashed line) and the unstable critical points (red dotted lines).}
\label{fig3}
\end{figure}

\begin{figure}[h]
\centering{
\labellist
		\hair 2pt
		\pinlabel $\alpha$ at 383 8
		\pinlabel $\displaystyle\frac{W(\alpha)}{\pi^2}$ at 165 215		
		\pinlabel $\lambda\!=\!0.7\lambda_1$ at 330 195
		\pinlabel $\lambda\!=\!\lambda_1$ at 268 165
		\pinlabel $\lambda\!=\!\lambda_2$ at 281 111		
		\pinlabel $\lambda\!=\!2\lambda_2$ at 304 56				
	\endlabellist				
 \includegraphics[height=4.5cm]{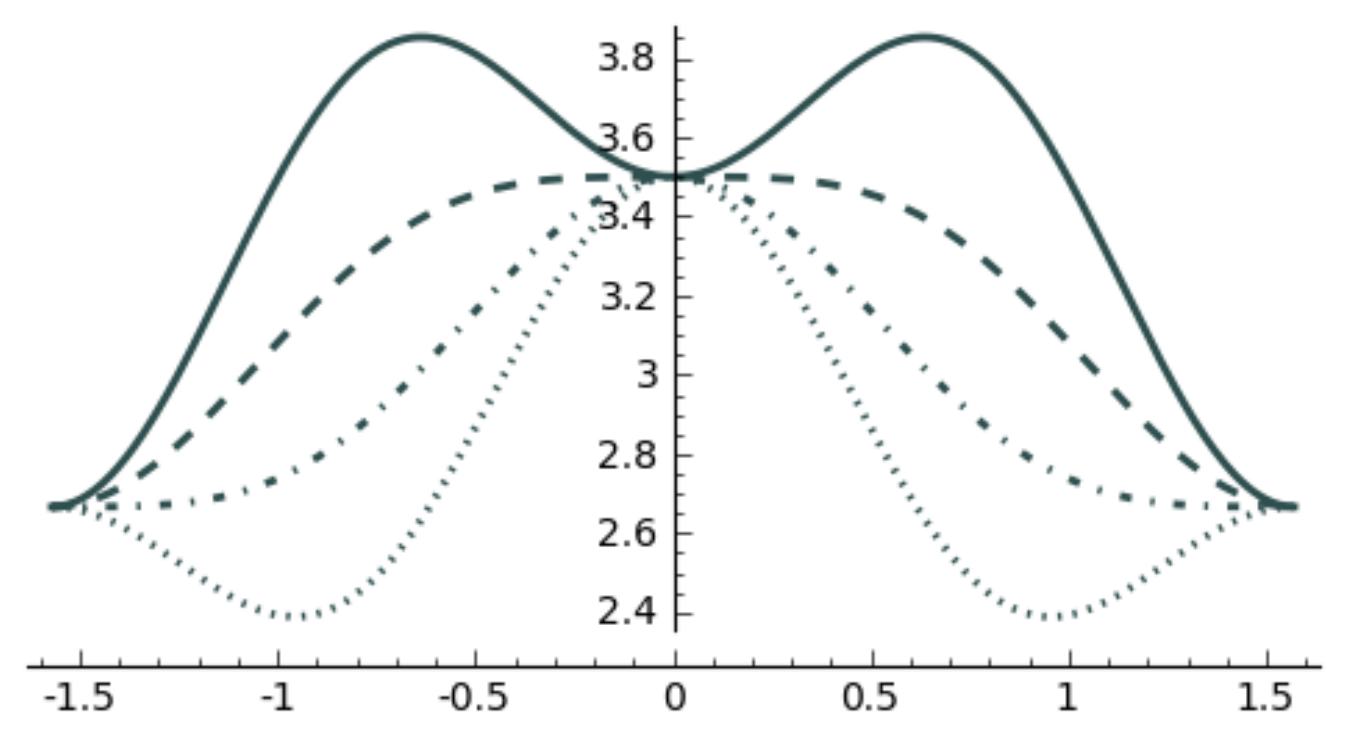}\\ 
}
\caption{Graphs of the energy $W$ (rescaled by $\pi^2$) as a function of $\alpha$, for $R/r=1.25$, $K_1=K_3=1$, and different choices of $\lambda=K_3/K_2$.}
\label{fig3b}
\end{figure}


\subsection{The one-constant approximation for the torus}
\label{ssec:oneconst}

Since not every function in $H^1(Q)$ corresponds to a vector field on the torus, before proceeding with the analysis of energy \eqref{eq:napolioc}, we study the structure of the space of configurations $\alpha$. This will enable us to study the gradient flow of the energy functional and to give a geometrical interpretation to its solutions. 

Let $\n\in \HT$ be fixed, and let us assume that $\n$ is also continuous. In general, we cannot expect the corresponding $\alpha$ to be periodic on $Q=[0,2\pi]\times [0,2\pi]$. We observe that the vector field $\n$ is continuous if and only if there exist $m,n\in \Z$ such that
\begin{equation}
\label{eq:stab}
	\alpha(2\pi,\phi)=\alpha(0,\phi) +2m\pi,\qquad 
	\alpha(\theta,2\pi)=\alpha(\theta,0) +2n\pi,\qquad \forall\,(\theta,\phi)\in Q.
\end{equation}
By continuity of $\n$, $m$ and $n$ do not depend on the choice of $\theta$ and $\phi$. Moreover, since $\alpha$ is unique up to an additive constant, $m$ and $n$ are also independent of the choice of $\alpha$ which represents $\n$. In order to extend \eqref{eq:stab} to $H^1$-regular fields, we define the \emph{winding number of $\n$ on $\bbT$} as the couple of indices $h(\n) =(h_\theta,h_\phi)$ given by
\begin{equation}
\label{def:ind}
	h_\theta:= \int_Q\frac{\partial_\theta\alpha(\theta,\phi)}{4\pi^2}\d\theta\,\d\phi,\qquad
	h_\phi:=\int_Q\frac{\partial_\phi\alpha(\theta,\phi)}{4\pi^2}\d\theta\,\d\phi. 	
\end{equation}
\begin{lemma}
Let $\n \in \HT$, then $h(\n) \in \mathbb Z \times \mathbb Z$.
\end{lemma}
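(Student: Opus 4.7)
My strategy is to exploit the uniqueness part of Proposition~\ref{prop:alp} by lifting on an enlarged fundamental domain. First, I would introduce the smooth double covering $\tilde X:\tilde Q:=[0,4\pi]\times[0,2\pi]\to\bbT$, $\tilde X(\theta,\phi):=X(\theta\bmod 2\pi,\phi)$ (the smoothness of $\tilde X$ across $\theta=2\pi$ is a consequence of the $2\pi$-periodicity of $X$), and apply Proposition~\ref{prop:alp} to $\tilde X$ to produce a single lift $\tilde\alpha\in H^1(\tilde Q)$ of $\n$.

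Next, I would consider the two restrictions $\alpha_1,\alpha_2\in H^1(Q)$ defined by $\alpha_1:=\tilde\alpha|_Q$ and $\alpha_2(\theta,\phi):=\tilde\alpha(\theta+2\pi,\phi)$. Both represent the same $\n$ on $Q$ via the covering $X:Q\to\bbT$ by virtue of \eqref{eq:nalpha2}, so the uniqueness part of Proposition~\ref{prop:alp} provides $m\in\Z$ with
\[
\alpha_2-\alpha_1\equiv 2\pi m\quad\text{a.e. in }Q.
\]
Taking the trace at $\theta=0$, which is well-defined in $H^{1/2}(0,2\pi)$, yields $\tilde\alpha(2\pi,\phi)-\tilde\alpha(0,\phi)=2\pi m$ for a.e.\ $\phi\in(0,2\pi)$.

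At this point, since the integrand in \eqref{def:ind} is insensitive to the $2\pi\Z$-ambiguity of the lift, I would choose $\alpha:=\alpha_1$ and use Fubini together with the absolute continuity of $H^1$-functions of one variable to compute
\[
4\pi^2 h_\theta=\int_0^{2\pi}\!\!\!\int_0^{2\pi}\partial_\theta\tilde\alpha(\theta,\phi)\,\d\theta\,\d\phi=\int_0^{2\pi}\bigl[\tilde\alpha(2\pi,\phi)-\tilde\alpha(0,\phi)\bigr]\d\phi=4\pi^2 m,
\]
whence $h_\theta=m\in\Z$. The analogous argument using the smooth $\phi$-doubled covering $[0,2\pi]\times[0,4\pi]\to\bbT$ gives $h_\phi\in\Z$.

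The hard part is not a computation but ensuring that the \emph{pointwise} periodicity-modulo-$2\pi$ condition $\tilde\alpha(2\pi,\phi)-\tilde\alpha(0,\phi)\in 2\pi\Z$ collapses to a single integer rather than varying with $\phi$ within the discrete set $2\pi\Z$. The enlargement of the fundamental domain is precisely the device that upgrades the pointwise condition into an $H^1(Q)$-identity of two full lifts, so that the uniqueness of Proposition~\ref{prop:alp} then forces the integer to be global.
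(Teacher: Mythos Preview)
Your argument is correct and genuinely different from the paper's. The paper proceeds by approximation: it takes a sequence of $C^1$ unit tangent fields $\n_k\to\n$ in $\HT$ with lifts $\alpha_k\to\alpha$ in $H^1(Q)$, observes that each continuous $\alpha_k$ satisfies the periodicity-modulo-$2\pi$ condition \eqref{eq:stab} so that $(h_\theta)_k=m_k\in\Z$, and then uses convergence of the integrals together with the discreteness of $\Z$ to conclude that the $m_k$ are eventually constant and equal to $h_\theta$. Your approach avoids approximation entirely: by lifting on the doubled domain and invoking the uniqueness part of Proposition~\ref{prop:alp}, you force the jump $\tilde\alpha(2\pi,\cdot)-\tilde\alpha(0,\cdot)$ to be a single integer multiple of $2\pi$ directly at the $H^1$ level, and then compute $h_\theta$ by Fubini and one-dimensional absolute continuity. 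This is arguably more self-contained, since the paper's route tacitly relies on the density of smooth unit tangent fields in $\HT$ and on a compatible choice of lifts $\alpha_k\to\alpha$, neither of which is proved in the text. The only point to flag is that Proposition~\ref{prop:alp} is stated for $Q=[0,2\pi]^2$, so strictly speaking you are invoking the underlying Bethuel--Zheng lifting result on the rectangle $\tilde Q$; this is harmless (the result holds on any simply connected Lipschitz domain), but worth a one-line remark.
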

\begin{proof}
We prove that $h_\theta\in \Z$, being the proof for $h_\phi$ identical. Let $\n \in \HT$ and let $\n_k$ be a sequence of $C^1$-regular vector fields such that $\n_k \to \n$ in $\HT$.  Let $\alpha_k, \alpha$ be a choice of corresponding representations, such that $\alpha_k \to \alpha$ in $H^1(Q)$. Since $\alpha_k \in C^1(Q)$, it satisfies \eqref{eq:stab} and
\begin{equation}
\label{eq:hint}
	(h_\theta)_k:= \int_0^{2\pi}\left( \int_0^{2\pi}\frac{\partial_\theta\alpha_k(\theta,\phi)}{4\pi^2}\d\theta\right)\d\phi 
		= \frac{1}{2\pi}\int_0^{2\pi} \frac{\alpha(2\pi,\phi)-\alpha(0,\phi)}{2\pi}\d\phi=:
m_k \in \Z.
\end{equation}
 As convergence in $H^1(Q)$ implies weak convergence in $L^1(Q)$ of the derivatives,
\begin{equation}
\label{eq:hint2}
	\lim_{k \to \infty} (h_\theta)_k= \lim_{k \to \infty} \int_Q\frac{\partial_\theta\alpha_k}{4\pi^2}\d\theta\,\d\phi= 	\int_Q\frac{\partial_\theta\alpha}{4\pi^2}\d\theta\,\d\phi = h_\theta.
\end{equation}
By \eqref{eq:hint2} there exists $\bar k\in \N$ such that for all $k>\bar k, |m_k - h_\theta |\leq 1/3$. Therefore for all $k_1,k_2>\bar k$, $|m_{k_1}-m_{k_2}|\leq |m_{k_1}-h_\theta|+|h_\theta - m_{k_2}|\leq 2/3.$ By \eqref{eq:hint}, $m_{k_1},m_{k_2} \in \Z$ and thus $m_{k_1}=m_{k_2}$. We conclude that there exists $m\in \Z$ such that $m_k \equiv m$ for all $k>\bar k$ and finally that $m=\lim_{k \to \infty}m_k=h_\theta \in \Z$.
\end{proof} 
Definition \eqref{def:ind} is also consistent with that of winding of a vector field along a curve given in Section \ref{sec:alpha}: indeed, let $\gamma:[0,2\pi]\to Q$ be given by $\gamma(\theta):=(\theta,0)$, then
\[
	\mathscr W_\gamma(\n)\stackrel{\eqref{def:winde}}{=}\int_\gamma (\iota \circ\n)^* \omega 
		= \int_0^{2\pi} \left\{ \cos(\alpha)\partial_\theta\{\sin(\alpha)\} -\sin(\alpha)\partial_\theta\{\cos(\alpha)\}\right\}_{|\phi=0}\,\d \theta = \int_0^{2\pi} \partial_\theta \alpha(\theta,0)\, \d\theta = 2\pi h_\theta,
\]
that is, $h_\theta = \deg(\iota\circ\n\circ X_{|{\phi=0}},\mathbb S^1, \mathbb S^1)$. An analogous computation holds for $h_\phi$. 
Moreover, 
if $\n,\mathbf{v}\in H^1_{tan}(\bbT;\S^2)$ are homotopic, then $h(\n)=h(\mathbf v)$.

Let $h=(h_\theta,h_\phi)\in \Z^2$, define 
\begin{equation}
\label{def:Ah}
	\mathscr A_h:=\left\{ \alpha\in H^1(Q) : \alpha_{|\{x_j=2\pi\}}=\alpha_{|\{x_j=0\}}+2\pi h_{x_j},\ 
	\text{for }x_j=\theta,\phi\right\},\qquad \mathscr A:= \bigcup_{h\in \Z^2} \mathscr A_h,
\end{equation}
where the equality is in the sense of traces of $H^1$-regular functions. Note that $\mathscr A_0$ and $\mathscr A$ are linear vector spaces, while each $\mathscr A_h$ is an affine space. 
Indeed, for $h=(h_\theta,h_\phi),\ m=(m_\theta,m_\phi) \in\Z^2$, $\alpha \in \mathscr A_h$ and $\beta \in \mathscr A_m$, the function $u(x):=\alpha(x) + \beta(x) \in H^1(Q)$ satisfies
\begin{align*} 
	u_{|\{x_j=2\pi\}}&=\alpha_{|\{x_j=2\pi\}} + \beta_{|\{x_j=2\pi\}} \\
		&\stackrel{\eqref{def:Ah}}{=} \alpha_{|\{x_j=0\}}+2\pi h_{x_j} + \beta_{|\{x_j=0\}}+2\pi m_{x_j}\\
		&=u_{|\{x_j=0\}}+2\pi (h_{x_j} + m_{h_j})
\end{align*}	
in the sense of traces, which implies that $ u= \alpha+\beta \in \mathscr A_{h+m}$, for $h+m=(h_\theta+m_\theta,h_\phi+m_\phi)$. As norm we choose
\begin{equation}
\label{eq:normAh}
	\|\alpha\|_{\mathscr A}:= \left( \int_Q \left\{|\nabla_s \alpha|^2 +\alpha^2\right\}\, \d \textup{Vol}\right)^{\frac12},
\end{equation}
where $\d \textup{Vol}=\sqrt{g}\,\d \theta \d\phi =r(R+r\cos\theta)\d \theta \d\phi$ is the area element induced by the metric $g$ (see Appendix A).
\begin{rem} 
Owing to definition \eqref{def:Ah}, this choice of norm yields $	(\mathscr A_0,\|\cdot \|_\mathscr{A})=H^1_{per}(Q;\textup{Vol}).$ In the remainder of this section, we will alternate between the notations $\mathscr A_0$ and $H^1_{per}(Q)$, depending on the context.
\end{rem} 
Owing to Proposition \ref{prop:alp}, the map $\Phi:\alpha \mapsto \eu\cos \alpha +\ev \sin \alpha $  defines a bijection $ \Phi:\mathscr A/2\pi \Z \to \HT$, and by definition \eqref{def:ind} we have $\Phi^{-1}[\n]\subset \mathscr A_{h(\n)}$.

The Euler-Lagrange equation for the one-constant approximation \eqref{eq:oneconst} can be obtained, of course, by setting $K_1=K_2=K_3=\kappa$ in the corresponding equation for the full energy (see Appendix C). We prefer, though, to derive it from \eqref{eq:oneconst}, which is shorter and more direct. The equations, in the case of the sphere and the cylinder, were derived in \cite{NapVer10, NapVer12L}. Since on $\bbT$ every geometric quantity can be computed explicitly (see Appendix A), we first reduce \eqref{eq:oneconst} to a simpler form.

\begin{lemma} The energy $W_\kappa$, for $\Sigma=\mathbb T$ has the explicit representation
\begin{equation}
\label{eq:expltorus}
	W_\kappa(\alpha)=\frac \kappa 2\int_Q \left\{|\nabla_s \alpha|^2 +\eta\cos(2\alpha)\right\} \dvol  + \kappa \pi^2 \left(\frac{2-b^2}{\sqrt{b^2-1}}+2b\right),
\end{equation}
where $\eta:=\frac{c_1^2-c_2^2}{2}$ and $b:=\frac Rr$.
\end{lemma}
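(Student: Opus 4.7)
The plan is to start from formula \eqref{eq:oneconst} of the one-constant approximation in terms of the deviation $\alpha$, substitute the torus-specific geometric quantities, and reduce everything to elementary trigonometric integrals. First I would specialise the orthonormal frame to the one induced by the parametrisation $X$, i.e.\ $\eu=\partial_\theta X/r$ and $\ev=\partial_\phi X/(R+r\cos\theta)$. A direct Gauss-formula computation shows that meridians are geodesics, so $\kappa_1=0$, while the parallels have geodesic curvature $\kappa_2=-\sin\theta/(R+r\cos\theta)$. Thus the spin connection takes the explicit form
\[
\mathbb A=-\kappa_2\ev=\frac{\sin\theta}{R+r\cos\theta}\ev,\qquad |\mathbb A|^2=\frac{\sin^2\theta}{(R+r\cos\theta)^2},\qquad \divs\mathbb A=0,
\]
the last equality following immediately from the coordinate expression $\mathbb A^\phi=\sin\theta/(R+r\cos\theta)^2$, $\mathbb A^\theta=0$ and $\sqrt{\bar g}=r(R+r\cos\theta)$.

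Next I would expand $|\nabla_s\alpha-\mathbb A|^2=|\nabla_s\alpha|^2-2(\nabla_s\alpha,\mathbb A)_g+|\mathbb A|^2$ and show that the mixed term integrates to zero on $Q$. In coordinates,
\[
\int_Q(\nabla_s\alpha,\mathbb A)_g\dvol=r\int_Q\frac{\sin\theta}{R+r\cos\theta}\partial_\phi\alpha\,\d\theta\,\d\phi.
\]
For $\alpha\in\mathscr A_h$ (so $\alpha(\theta,2\pi)-\alpha(\theta,0)=2\pi h_\phi$), Fubini plus integration in $\phi$ gives the factor $2\pi h_\phi$, multiplied by $\int_0^{2\pi}\sin\theta/(R+r\cos\theta)\,\d\theta$, which vanishes by the $\theta\mapsto -\theta$ symmetry of the torus. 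Hence the mixed term disappears and the variable part of the energy reduces to
\[
\frac\kappa2\int_Q\left\{|\nabla_s\alpha|^2+\tfrac12(c_1^2-c_2^2)\cos(2\alpha)\right\}\dvol=\frac\kappa2\int_Q\left\{|\nabla_s\alpha|^2+\eta\cos(2\alpha)\right\}\dvol,
\]
since $\eta=(c_1^2-c_2^2)/2$.

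It remains to identify the constant $\kappa\pi^2((2-b^2)/\sqrt{b^2-1}+2b)$ with
\[
\mathcal C:=\frac\kappa4\int_\Sigma(c_1^2+c_2^2)\dvol+\frac\kappa2\int_\Sigma|\mathbb A|^2\dvol.
\]
Using $c_1=1/r$, the first piece gives $\kappa\pi^2 b$ plus $\tfrac{\kappa\pi r}{2}\int_0^{2\pi}\cos^2\theta/(R+r\cos\theta)\,\d\theta$, while the third piece gives $\kappa\pi r\int_0^{2\pi}\sin^2\theta/(R+r\cos\theta)\,\d\theta$. All three planar integrals reduce via the standard identity $\int_0^{2\pi}\d\theta/(R+r\cos\theta)=2\pi/\sqrt{R^2-r^2}$ combined with the algebraic decompositions
\[
\frac{\cos\theta}{R+r\cos\theta}=\frac1r-\frac{R}{r(R+r\cos\theta)},\qquad \frac{\cos^2\theta}{R+r\cos\theta}=\frac{\cos\theta}{r}-\frac{R}{r^2}+\frac{R^2}{r^2(R+r\cos\theta)}.
\]
A short algebraic manipulation then collapses the resulting expression, in the variable $b=R/r$, into the stated form $\kappa\pi^2((2-b^2)/\sqrt{b^2-1}+2b)$.

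The main obstacle is the vanishing of the cross term, which requires keeping careful track of the boundary contributions on $\partial Q$ for $\alpha\in\mathscr A_h$ with $h\neq 0$; the rest is a bookkeeping exercise reducing standard integrals. Once the cross term is disposed of, the identification of the constant is purely computational and relies only on the closed-form evaluation of trigonometric integrals on the torus.
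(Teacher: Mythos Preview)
Your proof is correct and follows essentially the same route as the paper: start from \eqref{eq:oneconst}, expand $|\nabla_s\alpha-\mathbb A|^2$, kill the cross term via Fubini and the oddness of the $\theta$-integrand (the paper does exactly this, using $\int_0^{2\pi}\partial_\phi\alpha\,\d\phi=2\pi h_\phi$), and then evaluate the geometric constant. The only cosmetic difference is that the paper shortcuts the evaluation of $\tfrac14\int(c_1^2+c_2^2)\,\dvol$ by invoking Gauss--Bonnet ($\int c_1c_2\,\dvol=0$) to rewrite it as the Willmore functional $\int\big(\tfrac{c_1+c_2}{2}\big)^2\dvol=\pi^2 b^2/\sqrt{b^2-1}$, whereas you compute the $c_1^2$ and $c_2^2$ pieces directly; both lead to the same closed form.
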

\begin{proof}
Let $\alpha \in \mathscr A_h$, $h=(h_\theta,h_\phi)$. Since $\mathbb A =\frac{\sin \theta}{R+r\cos \theta}\ev$,
\begin{align*}
	\int_Q \nabla_s \alpha \cdot \mathbb A\, \dvol &= \int_0^{2\pi}\int_0^{2\pi} \partial_\phi \alpha(\theta,\phi) \frac{\sin \theta}{R+r\cos \theta} r(R+r\cos \theta)\, \d\theta\,\d\phi\\
	&=\int_0^{2\pi} \left[\int_0^{2\pi} \partial_\phi \alpha(\theta,\phi)\,\d\phi\right] r\sin\theta\,\d\theta 
	=2\pi r h_\phi \int_0^{2\pi} \sin\theta\,\d\theta =0.
\end{align*}
Thus, letting $b=R/r$, 
\begin{align}
	\int_Q |\nabla_s \alpha -\mathbb A|^2 \dvol &= 	\int_Q |\nabla_s \alpha|^2 \dvol + 	\int_Q |\mathbb A|^2 \dvol -2	\int_Q \nabla_s \alpha \cdot \mathbb A\, \dvol \nonumber\\
	&=\int_Q |\nabla_s \alpha|^2 \dvol + 	2\pi\int_0^{2\pi} \frac{r\sin^2 \theta}{R+r\cos \theta} \d \theta 
	\stackrel{\eqref{eq:integr1}}{=}\int_Q |\nabla_s \alpha|^2 \dvol + 4\pi^2\left(b - \sqrt{b^2-1}\right).\label{eq:p1}
\end{align}
Recall that (by Gauss-Bonnet Theorem or by direct computation)
\begin{equation}
\label{eq:gbt}
	\int_{\mathbb T}K\, \dvol = \int_Q c_1c_2\,\dvol=0.
\end{equation}
Using the value of Willmore's functional computed in Lemma \ref{lemma:expl} we get
\begin{equation}
\label{eq:p2}
		\int_Q \frac{c_1^2+c_2^2}{4}\dvol
		\stackrel{\eqref{eq:gbt}}{=}\int_Q \left(\frac{c_1 + c_2}{2}\right)^2\dvol= \pi^2  \left(\frac{b^2}{\sqrt{b^2-1}}\right).
\end{equation}
Substituting \eqref{eq:p1} and \eqref{eq:p2} into \eqref{eq:oneconst} we obtain 
\[
		W_\kappa(\alpha)=\frac \kappa 2\int_Q \left\{|\nabla_s \alpha|^2 +\frac{c_1^2-c_2^2}{2}\cos(2\alpha)\right\}\, \dvol  + \kappa \pi^2 \left(\frac{b^2}{\sqrt{b^2-1}}+2b - 2\sqrt{b^2-1}\right),
\]
using \eqref{eq:ci} and simplifying the last term, we get \eqref{eq:expltorus}.
\end{proof}

\begin{lemma}
\label{lemma:el} 
The Euler-Lagrange equation of \eqref{eq:expltorus} is
 \begin{equation}
\label{eq:el}
	\Delta \alpha +\frac12(c_1^2-c_2^2)\sin (2\alpha)=0.
\end{equation}
 \end{lemma}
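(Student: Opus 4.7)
The plan is to derive \eqref{eq:el} as the stationarity condition of $W_\kappa$ on the affine space $\mathscr A_h$. Since $\mathscr A_h = \alpha_0 + \mathscr A_0$ for any fixed $\alpha_0 \in \mathscr A_h$, and $\mathscr A_0 = H^1_{per}(Q)$ is a vector space, the admissible perturbations are $\psi \in H^1_{per}(Q)$. So I would take $\alpha \in \mathscr A_h$ stationary for $W_\kappa$, pick smooth $\psi$ which is $2\pi$-periodic in both variables, and form $\alpha + t\psi \in \mathscr A_h$ for $t\in \R$.

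The first variation splits into a Dirichlet part and a potential part. Differentiating the explicit formula \eqref{eq:expltorus} under the integral (the additive constant disappears), one gets
\begin{equation*}
 0 = \frac{\d}{\d t}\bigg|_{t=0} W_\kappa(\alpha+t\psi)
 = \kappa\int_Q \bigl\{ g(\nabla_s \alpha,\nabla_s \psi) -\eta\sin(2\alpha)\,\psi\bigr\}\dvol,
\end{equation*}
using that $\frac{\d}{\d t}\cos(2\alpha+2t\psi)|_{t=0} = -2\sin(2\alpha)\psi$ and absorbing the factor $2$ into the $\frac{\kappa}{2}$ prefactor.

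Next I would integrate the Dirichlet term by parts via \eqref{eq:int_part} applied on $Q$ (viewed as a parametrization of $\bbT$). Because $\bbT$ has no boundary, equivalently because $\psi$ has matching periodic traces on the opposite sides of $Q$ and the metric coefficients $r(R+r\cos\theta)$ and $g^{ij}$ are themselves $2\pi$-periodic in $(\theta,\phi)$, the boundary contributions on opposite edges of $Q$ cancel pairwise. This yields
\begin{equation*}
 -\kappa\int_Q \bigl\{\Delta\alpha + \eta\sin(2\alpha)\bigr\}\psi\,\dvol = 0
 \qquad \forall\,\psi \in H^1_{per}(Q).
\end{equation*}
Since smooth periodic $\psi$ are dense in $L^2(Q;\dvol)$, the fundamental lemma of the calculus of variations forces $\Delta\alpha + \eta\sin(2\alpha)=0$ a.e., and recalling $\eta = \frac12(c_1^2-c_2^2)$ gives \eqref{eq:el}.

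The only point requiring a little care is the boundary-term cancellation when integrating by parts on $Q$: one must check that the conormal derivative $\d\alpha(N)$ paired against the trace of $\psi$ indeed cancels on $\{\theta=0\}$ versus $\{\theta=2\pi\}$ and on $\{\phi=0\}$ versus $\{\phi=2\pi\}$. This follows from the periodicity of $g$, of $\alpha_0$'s variation (i.e.\ $\psi$), and from $\alpha$ itself being in $\mathscr A_h$ (so that $\nabla_s\alpha$ has matching periodic traces even though $\alpha$ does not). Apart from this bookkeeping, the derivation is a direct application of the standard variational principle.
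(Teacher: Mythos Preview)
Your argument is correct and follows essentially the same route as the paper: compute the first variation of \eqref{eq:expltorus} against test functions in $\mathscr A_0 = H^1_{per}(Q)$, integrate the Dirichlet term by parts, and invoke the fundamental lemma. Your treatment is in fact slightly more careful than the paper's in that you explicitly justify the vanishing of the boundary contributions (the paper simply writes the post-integration-by-parts line without comment), correctly observing that although $\alpha\in\mathscr A_h$ need not be periodic, $\nabla_s\alpha$ is.
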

 \begin{proof}
 In order to find the Euler-Lagrange equation of \eqref{eq:oneconst}, we compute the first variation in the direction $\omega\in \mathscr A_0$
\begin{align*}
	 \frac{\d}{\d t}W_\kappa(\alpha+t\omega)\Big|_{t=0} &= \frac{\d}{\d t}\frac \kappa2 \int_Q |\nabla_s (\alpha +t\omega)|^2 +\frac12(c_1^2-c_2^2)\cos(2\alpha +2t\omega)\, \dvol\Big|_{t=0}\\
	 &= \kappa \int_Q (\nabla_s \alpha)\cdot \nabla_s \omega -\frac12(c_1^2-c_2^2) \sin (2\alpha)\omega\, \dvol\\
	 &=-\kappa \int_Q \divs (\nabla_s \alpha) \omega -\frac12(c_1^2-c_2^2) \sin (2\alpha)\omega\, \dvol,
\end{align*}	 
which, after integration by parts, yields \eqref{eq:el}.
\end{proof}
We compute also the second variation, in the direction $\omega$
\begin{equation}
\label{eq:secvar}
	 \frac{d^2}{dt^2}W_\kappa(\alpha+t\omega)\Big|_{t=0}	 = \kappa \int_Q | \nabla_s \omega|^2 -(c_1^2-c_2^2) \cos (2\alpha)\omega^2\, \dvol.
\end{equation}

\begin{proposition}
\label{prop:stabil}
Let $b:=R/r$. There exists $b^* \in (2/\sqrt 3 ,2]$ such that the constant values $\alpha=\pi/2+ m\pi$, $m\in \Z$, are local minimizers for $W_\kappa$ in $\mathscr A_0$ if and only if $b \geq b^*$. Moreover, if $b \geq 2$, there exists no non-constant solution $w$ to \eqref{eq:el} such that
\begin{equation}
\label{eq:nononconst}
	\frac{\pi}{2} +m\pi \leq w \leq \frac \pi 2 +(m+1)\pi.
\end{equation}
\end{proposition}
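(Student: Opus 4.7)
The proposition splits into two independent assertions: (i) the existence of a threshold $b^*\in(2/\sqrt3,2]$ for local minimality of the constant critical points $\alpha\equiv\pi/2+m\pi$, and (ii) for $b\ge 2$, the non-existence of non-constant solutions to \eqref{eq:el} whose range is confined to the strip of width $\pi$ around a stable constant.

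For (i), I would first note that $\alpha\equiv\pi/2+m\pi$ is always a critical point of $W_\kappa$ (it trivially satisfies \eqref{eq:el} since $\sin(2\alpha)=0$). Evaluating the second variation formula \eqref{eq:secvar} at $\alpha=\pi/2+m\pi$, where $\cos(2\alpha)=-1$, yields
\[
\frac{d^2}{dt^2}W_\kappa(\alpha+t\omega)\bigg|_{t=0}=\kappa\int_Q\bigl\{|\nabla_s\omega|^2+(c_1^2-c_2^2)\omega^2\bigr\}\,\dvol.
\]
Using the explicit values of the principal curvatures (Appendix A), $c_1=1/r$ and $|c_2|\le 1/(R-r)$, the weight $c_1^2-c_2^2$ is non-negative pointwise exactly when $b\ge 2$, and is strictly positive a.e.\ then. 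This gives positive definiteness of the second variation, so $\alpha\equiv\pi/2+m\pi$ is a strict local minimizer when $b\ge 2$; consequently $b^*\le 2$. For the lower bound $b^*>2/\sqrt 3$, I would test the second variation on the constant function $\omega\equiv 1\in\mathscr A_0$: using the identity $\int_Q(c_1^2-c_2^2)\,\dvol=4\pi^2 b(2-b/\sqrt{b^2-1})$ (deducible from Lemma \ref{lemma:expl} or a direct computation), this integral is negative when $b<2/\sqrt3$, ruling out local minimality in that range. Finally, monotonicity of the first eigenvalue of the relevant Sturm-Liouville operator in the parameter $b$ — following from the pointwise monotonicity in $b$ of the coefficient $b(b+2\cos\theta)/(b+\cos\theta)$ obtained after expressing the quadratic form on the fixed reference domain $Q$ — ensures that the stability property is preserved as $b$ increases, so the "iff" holds with a single threshold $b^*\in(2/\sqrt3,2]$.

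For (ii), up to shifting $w$ by $m\pi$ I may take $m=0$ and set $v:=w-\pi\in[-\pi/2,\pi/2]$. Since $\sin(2w)=\sin(2v)$, equation \eqref{eq:el} becomes
\[
\Delta v=-\tfrac12(c_1^2-c_2^2)\sin(2v).
\]
Testing against $v\in\mathscr A_0$ and integrating by parts (using periodicity on $Q$) yields the key identity
\[
\int_Q|\nabla v|^2\,\dvol=\tfrac12\int_Q(c_1^2-c_2^2)\sin(2v)\,v\,\dvol.
\]
Comparing energies with the constant $\pi/2$, I would then use the representation
\[
W_\kappa(w)-W_\kappa(\pi/2)=\tfrac\kappa2\int_Q|\nabla w|^2\,\dvol+\tfrac\kappa4\int_Q(c_1^2-c_2^2)\bigl(1+\cos(2w)\bigr)\,\dvol,
\]
where both integrands are non-negative when $b\ge 2$ and $\pi/2\le w\le3\pi/2$; equality $W_\kappa(w)=W_\kappa(\pi/2)$ forces $|\nabla w|\equiv 0$ and $\cos(2w)\equiv-1$ a.e., which by smoothness of $w$ (elliptic regularity applied to \eqref{eq:el}) gives $w\equiv\pi/2$ or $w\equiv3\pi/2$.

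The main obstacle is to upgrade this minimality analysis to an exclusion of all non-constant critical points (not merely of non-constant minimizers). To this end I would exploit the strict inequality $\sin(2v)v<2v^2$ away from $v=0$ together with the strict positivity of $c_1^2-c_2^2$ (a.e.\ for $b\ge 2$, and everywhere for $b>2$) to conclude that the identity displayed above forces $v\equiv 0$: combined with the positivity $F_{\pi/2}(\varphi)>0$ on $\mathscr A_0\setminus\{0\}$ from part (i), the inequality $\int|\nabla v|^2\le\int(c_1^2-c_2^2)v^2\,\dvol$ saturates only for $v\equiv 0$, contradicting the assumption that $w$ was non-constant. Completing this rigidity step — in particular getting the strict-inequality argument to close without extra regularity assumptions on $w$ — is the delicate part of the proof.
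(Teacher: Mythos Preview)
For part (i), your approach is essentially the paper's: evaluate the second variation \eqref{eq:secvar} at $\alpha=\pi/2$, observe that $c_1^2-c_2^2=\frac{b}{r^2(b+\cos\theta)^2}(b+2\cos\theta)\ge 0$ pointwise precisely when $b\ge 2$, test on constant $\omega$ to get the lower bound $b^*>2/\sqrt3$, and invoke pointwise monotonicity of the weight $(c_1^2-c_2^2)\,\dvol=\frac{b(b+2\cos\theta)}{b+\cos\theta}\,\d\theta\,\d\phi$ in $b$ to obtain a single threshold.

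For part (ii), your argument has a genuine gap, and it originates in the choice of centering. You set $v=w-\pi$, centering at the \emph{unstable} constant; this yields $\Delta v=-\tfrac12(c_1^2-c_2^2)\sin(2v)$ and, after testing with $v$,
\[
\int_Q|\nabla_s v|^2\,\dvol=\tfrac12\int_Q(c_1^2-c_2^2)\sin(2v)\,v\,\dvol,
\]
where both sides are non-negative --- so nothing forces them to vanish. Your attempt to close the gap by combining $\sin(2v)v<2v^2$ with the positivity of $F_{\pi/2}$ from part (i) does not work: $F_{\pi/2}(\varphi)>0$ says only that $\int|\nabla_s\varphi|^2>-\int(c_1^2-c_2^2)\varphi^2$, which is a triviality (the right-hand side is $\le 0$). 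It does \emph{not} give a lower bound $\int|\nabla_s\varphi|^2\ge\int(c_1^2-c_2^2)\varphi^2$, so there is nothing to contradict your inequality $\int|\nabla_s v|^2\le\int(c_1^2-c_2^2)v^2$. The energy-comparison detour you take first is correct but, as you note yourself, only establishes global minimality of $\pi/2$ in the strip, not the absence of non-constant critical points.

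The paper's route is more direct and hinges on centering at the \emph{stable} constant: with $v:=\pi/2-w$ one gets the opposite sign, $\Delta v=+\tfrac12(c_1^2-c_2^2)\sin(2v)$, and testing with $v$ gives
\[
-\int_Q|\nabla_s v|^2\,\dvol=\tfrac12\int_Q(c_1^2-c_2^2)\sin(2v)\,v\,\dvol.
\]
Now the left-hand side is $\le 0$ while the right-hand side is $\ge 0$ (using $c_1^2-c_2^2\ge 0$ and $\sin(2v)v\ge 0$ on the range of $v$), so both sides vanish, forcing $\nabla_s v\equiv 0$ and $\sin(2v)v\equiv 0$ a.e.; hence $v$ is one of the constants $0,\pm\pi/2$. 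No appeal to the second variation is needed. The moral is that the sign produced by centering at the stable value is exactly what makes the test-against-$v$ identity rigid; centering at the unstable value, as you do, destroys this.
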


\begin{proof}
Owing to the periodicity of the functions involved, it is not restrictive to assume $m=-1$. By \eqref{eq:secvar}, the second variation of $W_\kappa$, in $\alpha=\pi/2$, in the direction $\omega\in \mathscr A_0$, is  positive if and only if
\begin{equation}
\label{eq:secondvar}
	 \int_Q |\nabla_s\omega|^2 +(c_1^2-c_2^2)\omega^2\, \dvol >0.
\end{equation}
Let $b=R/r>1$, since 
\begin{equation}
\label{eq:cucd}
	c_1^2-c_2^2 \stackrel{\eqref{eq:ci}}{=} \left(\frac{1}{r^2}- \frac{\cos^2\theta}{(R+r\cos\theta)^2}\right) =\frac{b}{r^2(b+\cos \theta)^2}(b+2\cos \theta),
\end{equation}
we see immediately that if $b\geq 2$ then $c_1^2-c_2^2 \geq 0$ everywhere in $Q$, and $c_1^2-c_2^2 = 0$ if and only if $b=2$ and $\theta=\pi$. Therefore, if $b\geq 2$, the integral in \eqref{eq:secondvar} is nonnegative for all $\omega \in \mathscr A_0$  (equal to zero if and only if $\omega=0$) and we can conclude that the stationary point $\alpha =\pi/2$ is a local minimum.  Restricting to constant variations $\omega$, \eqref{eq:secondvar} is satisfied if and only if
\[	
	0 <  \int_Q (c_1^2-c_2^2)\, \dvol = 2\pi b \int_0^{2\pi} \frac{b+2\cos \theta}{b+\cos \theta} \, d\theta =4\pi^2b \left(2- \frac{b}{\sqrt{b^2-1}}\right)
\]	
(see Appendix B for the integration formula), that is, if and only if $b> 2/\sqrt 3$. If $b= 2/\sqrt 3$, then all configurations with constant angle $\alpha(x)=\bar \alpha$ have the same energy, while for $b<2/\sqrt 3$, $W_\kappa(\alpha\equiv 0) < W_\kappa(\alpha\equiv \pi/2)$. The uniqueness of the bifurcation point $b^*$ follows from the monotonicity of $(c_1^2-c_2^2)\text{Vol}$ with respect to $b$:
$$ \frac{\partial}{\partial b}(c_1^2-c_2^2)\text{Vol} = \frac{\partial}{\partial b}\left\{ \frac{b^2+2b\cos \theta}{b+\cos \theta}\right\} = 1+ \frac{\cos^2 \theta}{(b+\cos \theta)^2} >0,\qquad \forall\, \theta\in [0,2\pi],\quad \forall\,b>1.$$   
The proof of the last step of the statement of Proposition \ref{prop:stabil} is inspired by \cite[Theorem 2.4]{ChenEtal04}. Assume that $b\geq 0$ and let $w$ be a solution to \eqref{eq:el}, satisfying \eqref{eq:nononconst} for $m=-1$. Then $v(x):=\pi/2 -w(x)$ satisfies
\begin{equation}
\label{eq:chen}
	\Delta v = -\Delta w =\frac12 (c_1^2-c_2^2)\sin(2w)
		=\frac12 (c_1^2-c_2^2)\sin(\pi -2v)=\frac12 (c_1^2-c_2^2)\sin(2v).   
\end{equation}
Multiplying the first and the last member of \eqref{eq:chen} by $v$, and integrating on $Q$ with respect to $\dvol$, after integration by parts we obtain
\begin{align*}
	-\int_Q |\nabla_s v|^2\, \dvol &= \int_Q \frac12 (c_1^2-c_2^2)\sin(2v)v\, \dvol
		\stackrel{\eqref{eq:cucd}}{\geq} \frac{b(b-2)}{2(b+1)}\int_Q \sin(2v)v\, \d\theta\,\d\phi  
			\stackrel{\eqref{eq:nononconst}}{\geq} 0.
\end{align*}	   
Thus,
\[ 
	\int_Q |\nabla_s v|^2 \,\dvol= 0,\quad \text{and}\quad \int_Q \sin(2v)v\, \d\theta\,\d\phi=0,
\]	
implying $v\equiv 0$, $v\equiv - \pi/2$ or $v\equiv \pi/2$, as we wanted to prove. 
\end{proof}
\bigskip

In order to find a numerical minimizer of $W$, we study the $L^2$-gradient flow of \eqref{eq:oneconst}, that is, we want to find $\alpha \in C^0([0,+\infty);\mathscr A)$ such that
\begin{equation}  
\label{eq:gradient}
	\partial_t \alpha = \kappa\Delta \alpha +\frac \kappa2(c_1^2-c_2^2)\sin (2\alpha),\quad \hbox{on } \mathbb{R}^2\times (0,+\infty)
\end{equation}
with suitable initial data $\alpha_0 \in \mathscr A$. This equation can also be obtained directly from \eqref{eq:gfn}, through the substitution $\n=\eu\cos\alpha +\ev\sin\alpha$, by using Lemma \ref{lem:rough_alpha} and recalling that $\Div \mathbb{A} = 0$, as we detail at the end of Appendix A. As above, denote $\Phi:\alpha \mapsto \n=\eu\cos\alpha +\ev\sin\alpha.$ Since the index of a vector field $h(\Phi[\alpha])$ is invariant under homotopy, if $\alpha_0 \in \mathscr A_h$, then $\alpha(t)\in \mathscr A_h$ for all $t>0$. The spaces $\mathscr A_h$ (see \eqref{def:Ah}) are constructed to take care of the correct boundary conditions, which require some attention, since in general we cannot expect a periodic solution.

Exploiting the affine structure of $\mathscr A$,  for any $h\in \Z^2$, for any fixed $\psi_h \in \mathscr A_h$, it holds $\mathscr A_h=\mathscr A_0 +\psi_h$, i.e., any $\alpha \in \mathscr A_h$ can be decomposed as
\[ 
	\alpha(x) = u(x) + \psi_h(x), \qquad \text{with }u\in \mathscr A_0.
\]	
Using the decomposition $ \alpha(t,x) = u(t,x) + \psi_h(x)$, we see that problem \eqref{eq:gradient} is equivalent to finding $u\in C^0([0,+\infty);\mathscr A_0)$ such that
\begin{equation}
\label{eq:gradient2}
	\partial_t u - \kappa\Delta u= \kappa\Delta \psi_h +\frac \kappa2(c_1^2-c_2^2)\sin (2u +2\psi_h) \hbox{ on } Q\times (0,+\infty),
\end{equation}
with initial condition $u_0 \in \mathscr A_0$ and where $h\in \Z^2$ is the constant degree of the mappings $\Phi[\alpha(t)]$. Equation \eqref{eq:gradient2} can be further simplified by choosing a $\Delta$-harmonic function $\psi_h$, so that the term $\kappa \Delta \psi_h$ vanishes. 

\begin{lemma} 
\label{lemma:harmonic}
Let $h:=(h_\theta,h_\phi)\in \Z^2$, and let $b=R/r$, where $R>r>0$ are the radii of the torus, as in \eqref{eq:paramtorus}.  Define
\begin{equation}
\label{eq:psih}
	\psi(\theta,\phi):=h_\theta\sqrt{b^2-1}\int_0^\theta \frac{1}{b+\cos(s)}\d s +h_\phi \phi.
\end{equation}
Then $\psi \in C^\infty(\R^2),$ $\psi_{|Q}\in \mathscr A_h $, and $\Delta\psi=0$.
\end{lemma}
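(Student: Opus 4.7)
The plan is to verify the three assertions in the lemma directly, by explicit computation; each reduces to a short calculation enabled by the fact that the integrand $1/(b+\cos s)$ is smooth (since $b>1$ forces $b+\cos s \geq b-1>0$) and that the standard integral $\int_0^{2\pi} ds/(b+\cos s) = 2\pi/\sqrt{b^2-1}$ is available (this can be obtained by the Weierstrass substitution $t=\tan(s/2)$, and is the same formula already invoked in the paper, cf.~\eqref{eq:p1}).

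First, smoothness of $\psi$ on $\R^2$ is immediate: the integrand is a $C^\infty$ function of $s$ bounded away from zero, so $\theta \mapsto \int_0^\theta ds/(b+\cos s)$ is $C^\infty$ on $\R$, and the remaining summand $h_\phi\phi$ is linear.

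Next, to check that $\psi_{|Q}\in \mathscr A_h$, I verify the two trace conditions in \eqref{def:Ah}. For the $\theta$-condition,
\[
\psi(2\pi,\phi)-\psi(0,\phi)=h_\theta\sqrt{b^2-1}\int_0^{2\pi}\frac{ds}{b+\cos s}=h_\theta\sqrt{b^2-1}\cdot\frac{2\pi}{\sqrt{b^2-1}}=2\pi h_\theta,
\]
uniformly in $\phi$, and for the $\phi$-condition, $\psi(\theta,2\pi)-\psi(\theta,0)=2\pi h_\phi$ by inspection.

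Finally, for harmonicity I use the Laplace--Beltrami operator in the chart $X$, whose coefficients are read off from the metric $g_{11}=r^2$, $g_{22}=(R+r\cos\theta)^2$, $g_{12}=0$ computed in Appendix A (so $\sqrt{\bar g}=r(R+r\cos\theta)$ and $g^{11}=1/r^2$, $g^{22}=1/(R+r\cos\theta)^2$). A direct calculation gives $\partial_\theta\psi = h_\theta\sqrt{b^2-1}\cdot r/(R+r\cos\theta)$ and $\partial_\phi\psi = h_\phi$, so that
\[
\sqrt{\bar g}\,g^{11}\partial_\theta\psi = h_\theta\sqrt{b^2-1},\qquad \sqrt{\bar g}\,g^{22}\partial_\phi\psi = \frac{r\,h_\phi}{R+r\cos\theta},
\]
the first independent of $\theta$ and the second independent of $\phi$; hence $\Delta\psi = \frac{1}{\sqrt{\bar g}}\bigl[\partial_\theta(\sqrt{\bar g}\,g^{11}\partial_\theta\psi)+\partial_\phi(\sqrt{\bar g}\,g^{22}\partial_\phi\psi)\bigr]=0$. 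There is no genuinely hard step here: the whole statement is engineered so that the antiderivative $\int_0^\theta ds/(b+\cos s)$ exactly absorbs the $\theta$-dependence of the density $\sqrt{\bar g}\,g^{11}$, while the $\sqrt{b^2-1}$ prefactor calibrates the $\theta$-trace jump to the prescribed winding number $h_\theta$.
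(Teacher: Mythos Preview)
Your proof is correct and follows essentially the same approach as the paper's: both verify smoothness from $b>1$, check the trace conditions via the integral $\int_0^{2\pi}\frac{ds}{b+\cos s}=\frac{2\pi}{\sqrt{b^2-1}}$, and confirm $\Delta\psi=0$ by direct computation with the explicit Laplace--Beltrami operator on the torus. Your harmonicity computation is in fact more explicit than the paper's (which just says ``a simple check'' using \eqref{eq:deltas}), and your closing remark about why the construction works is a nice addition.
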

\begin{proof}
Since $b>1$, $\psi \in C^\infty(\R^2)$ 
 and a simple check, using the explicit expression of the Laplace-Beltrami operator
  on the torus \eqref{eq:deltas} shows that $\Delta \psi=0$. In order to check
   that $\psi \in \mathscr A_h$, according to definition \eqref{def:Ah}, we use
    the $2\pi$-periodicity of $1/(b+\cos(s))$ and the explicit integration $\sqrt{b^2-1}=\int_0^{2\pi}1/(b+\cos(s))$ (see Appendix B)  to compute
\begin{align*}
	\psi(\theta+2\pi,\phi+2\pi) &=h_\theta\sqrt{b^2-1}\int_0^{\theta+2\pi} \frac{1}{b+\cos(s)}\d s +h_\phi (\phi+2\pi)\\
		&=h_\theta\sqrt{b^2-1}\int_0^{2\pi} \frac{1}{b+\cos(s)}\d s +h_\phi 2\pi + h_\theta\sqrt{b^2-1}\int_{2\pi}^{2\pi+\theta} \frac{1}{b+\cos(s)}\d s +h_\phi \phi \\
		&=h_\theta 2\pi +h_\phi 2\pi + h_\theta\sqrt{b^2-1}\int_{0}^{\theta} \frac{1}{b+\cos(s)}\d s +h_\phi \phi \\
		&=h_\theta 2\pi +h_\phi 2\pi + \psi(\theta,\phi).
\end{align*}
\end{proof}
We now have all the ingredients to state and prove the result regarding solutions to the $L^2$-gradient flow of the one-constant approximation of the surface elastic energy $W_\kappa$.  
\begin{theorem}
\label{Th:gf}
Let $X$ be the parametrization of the torus \eqref{eq:paramtorus} with radii $R,r$, embedded in $\R^3$. Let $\mathscr A_h, \mathscr A$ be the spaces defined in \eqref{def:Ah}, endowed with the norm \eqref{eq:normAh}. Then
\begin{itemize}
	\item[(0)] For all $h\in \Z^2$ there exists a classical solution $\alpha\in \mathscr A_h \cap C^\infty(Q)$ to the stationary problem
\begin{equation}
\label{eq:stationary}
	-\kappa\Delta \alpha =\frac \kappa2(c_1^2-c_2^2)\sin (2\alpha).
\end{equation}
	Moreover, $\alpha$ is odd on any line passing through the origin. 
	\item[(i)] \textup{(Weak well-posedness)} For any $\alpha_0\in \mathscr A$, for all $T>0$, there exists a unique 
		mild solution $\alpha$ to \eqref{eq:gradient} 	and it satisfies
 		\[
 			\alpha\in C^0([0,T);\mathscr A).
 		\]
		Moreover, if $\alpha_0 \in \mathscr A_h$, then $\alpha(t) \in \mathscr A_h$ for all $t>0$.
	\item[(ii)] \textup{(Strong well-posedness)} For any $m\in \N$, for any $\alpha_0\in H^{2m}(Q)\cap \mathscr A$, 	for all $T>0$, the unique solution $\alpha$ to \eqref{eq:gradient} satisfies 
		\begin{equation}
		\label{eq:regostrong}
			\alpha \in \bigcap_{k=0,\ldots,m}C^k([0,T];H^{2m-2k}(Q)).
		\end{equation}
		In particular, if $\alpha_0\in C^\infty(Q)\cap \mathscr A$, then $\alpha \in C^\infty([0,T]\times  Q)$.
	\item[(iii)] \textup{(A maximum principle)} Under the hypothesis of step (ii),	
		\begin{equation}
		\label{eq:unifbound}
			\alpha \in L^\infty(0,+\infty;\mathscr A)\quad \text{and}\quad \partial_t \alpha \in L^2(0,+\infty;L^2(Q)).
		\end{equation}
	\item[(iv)] \textup{(Long-time behaviour)} Define the omega-limit set of a solution $\alpha$ to \eqref{eq:gradient} by
		\begin{equation*}
			\omega(\alpha):=\left\{\alpha_\infty \in \mathscr A : \hbox{ there exists } t_n\nearrow +\infty \hbox{ with }
			\alpha(t_n)\to \alpha_\infty \hbox{ in } L^2(Q) \right\}.
		\end{equation*}
	 	Under the hypothesis of (ii), 
		the omega-limit set is nonempty and it is contained in the set 
		of solutions to \eqref{eq:el}, namely if 
		$\alpha_\infty\in \omega(\alpha)$ then $\alpha_\infty$ is a solution 
		of \eqref{eq:el}.
\end{itemize}
\end{theorem}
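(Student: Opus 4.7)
The plan is to reduce \eqref{eq:gradient} to a semilinear heat equation on the flat torus $Q$ via the decomposition furnished by Lemma \ref{lemma:harmonic}, and then apply classical parabolic theory together with the direct method. Writing $\alpha = u + \psi_h$ with $u \in \mathscr{A}_0 \simeq H^1_{per}(Q)$, the evolution \eqref{eq:gradient2} reduces to $\partial_t u = \kappa \Delta u + F(u)$ with
\[
F(u) := \tfrac{\kappa}{2}(c_1^2 - c_2^2)\sin(2u + 2\psi_h).
\]
Two features are crucial throughout: $F$ is uniformly bounded in $L^\infty(Q)$ and globally Lipschitz from $L^2(Q)$ to $L^2(Q)$ (since its $u$-derivative is bounded pointwise); and the map $u\mapsto u+2\pi$ leaves both the equation and the class $\mathscr{A}_h$ invariant.

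For part (0), I would minimize $W_\kappa$ on $\mathscr{A}_h/2\pi\mathbb{Z}$: after subtracting the bounded term $\int \eta\cos(2\alpha)\,\dvol$, the Dirichlet part is coercive on $H^1_{per}$ modulo constants (Poincaré), and the $2\pi\mathbb Z$-quotient removes the remaining shift freedom; weak lower semicontinuity is standard. Given a weak minimizer, elliptic bootstrap on \eqref{eq:el} (whose right-hand side is smooth and bounded in $\alpha$) upgrades it to $C^\infty(Q)$. For the oddness on lines through the origin, note that the equation and the class $\mathscr{A}_h$ are invariant under the symmetry $(\theta,\phi,\alpha)\mapsto -(\theta,\phi,\alpha)$: indeed $c_1^2-c_2^2$ depends on $\theta$ only through $\cos\theta$, the Laplace–Beltrami operator on $\bbT$ commutes with the reflection, and $\sin(2\alpha)$ is odd in $\alpha$. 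Restricting the minimization to the closed antisymmetric subclass therefore yields an odd minimizer.

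For (i) and (ii), I would use the Duhamel mild formulation
\[
u(t) = e^{t\kappa\Delta}u_0 + \int_0^t e^{(t-s)\kappa\Delta}F(u(s))\,ds
\]
in $C([0,T]; L^2(Q))$; the global Lipschitz bound on $F$ gives a unique global mild solution via Banach fixed point, proving (i). Smoothing of the periodic heat semigroup then upgrades $u$ to $C([0,T];H^1)$ as soon as $u_0\in H^1$. Differentiating the equation in time, $v := \partial_t u$ solves the linear parabolic equation $\partial_t v = \kappa \Delta v + F'(u)\,v$ with bounded coefficient, and iterated parabolic regularity yields \eqref{eq:regostrong}. Invariance of the discrete index $h(\Phi[\alpha(t)]) \in \mathbb Z^2$ along the $L^2$-continuous trajectory forces $\alpha(t)\in\mathscr A_h$ for all $t>0$.

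For (iii), test \eqref{eq:gradient2} against $\partial_t u$ to obtain the dissipation identity
\[
\frac{d}{dt} W_\kappa(\alpha(t)) + \|\partial_t \alpha(t)\|_{L^2(Q)}^2 = 0,
\]
which, integrated over $(0,\infty)$, gives both $\partial_t \alpha \in L^2(0,\infty;L^2)$ and a uniform bound on $\|\nabla_s \alpha(t)\|_{L^2}$ (since the bounded potential makes $W_\kappa$ bounded below). Poincaré on the periodic piece controls $\|u(t)-\bar u(t)\|_{L^2}$; the shift invariance $u\mapsto u+2\pi$ then allows us to choose at each time a representative with $\bar u(t)\in[0,2\pi]$, delivering $\alpha \in L^\infty(0,\infty;\mathscr A)$. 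For (iv), the uniform $\mathscr A$-bound together with compactness of $H^1\hookrightarrow L^2$ provides $L^2$-cluster points $\alpha_\infty$ for $t_n\to \infty$; since $\|\partial_t \alpha\|_{L^2}^2$ is integrable on $(0,\infty)$, a subsequence can be chosen so that $\partial_t \alpha(t_n)\to 0$ in $L^2$. Passing to the limit in the weak form of \eqref{eq:gradient}, using the $L^2$-continuity of $F$, shows that $\alpha_\infty$ satisfies \eqref{eq:el}. The main obstacle is the $L^2$-control in step (iii): energy dissipation alone bounds only the Dirichlet seminorm, and the Poincaré estimate must be combined with the translation symmetry to close the $\mathscr A$-bound.
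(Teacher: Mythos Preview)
Your overall architecture for (i), (ii) and (iv) matches the paper's, and your variational approach to (0) is a legitimate alternative to the paper's Schauder fixed-point construction on the $2$-odd subspace. In both approaches the oddness is obtained by working inside the closed subspace of $2$-odd functions; the key structural fact (which you use implicitly and the paper makes explicit) is that $\eta$, $\dvol$ and the coefficients of $\Delta$ are even under $(\theta,\phi)\mapsto(-\theta,-\phi)$ while $\psi_h$ is odd, so a critical point within the odd class automatically solves the full Euler--Lagrange equation.

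The genuine gap is in step (iii). Energy dissipation gives you a uniform bound on $\|\nabla_s u(t)\|_{L^2}$ and $\partial_t u\in L^2(0,\infty;L^2)$, but neither controls the spatial mean $\bar u(t)$: integrating the equation yields $\tfrac{d}{dt}\int_Q u\,\dvol=\int_Q F(u)\,\dvol$, which is merely bounded, and $\partial_t u\in L^2(0,\infty;L^2)$ only gives $|\bar u(T)-\bar u(0)|\le C\sqrt{T}$. Your proposed fix, shifting by $2\pi k(t)$ so that $\bar u(t)\in[0,2\pi)$, does not rescue the statement: the theorem asserts $\alpha\in L^\infty(0,\infty;\mathscr A)$ for the \emph{specific} mild solution constructed in (i), not for some time-dependent representative in $\mathscr A_h/2\pi\Z$. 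A jump in $k(t)$ destroys the time-continuity, and a single constant shift cannot absorb an unbounded drift.

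The paper closes this gap by a comparison argument that actually uses the output of step~(0): if $u^*$ is any smooth stationary solution, then $v_1:=u^*+m_1\pi$ and $v_2:=u^*+m_2\pi$ are again stationary (since $\sin$ is $\pi$-periodic up to sign in the right way), and for suitable integers $m_1\le m_2$ they sandwich $u_0$. The parabolic maximum principle then traps $u(t)$ between $v_1$ and $v_2$ for all $t\ge0$, giving a genuine $L^\infty(Q)$ bound uniform in time. This is why step~(0) is placed first and why part~(iii) is labelled ``a maximum principle'': the existence of stationary barriers, not the energy identity, is what prevents the drift.
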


\begin{proof}
The idea of the proof is that using the decomposition $\mathscr A_h= \mathscr A_0 +\psi_h$, we can reduce the problem of finding a solution to \eqref{eq:gradient} in $C^0([0,T];\mathscr A_h)$, with initial value $\alpha_0$, to the simpler problem of finding $u\in C^0([0,T];\mathscr A_0)$ such that
\begin{equation}
\label{eq:gradient3}
	\partial_t u - \kappa\Delta u= \frac \kappa2(c_1^2-c_2^2)\sin (2u +2\psi_h) \hbox{ on } Q\times (0,+\infty),
\end{equation}
and $u(0)=\alpha_0-\psi_h$. The term $\Delta \psi_h$ disappears by choosing the harmonic function $\psi_h$ defined in \eqref{eq:psih}. Therefore, through the proof, let $h\in\Z^2$ be fixed, and let $\psi_h$ be given by \eqref{eq:psih}. Moreover, in order to make the symmetry properties of the involved functions more visible, we redefine $Q:=(-\pi,\pi)\times (\pi,\pi)$.


\paragraph{Step (0)} Define the Hilbert space $H:=L^2(Q)$, with the scalar product 
$${( u,v )}_H:=\int_Q uv\, \dvol,\qquad u,v\in L^2(Q),$$
and denote the average of a function $u\in H$ by $\langle u\rangle:= \frac{1}{4\pi^2 R r}\int_Q u\, \dvol. $
Let $V:=\{v\in H^1_{per}(Q): \langle v\rangle=0\}$, then, by Wirtinger's inequality, the bilinear form $a:V \times V \to \R$
$$ a(u,v):=\kappa\int_Q \nabla_s u \cdot \nabla_s v\,\dvol$$
defines a scalar product on $V$, such that the induced norm is equivalent to the standard Sobolev norm of $H^1(Q)$ defined in \eqref{eq:normAh}. By Riesz-Fr\'echet representation Theorem \cite[Theorem 5.5]{Brezis2011}, for all $f\in H$ there exists a unique $u\in V$ such that
\begin{equation}
\label{eq:lax}
	a(u,v)={( f,v)}_H\qquad \forall\,v\in V
\end{equation}
and there exists a constant $C_a>0$, depending only on $Q,\kappa$, and on the ellipticity constant of $a$, such that ${\|u\|}_{\mathscr A}\leq C_a{\|f\|}_H$. Moreover (see, e.g., \cite[Section 9.6]{Brezis2011}), if $f\in H^m_{per}(Q)$, then $u\in H^{m+2}_{per}(Q)$; in particular, if $m>1$, then $u\in C^2(\overline{Q})$. For $f\in V$, the solution $u$ to \eqref{eq:lax} satisfies, for all $w \in H^1_{per}(Q)$
\begin{align*}
	\int_Q \nabla_s u \cdot \nabla_s(w-\langle w\rangle) \dvol &= \int_Q f(w-\langle w\rangle)\dvol\\
		&=\int_Q fw\, \dvol -\frac{1}{|Q|} \int w\,\dvol \int f\,\dvol\\
		&=\int_Q (f-\langle f\rangle)w\, \dvol,		
\end{align*}
that is,
$$ -\Delta u =f-\langle f\rangle\qquad \text{on }Q.$$
Let $\eta(x):=\frac{\kappa}{2}\left(c_1^2-c_2^2(x)\right)$, note that $\eta\in C^\infty_{per}(\overline Q)$. Let $f:H^1_{per}(Q) \to H^1_{per}(Q)$ be defined by $ f(u)(x):=\eta(x)\sin(2u(x)+2\psi_h(x)),$ and consider the operator $\mathcal T : H^1_{per}(Q) \to H^1_{per}(Q)$ which maps $v$ into the unique solution $u \in V$ to
$$-\Delta u =f(v)-\langle f(v) \rangle.$$
By a standard bootstrapping argument (see, e.g., \cite[Section 9.6]{Brezis2011}), $u\in C^\infty(\overline Q)$. In order to find a stationary solution to \eqref{eq:gradient3}, we need to find a fixed point $u^*=\mathcal T(u^*)$, such that $\langle f(u^*) \rangle=0$. 
We say that a function $F:Q\to \R$ is \emph{2-even} if $F(\theta,\phi)=F(-\theta,-\phi)$, and we say that it is \emph{2-odd} if $F(\theta,\phi)=-F(-\theta,-\phi)$, for all $(\theta,\phi)\in Q$. It is immediate to check that 
\begin{enumerate}
	\item if $F\in L^1(Q)$ is 2-odd, then $\int_Q F(\theta,\phi)\,\d\theta\,\d\phi=0$;
	\item if $F$ is 2-odd and $G$ is 2-even, then $FG$ is 2-odd;
	\item if $F$ is 2-odd and $\varphi:\R\to\R$ is odd, then $\varphi \circ F$ is 2-odd;
	\item if $F\in C^1(Q)$ is 2-odd (even), then $\partial_i F$ is 2-even (odd).
\end{enumerate}	
(To check the last property, note that a function is 2-odd (even) if and only if its restriction to a line passing by the origin is odd (even). Denote $x=(\theta,\phi)$, $\nb:=x/|x|$, then if $F$ is odd $\nabla F(x)\cdot \nb=-\nabla F(-x)\cdot \nb$, owing to the corresponding property for 1-d functions.) By the definitions of  $\dvol$ \eqref{eq:normAh}, $\psi_h$ \eqref{eq:psih}, $c_1,c_2$ \eqref{eq:ci}, and $\Delta$ \eqref{eq:deltas}, we see that $\eta$ and $\dvol$ are 2-even, $\psi_h$ is 2-odd, and if $u$ is 2-odd, then $f(u)$ and $\Delta u$ are 2-odd. Instead, if $\Delta u$ is 2-odd, we cannot conclude that $u$ is 2-odd. We resort to the projection $\mathcal P$ of a function onto its 2-odd part
$$ \mathcal Pu (\theta,\phi):=\frac{u(\theta,\phi)-u(-\theta,-\phi)}{2}.$$
Then, letting $Id$ be the identity operator in $H$,
$$ (Id-\mathcal P)u (\theta,\phi)=\frac{u(\theta,\phi)+u(-\theta,-\phi)}{2}$$ 
is 2-even. $\mathcal P$ is linear and continuous with respect to the topology of $H$:
$$ {\|\mathcal P v_1 -\mathcal Pv_2\|}_H\leq {\| v_1 -v_2\|}_H,\qquad \forall\,v_1,v_2\in H.$$
Note that for all $v\in V$ 
\begin{equation}
\label{eq:boundf}
	{\|f(v)\|}_H=\left(\int_Q (\eta \sin(2v+2\psi_h))^2\dvol\right)^{1/2}\leq {\|\eta\|}_H =:M.
\end{equation}
Let $\mathcal K:= \left\{ v\in V : {\|v\|}_{\mathscr A}\leq C_a M\right\}.$ The set $\mathcal K$ is a convex and nonempty subset of $H$, moreover, by Rellich-Kondrachov Theorem \cite[Theorem 9.16]{Brezis2011}, it is compactly embedded in $H$. The operator $\mathcal T \circ \mathcal P$ maps a function $v\in \mathcal K$ into the function $u\in  H$ which is the unique solution (in $V$) to
$$ -\Delta u =f(\mathcal P v) -\langle f(\mathcal P v)\rangle= f(\mathcal P v).$$
Moreover, by \eqref{eq:boundf}, $u\in \mathcal K$. The mapping $\mathcal T\circ \mathcal P$ is also continuous, with respect to the topology of $H$:
 \begin{align*}
 	{\|\mathcal T \circ \mathcal P (v_1) - \mathcal T \circ \mathcal P (v_2)\|}_{\mathscr A} 
		&\leq C_a{\|f(\mathcal P v_1)-f(\mathcal P v_2)\|}_H\\
		&=C_a\left(\int_Q (\eta \sin(2\mathcal P v_1+2\psi_h)-\eta \sin(2\mathcal Pv_2+2\psi_h))^2\dvol\right)^{1/2}\\
		&\leq C_a{\|\eta\|}_\infty\left(\int_Q (2\mathcal P v_1+2\psi_h- 2\mathcal Pv_2 -2\psi_h)^2\dvol\right)^{1/2}\\
		&\leq 2C_a{\|\eta\|}_\infty{\|v_1-v_2\|}_H. 		
\end{align*}	
By Schauder fixed point Theorem \cite[p. 56]{Zeidler86}, we conclude that there exists $u^*\in \mathcal K$ such that $u^* = \mathcal T \circ \mathcal P(u^*)$, that is
$$ -\Delta u^* = f(\mathcal P u^*).$$
Since $\Delta$ and $\mathcal P$ commute (owing to the symmetry of $\Delta$), we have
$$ 0 =(Id-\mathcal P)f(\mathcal P u^*)=-(Id -\mathcal P)\Delta u^* = -\Delta (Id-\mathcal P)u^*,$$
that is, we can decompose $u^*$ into a 2-odd and a 2-even function 
$$u^*=\mathcal P u^* + (Id-\mathcal P)u^*$$
such that
$$ -\Delta \mathcal P u^* = f(\mathcal P u^*),\qquad -\Delta(Id-\mathcal P) u^*=0.$$
By the strong maximum principle \cite[Section 6.4.2, Theorem 3]{Evans02} and the periodicity of $u^*$ on $Q$, $(Id-\mathcal P) u^*$ is constant. 
Since 
$$ \langle(Id-\mathcal P) u^*\rangle = \langle u^* \rangle - \langle \mathcal P u^*\rangle=0,$$
we conclude that $(Id-\mathcal P)u^*=0$, hence $\mathcal P u^* =u^*$. We have thus proved that 
$$ -\Delta  u^* =f( u^*).$$
The function $\alpha_k:=u^* +\psi_h +k\pi \in \mathscr A_h$ is a solution to the stationary problem for every $k\in \Z$. The regularity of $\alpha_k$ follows directly from the $C^\infty$ regularity of $u^*$ and $\psi_h$.


\paragraph{Step (i)} The Laplace-Beltrami operator  on the torus, defined in \eqref{eq:deltas}, is a linear second order differential operator, with $C^\infty$-regular and bounded coefficients. It is uniformly  elliptic, with ellipticity constant
$\mu:=\min\{1/r^2,1/(R-r)^2\}$. In order to prove existence and uniqueness of solutions to \eqref{eq:gradient3}, we exploit the powerful machinery of analytic semigroups, developed in \cite{Lunardi95}. We only need to show that our problem fits in the framework. 

Let $D(A):=H^2_{per}(Q)$. Note that $D(A)$ is dense in $H$ and in $H^1_{per}(Q)$, and that the realization of the Laplace-Beltrami operator $A:D(A)\to H$, $Au:=\kappa\Delta u$, is self-adjoint and dissipative. Therefore, $(A,D(A))$ is a sectorial operator (\cite[Proposition 2.2.1]{LLMP05}) and it generates the analytic semigroup $e^{tA}:H\to H$.  
 For all $u,v \in H^1_{per}(Q)$ 
\begin{align}
	{\|f(u)-f(v)\|}_{H}&\leq {\| \eta\|}_\infty {\|u-v\|}_H \leq C{\|u-v\|}_{\mathscr A}. \label{eq:lunardi2}
\end{align}
For $T>0$, a continuous function $u:(0,T]\to H^1_{per}(Q)$ such that $t\mapsto f(u(t)) \in L^1(0,T;H)$ is said to be a \emph{mild solution} of 
\begin{equation}
\label{eq:abstract}
	\partial_t u =Au+f(u),\qquad u(0)=u_0\in H,\qquad \text{on }(0,T),
\end{equation}
if 
$$ u(t)=e^{tA}u_0 +\int_0^te^{(t-s)A}f(u(s))\,\d s,$$
where integration is in the sense of Bochner (see, e.g. \cite[Chapter 3]{Showalter97}). By \cite[Theorem 7.1.3 (i) and Proposition 7.1.8]{Lunardi95}, \eqref{eq:boundf} and \eqref{eq:lunardi2}, for every initial datum $u_0\in H^1_{per}(Q)$, for every $T>0$, there exists a unique mild solution $u\in C^0([0,T];H^1_{per}(Q))$.  The winding number of the vector field $\n(t)=\cos(u(t)+\psi_h)\eu + \sin(u(t)+\psi_h)\ev$ is then $\mathscr W(\n(t))\equiv h$ along the flow.


\paragraph{Step (ii)} If $u_0 \in D(A)$ and $Au_0+f(u_0)\in H$, then 
\begin{equation}
\label{eq:maxreg1}
	u\in C^0([0,T];D(A))\cap C^1([0,T];H)
\end{equation}
 and $u$ solves \eqref{eq:abstract} pointwise, for all $t\in [0,T]$ \cite[Proposition 7.1.10 (iii)]{Lunardi95}. 

More in general, parabolic equations governed by a strongly uniformly elliptic operator $A$ with $C^\infty$-regular coefficients obey the following maximal regularity principle: the terms $\partial_t u$ and $Au$ have \emph{independently} the same regularity as $f(u)$, provided that the initial datum and the boundary conditions (if present) are smooth enough, see, e.g., \cite[Theorem 6, Section 7.1]{Evans02}, \cite[p.341--343]{Brezis2011}, or \cite{CanVes86}. In our case, since $f$ is Lipschitz-continuous and bounded, from \eqref{eq:maxreg1} we read that  $f(u)\in C^0([0,T];H^2(Q))\cap C^1([0,T];L^2(Q))$, and by the maximal regularity principle we obtain that $u\in C^0([0,T];H^4(Q))\cap C^1([0,T];H^2(Q))$. Iterating this process we obtain the regularity \eqref{eq:regostrong} and eventually, provided we choose an initial datum $u_0\in C^\infty(Q)\cap H^1_{per}(Q)$, for all $T>0$ we obtain a $Q$-periodic function $u\in C^\infty([0,T]\times \overline Q)$ (see, e.g., \cite[Theorem 7, Section 7.1]{Evans02}). Reconstruction of $\alpha$ is done as before by $\alpha(t,x):=u(t,x)+\psi_h(x)$.


\paragraph{Step (iii)}  Let $u\in C^2([0,T]\times \overline Q)$ be a solution to \eqref{eq:gradient3}, as in the previous step. We prove the uniform bound \eqref{eq:unifbound} by showing that there exists a constant $C>0$, independent of time, such that 
\begin{equation}
\label{eq:ubound2}
	\sup_{T>0} \|u(T)\|_\infty<C\qquad \text{and}\qquad   \sup_{T>0} \left\{ {\|\partial_t u\|}_{L^2(0,T;H)}+  {\|\nabla_su(T)\|}_H\right\}\leq C.
\end{equation}
Note that, if $u\in \mathscr A_0 \cap C^2(Q)$ has a local maximum in $x_0 \in \overline Q$, then $\nabla u(x_0)=0$ and $\Delta u(x_0)\leq 0$. We remark that the inequality is valid also in points belonging to $\partial Q$, owing to the periodicity of $u$. Since the coefficients of the second-order derivatives of the Laplace-Beltrami operator are positive,  $\Delta u(x_0)\leq 0$. 

Equipped with this regularity, we can use the maximum principle for parabolic semilinear problems \cite[Proposition 6.2.5]{LLMP05} to establish boundedness of $u$. Let $u_0\in C^2(\overline Q)$ be the initial datum for $u$. Let $u^*$ be a solution to the stationary problem \eqref{eq:stationary} as in Step (0). Since $u_0$ and $u^*$ are bounded, there exist $m_1,m_2\in\N$ such that
$$ u^*(x) +m_1\pi \leq u_0(x) \leq u^*(x) +m_2\pi,\qquad \forall\,x\in \overline Q.$$ 
Define $v_1(t,x):=u^*(x) +m_1\pi$, $v_2(t,x):=u^*(x) +m_2\pi$. Then $v_1$ and $v_2$ satisfy
\begin{equation*}
	\partial_t v_1 = Av_1 +f(v_1),\qquad \partial_t v_2 = Av_2 +f(v_2)\qquad \forall\,t\in [0,T].
\end{equation*}	
By \cite[Proposition 6.2.5]{LLMP05}, 
$$v_1(t,x)\leq u(t,x)\leq v_2(t,x)\qquad \text{for all}\quad t,x \in [0,T] \times \overline\Omega.$$ 
Since the estimate does not depend on $T$, we obtain the first half of \eqref{eq:ubound2}.

Regarding the second half, we take the scalar product (in $H$) of \eqref{eq:gradient3} times $\partial_t u$, obtaining
$$ \int_Q (\partial_t u(t))^2\,\dvol+ a(u(t),\partial_t u(t)) = \int_Q f(u(t))\partial_t u(t)\, \dvol.$$
By the linearity of $a$ and the regularity of $u$, integrating in time between $0$ and $T$ we get
\begin{equation}
\label{eq:ubound3}
	\int_0^T\int_Q (\partial_t u(t))^2\,\dvol\,\d t+ \frac12 a(u(T),u(T)) 
		= \frac12 a(u(0),u(0))+ \int_0^T\int_Q f(u(t))\partial_t u(t)\, \dvol\,\d t.
\end{equation}
Recalling the definition of $f(u)$ and exploiting its regularity, we compute
\begin{align*}
	\int_0^T\int_Q f(u(t))\partial_t u(t)\, \dvol\,\d t &= \int_Q\int_0^T f(u(t))\partial_t u(t)\,\d t\, \dvol\\
		&=\int_Q \eta \int_0^T \sin(2u(t)+2\psi_h)\partial_t u(t)\,\d t\, \dvol\\
		&=\int_Q \eta (\cos(2u(0)+2\psi_h)-\cos(2u(T)+2\psi_h))\, \dvol
\end{align*}	
and therefore
$$ \left|\int_0^T\int_Q f(u(t))\partial_t u(t)\, \dvol\,\d t\right| \leq \int_Q 2|\eta| \, \dvol \leq C.$$
Using this estimate in \eqref{eq:ubound3}, we get
$$
	2{\|\partial_t u\|}_{L^2(0,T;H)}^2+ \kappa{\|\nabla_s u(T)\|}_H^2 
		\leq \kappa{\|\nabla_s u_0\|}_H^2+ 2C.
$$
Since the estimate does not depend on $T$, we obtain the second half of \eqref{eq:ubound2}.


\paragraph{Step (iv)} Now, we come to the issue of the long-time behaviour. 
 First, note that the above regularity implies that the set
 $\left\{ \alpha(t), t\in (0,+\infty)\right\}$ is bounded in $H^1(Q)$,
 hence compact in $H$. As a consequence, we
 have that $\omega(\alpha)$ is a nonempty compact set of $H$. 
 Moreover, since by interpolation, $\alpha\in C^0(0,+\infty;H)$,
 a classical dynamical systems argument (see, e.g., \cite{haraux}) shows that 
 $\omega(\alpha)$ is connected in $H$. 
 Consider now an element $\alpha_\infty\in\omega(\alpha)$  and a sequence of times $t_n$ such that 
 $t_n\nearrow +\infty$ for $n\nearrow +\infty$ and $\alpha(t_n)\to \alpha_\infty$ in $H$.
  For any $t\ge 0$, 
 set $\alpha_n(t):=\alpha(t+t_n)$. Note that
 $\| \partial_t\alpha_n\|_{L^2(0,T;H)}\le \|\partial_t \alpha\|_{L^2(t_n,+\infty;H)}.$
Hence, we have that $\forall\, T>0$
$$ \lim_{n\nearrow +\infty}\big( \| \alpha_n-\alpha_\infty\|_{C^0(0,T;H)}
 + \|\partial_t\alpha_n \|_{L^2(0,T;H)} \big) = 0.$$
Thus, passing to the limit with respect to $n$ in \eqref{eq:gradient}, written for $\alpha_n$,
we immediately conclude that $\alpha_\infty$ is a solution of \eqref{eq:el}.
\end{proof}

\subsection{Comparison with the classical energy}
\label{sec:remarks}
In view of the results of this Section, it is worthwhile to compare the predictions of the so called {\itshape intrinsic energy} \eqref{eq:energy2doc} with the ones of the Napoli-Vergori energy \eqref{eq:napolioc} on a torus.
In particular, in \cite{LubPro92} it is found that the Euler-Lagrange equation for \eqref{eq:energy2doc} on a torus is $\Delta \alpha=0$ and simple explicit solutions are given by
\begin{equation}
\label{sol:simpl}
 \alpha(\theta,\phi)=m(\phi-\phi_0)+\theta_0
\end{equation}
for all constants $\theta_0,\phi_0\in \R$, for all $m\in \mathbb Z$. These solutions coincide with those characterised by winding numbers $(0,m)$ in the general family presented in Lemma \ref{lemma:harmonic}. 
Among the solutions \eqref{sol:simpl},  energy \eqref{eq:energy2doc} is minimized by choosing $m=0$ \cite{LubPro92}.
 The second variation of \eqref{eq:energy2doc} is 
\begin{equation*}
	\frac{\d^2}{\d t^2}W^{in}_\kappa(\alpha +t\omega)\big|_{t=0}= \kappa \int_\Sigma |\nabla_s \omega|^2\,\d S.
\end{equation*}
Since it is always nonnegative, and zero if and only if $\omega$ is constant, the conclusion is that every constant $\alpha \equiv\overline \alpha_0 \in \R$ is a global minimizer for $W^{in}_\kappa$, independently of the ratio $R/r$. The scenario depicted by the Napoli-Vergori energy \eqref{eq:napolioc} is quite different. In fact, the presence of the extrinsic term
related to the shape operator acts as a selection principle for equilibrium configurations. More precisely, when $R/r$ is sufficiently large (numerics indicate that the threshold ratio $b^*$ should be between $1.51$ and $1.52$) then 
(see Proposition \ref{prop:stabil}) the only constant solution is $\alpha = \pi/2 +m\pi$ ($m\in \mathbb{Z}$). Moreover, when $R/r< b^*$ a new class of non constant solution appears (see Figures \ref{fig6} and \ref{fig7}). With respect to
 the heuristic principle expressed in \cite{NapVer12L}, that ``the nematic elastic energy promotes the alignment of the flux lines of the nematic director towards geodesics and/or lines of curvature of the surface", we make the following observation: This new solution tries to minimize the effect of the curvature by orienting the director field along the meridian lines ($\alpha = 0$),
which are geodesics on the torus, near the hole of the torus, while near the external equator the director is oriented along the parallel lines $\alpha=\pi/2$, which are lines of curvature. A smooth transition occurs between $\alpha= \pi/2$ and $\alpha = 0$. In this sense, the new solution can be understood as an interpolation between $\alpha= \pi/2$ and $\alpha = 0$, which are the two constant stationary solutions of the system.


\section{Numerical experiments}
\label{sec:numerics}
In this section we report on some simple numerical experiments carried out to approximate minimizers of the one-constant approximation energy \eqref{eq:napolioc} on the axisymmetric torus with radii $0<r<R$ parametrized by \eqref{eq:paramtorus}. Regarding numerics, we note that Monte Carlo methods with simulated annealing were employed in \cite{Santangelo2012,N_G_S_S13, S_K_T_S11} and finite elements on surfaces were developed in \cite{BartEtal12}, in order to study defects evolution and variable surfaces. Since the problem we study is considerably easier, we can afford to use simpler methods. The discussion in sections \ref{sec:alpha} and \ref{sec:torus} shows that instead of studying the minimization on $\Ht$, constrained to the nonconvex subset $\Hut$, we can look at the simpler energy \eqref{eq:expltorus} on $H^1(Q)$, with suitable boundary conditions. Theorem \ref{Th:gf}, in particular, shows that the $L^2$-gradient flow of \eqref{eq:expltorus} is well-posed and its winding number is constant along the flow. Therefore, there exist infinite local minimizers of \eqref{eq:napolioc}, at least one for every element of the fundamental group of the torus $\pi(\bbT)=\mathbb Z \times \mathbb Z$. We actually conjecture that, if $h\neq (0,0)$, there is a unique local minimizer for every $h\in\mathbb Z \times \mathbb Z$ (uniqueness, up to the group of symmetries of $\bbT$, of course).  

\begin{figure}[h!] 
\begin{minipage}{8cm}
\centering{
\labellist
		\hair 2pt
		\pinlabel $\alpha$ at 2920 1670
	\endlabellist				
 \includegraphics[height=6.5cm]{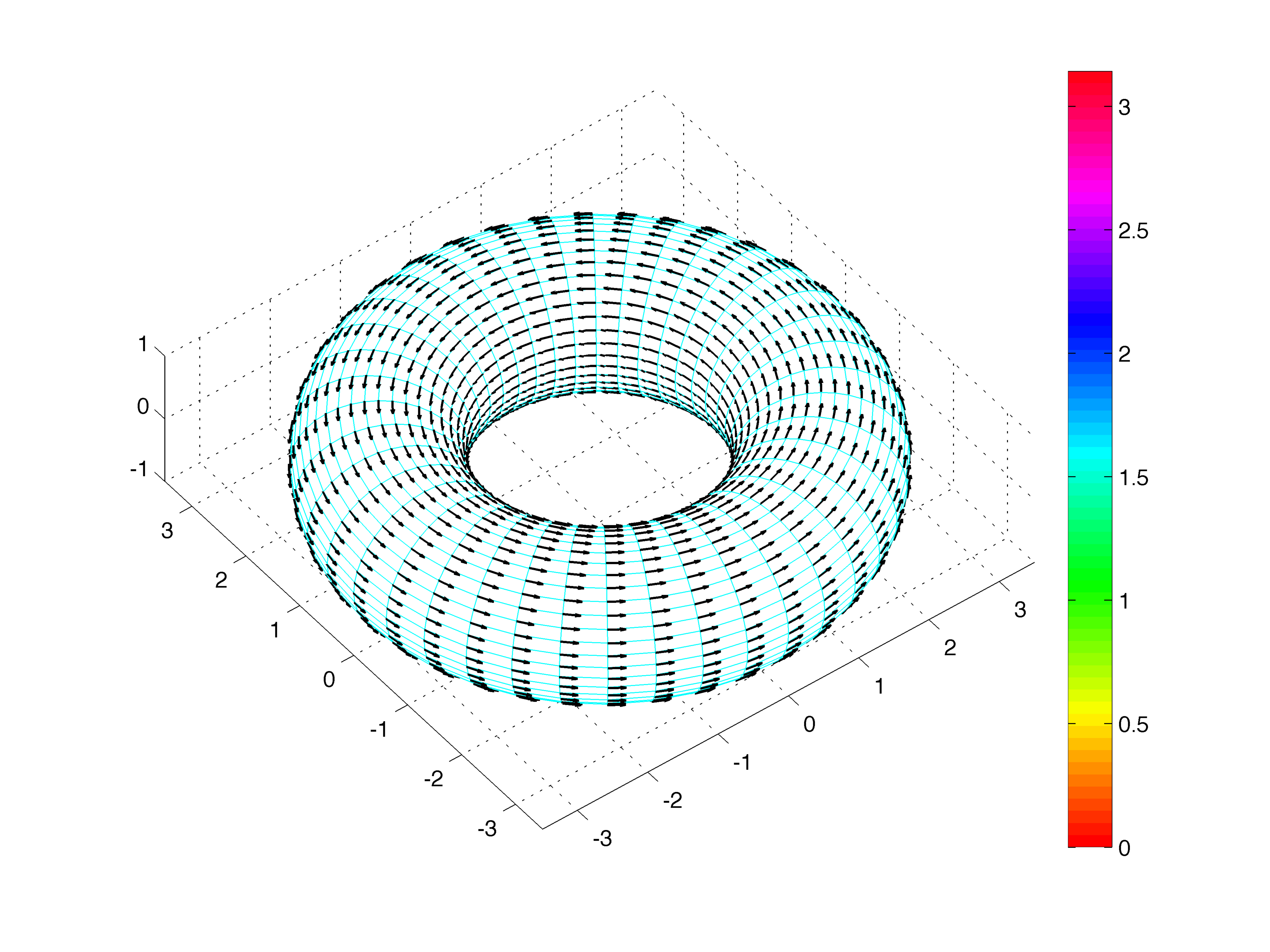}\\ 
}
\end{minipage} \quad
\begin{minipage}{8cm}
 \includegraphics[height=4.5cm]{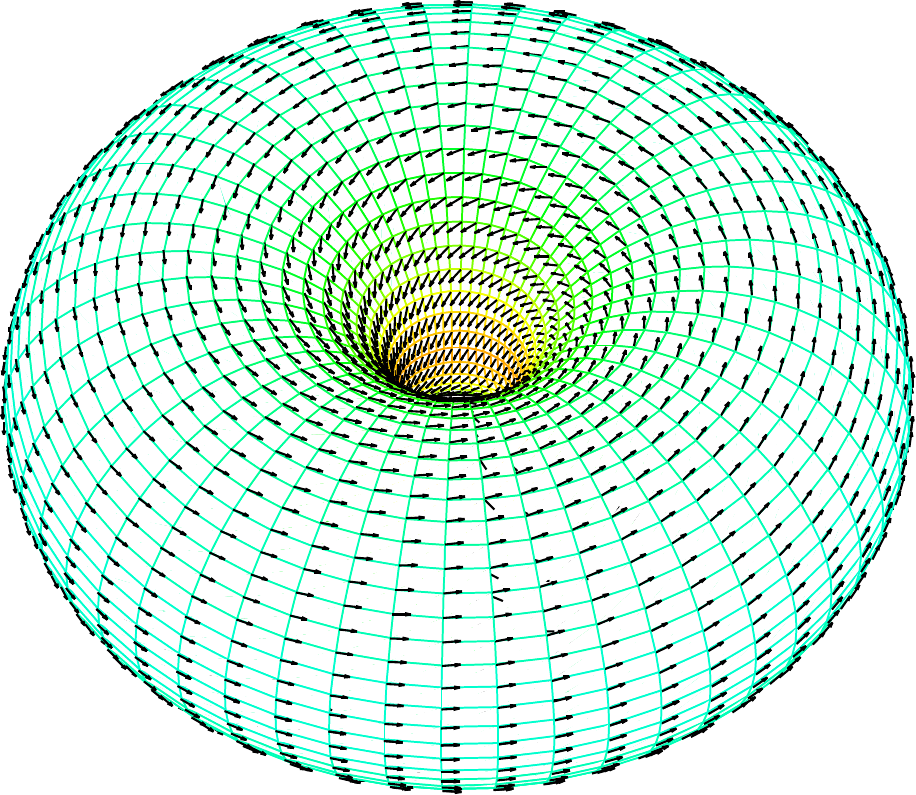}
\end{minipage}\vspace{-0.9cm}
\caption{Configuration of a numerical solution $\alpha$ of the gradient flow.  If  $R/r=2.5$, then $\alpha=\pi/2$, $W(\alpha)=11.61\cdot \pi^2$ (left). When $R/r=1.33$, $W(\alpha)=9.95\cdot \pi^2 < W(\pi/2)=10.22\cdot \pi^2$ (right). The colour represents the angle $\alpha\in[0,\pi]$, the arrows represent the corresponding vector field $\n$.}
\label{fig6}
\end{figure}

For sake of completeness, we detail the method we used in our experiments, but we remark that once the original problem is reduced to the formulation \eqref{eq:gradient}, then any standard method would produce the same results. We discretize the gradient flow \eqref{eq:gradient} with finite differences in space and the Euler
 forward method in time, stopping the evolution when the difference between the (discrete) energy at two consecutive steps is less than $10^{-4}$ (the energy is of the order of 10). Convergence of this discretization scheme is classic, as long as the time step is sufficiently fine with respect to the size of the space grid, according to Von Neumann's stability analysis. The scheme is implemented in Matlab and carried out on a standard laptop (Intel Core i7 \textregistered\,CPU @ 2.8 GHz).  Figures \ref{fig6}--\ref{fig:otto} have relatively rough grids (40x40, 64x64) for graphical purposes, however, refining up to 512x512 yields the same qualitative results. The CPU-time needed for the calculation of one time step on a 256x256 grid, for example, is around 0.02 seconds. 
 
\paragraph{I. Case $h=(0,0)$} As expected from Proposition \ref{prop:stabil}, the numerical experiments indicate that for $R/r\geq 2$, there is one constant global minimizer, given by $\alpha=\pi/2$ (Figure \ref{fig6}), i.e. $\n\equiv \pm\ev$. 
Numerically $\alpha=\pi/2$ remains a minimizer for $R/r\geq 1.52$, while for $R/r < 1.51$, the director field in the inner part of the torus bends in order to follow the geodesics oriented like $\eu$, and the bending becoming steeper and deeper as the ratio $R/r$  decreases (Figures \ref{fig6}-right, \ref{fig7}). Numerical evidence thus suggests that the bifurcation point $b^*$ considered in Proposition \ref{prop:stabil} satisfies $1.51 < b^* <1.52$.

\begin{figure}[h] 
\begin{minipage}{8cm}
\centering{
\includegraphics[trim = 190mm 170mm 100mm 160mm, clip, scale=0.1]{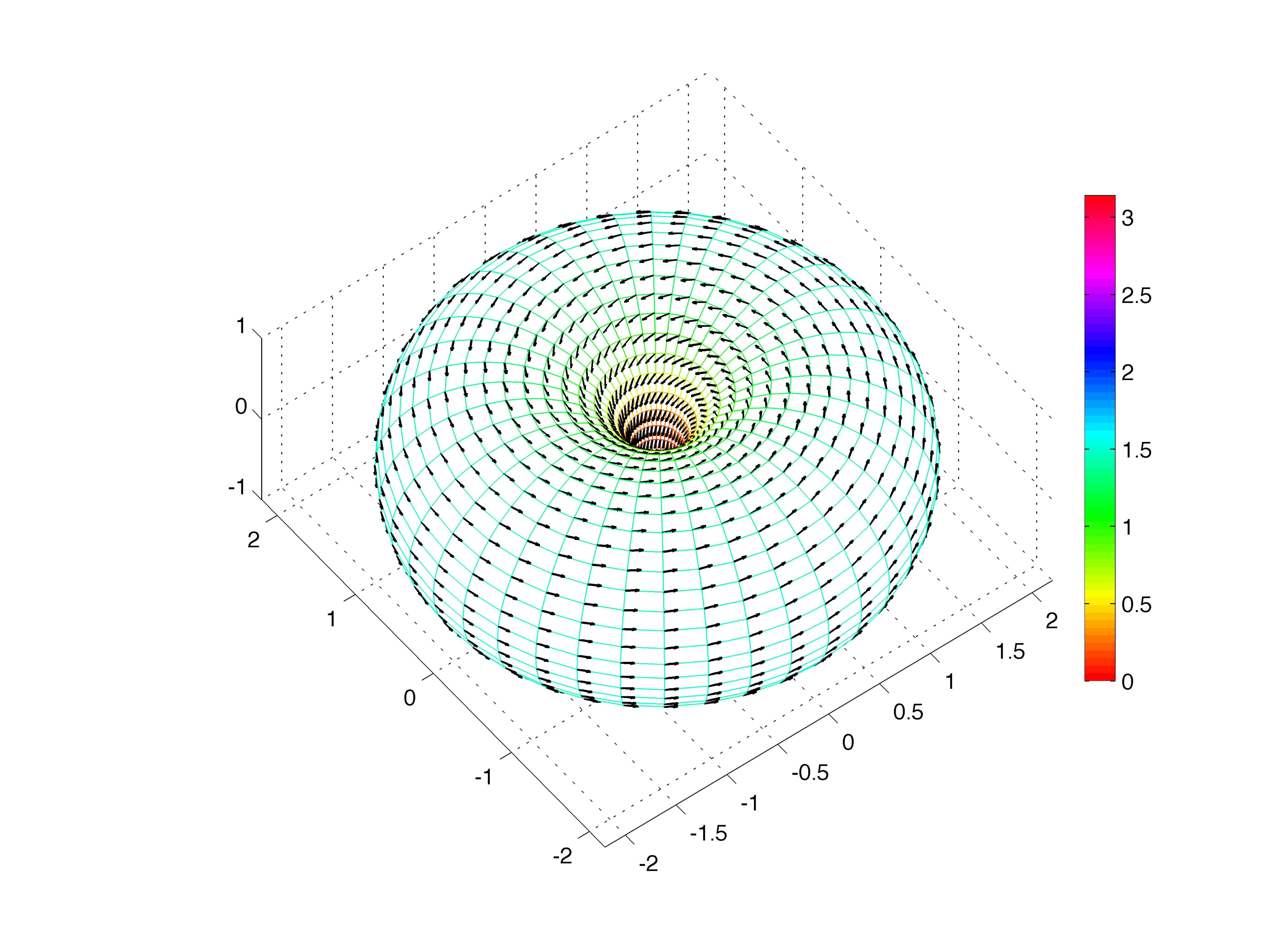}
}
\end{minipage}
\begin{minipage}{8cm}
\centering{
\includegraphics[trim = 39mm 37mm 46mm 21mm, clip, scale=1.61]{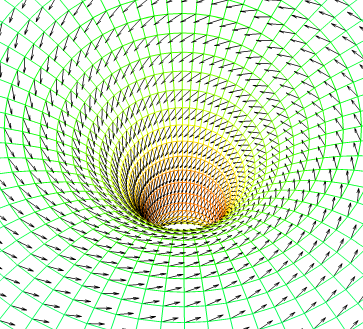}
}
\end{minipage}\vspace{-0.2cm}
\caption{Configuration of the scalar field $\alpha$ and of the vector field $\n$ of a numerical solution to the the gradient flow \eqref{eq:gradient}, in the case $R/r=1.2$ (left). Zoom-in of the central region of the same fields (right). The colour represents the angle $\alpha\in[0,\pi]$, the arrows represent the corresponding vector field $\n$.}
\label{fig7}
\end{figure}

\begin{table}
\begin{center}
  \begin{tabular}{ l || r | r | r | r }
    \hline
       h & 0 & 1 & 2 & 3 \\ \hline \hline
       0 &  10.24  &12.71   & 16.47 & 22.40\\ \hline
    		1 &  14.85 & 16.42  & 20.03  & 25.93\\ \hline
		2 &  25.80   & 27.04     & 30.57 & 36.44\\ \hline
		3 & 43.33  & 44.54  &  48.06 & 53.93 \\ \hline
  \end{tabular}
  \caption{Values of the numerical minimum of the energy, $R/r=2$. The $i$-th row and $j$-th column in the table correspond to index $h=(j,i)$. Values obtained running 30k time-steps, with dt =0.00025, on a 128x128 grid.}
  \label{tab:index}
\end{center} 
\end{table}

\paragraph{II. Case $h\neq(0,0)$} When the initial datum $\alpha_0$ has nonzero winding number $h$ on the torus (see Figure \ref{fig:otto}), the whole evolution takes place in the same homotopy class, approximating a local minimizer with nontrivial winding. 
In Table \ref{tab:index} we collect the numerical values of the energy $W_\kappa$ corresponding to minimizers in different homotopy classes.

\begin{figure} 
\centering{
\begin{minipage}{7cm}
\includegraphics[height=4.5cm]{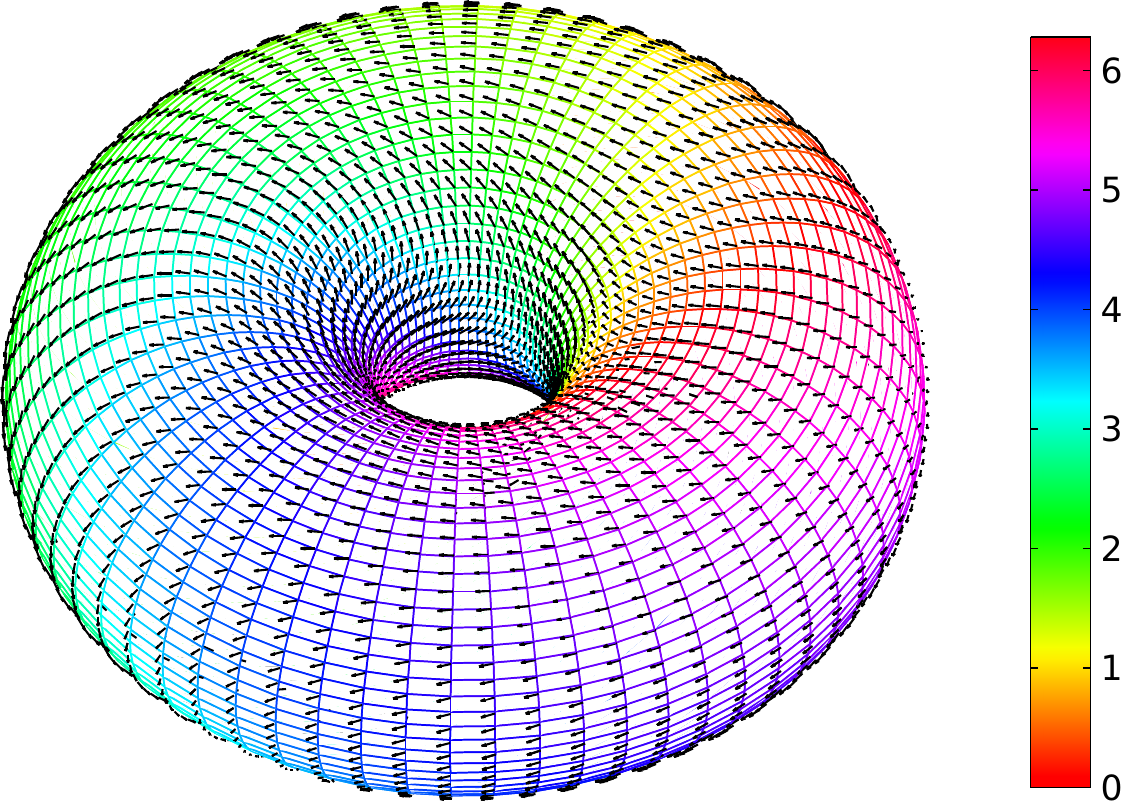}\\ 
\end{minipage}
\begin{minipage}{7cm}
\includegraphics[height=4.5cm]{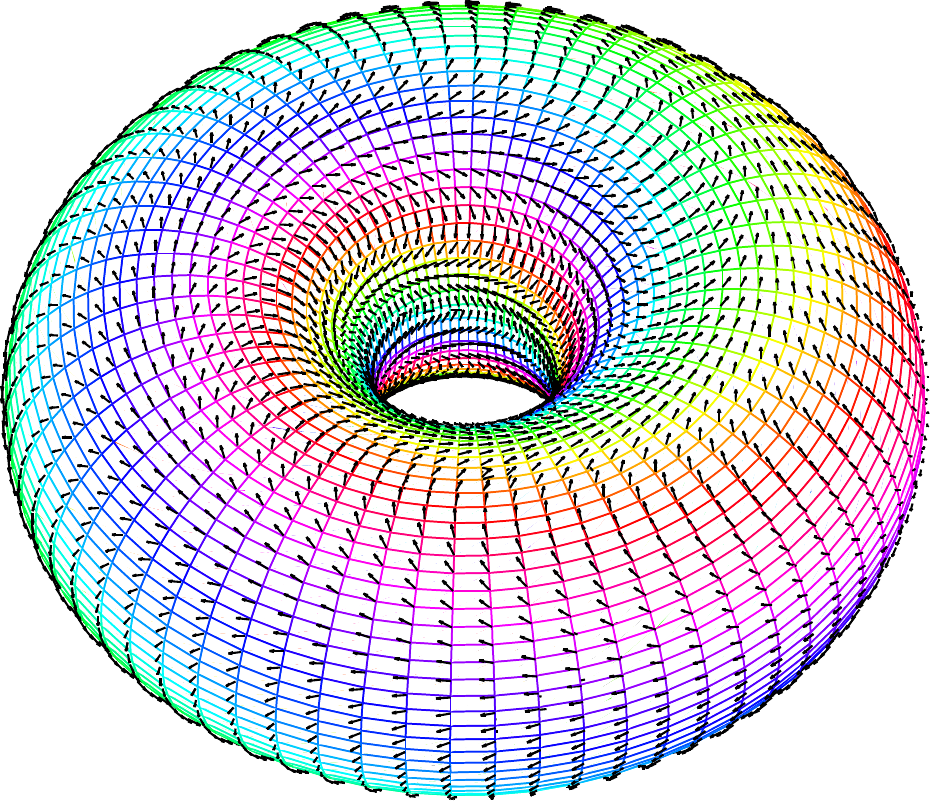}\\ 
\end{minipage}

\vspace{0.5cm}

\begin{minipage}{7cm}
\includegraphics[height=4.5cm]{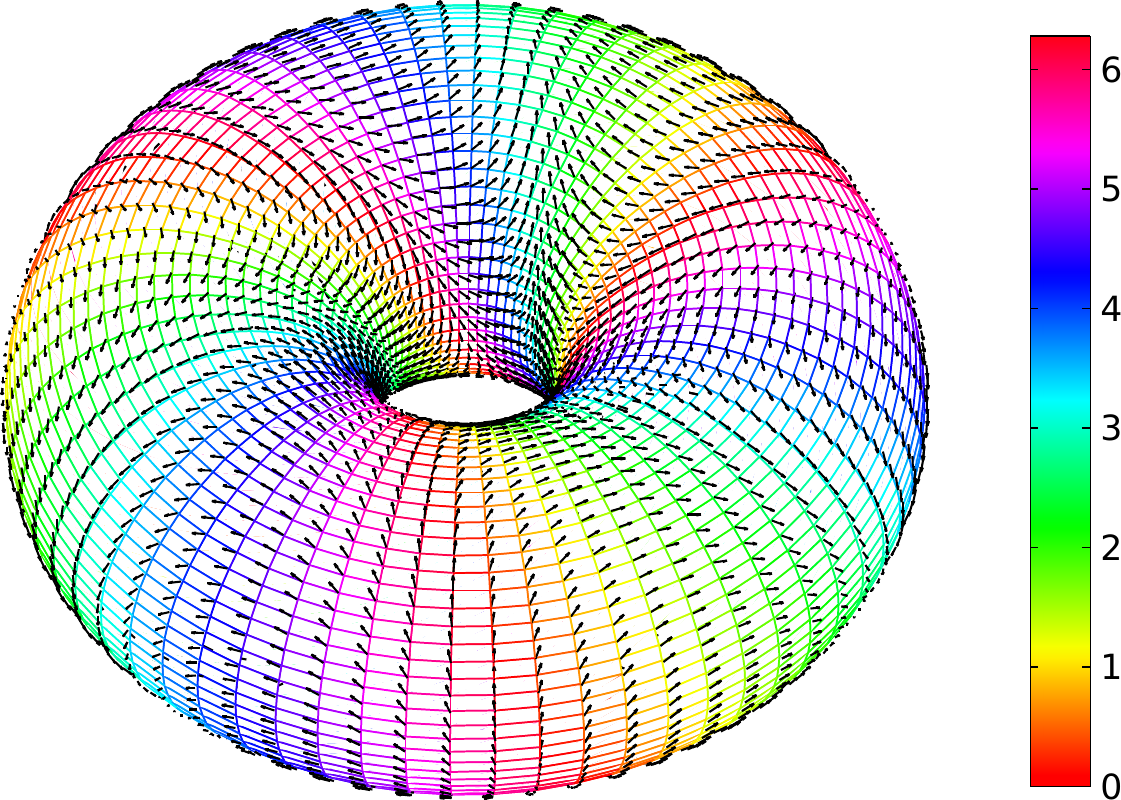}\\ 
\end{minipage}
\begin{minipage}{7cm}
\includegraphics[height=4.5cm]{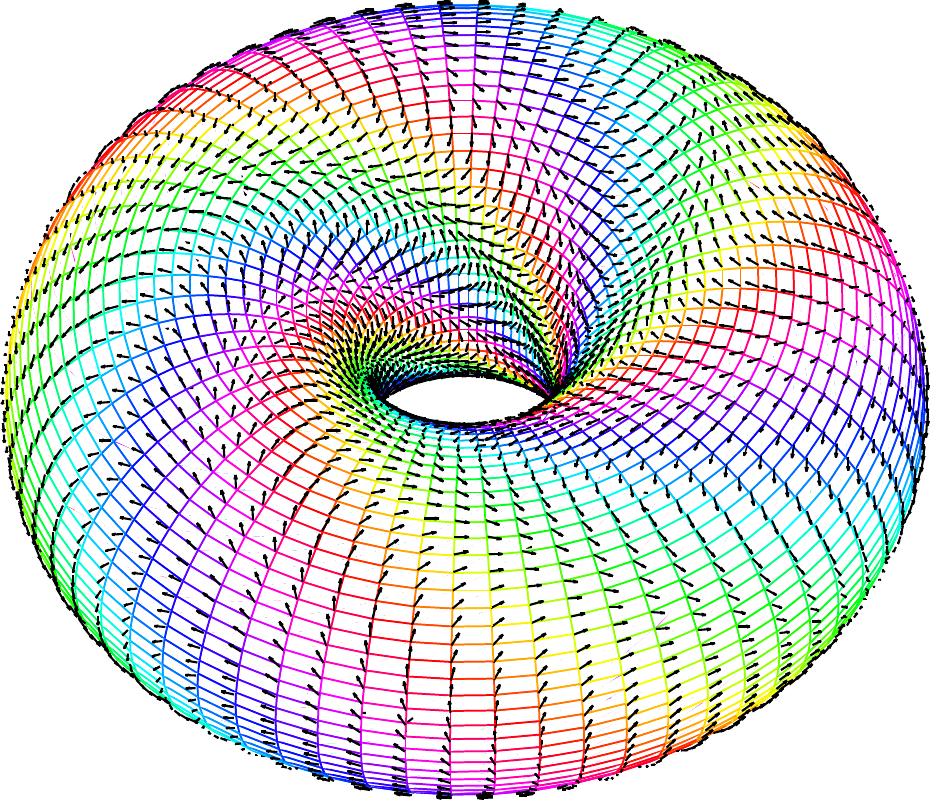}\\ 
\end{minipage}
}
\caption{Some examples of local minimizers with mixed $(\theta,\phi)$ winding numbers. In clockwise order, from top-left corner: index (1,1), (1,3), (3,3), (3,1). The colour represents the angle $\alpha\in[0,2\pi]$, the arrows represent the corresponding vector field $\n$.}
\label{fig:otto}
\end{figure}

\section*{Acknowledgments}
The authors are grateful to Francesco Bonsante, Will Cavendish, Gaetano Napoli, Jean-Cristophe Nave, and Epifanio Virga for inspiring conversations, remarks and suggestions. 
We also acknowledge the anony- mous referee for his/her careful reading of the manuscript and for his/her precise comments which surely improved the presentation of the results. This research started during a visit
of A.S. to the Department of Mathematics and Statistics of McGill University - Montreal, whose kind hospitality is acknowledged. M.S. was introduced to the subject while supported by an NSERC grant. 
A.S. would like to thank the Isaac Newton Institute for Mathematical Sciences, Cambridge, for support and hospitality during the programme Free Boundary Problems and Related Topics where this paper was completed.
Finally, A.S. and M.V. have been supported by the Gruppo Nazionale per l'Analisi Matematica, la Probabilit\`a e le loro Applicazioni (GNAMPA) of the Istituto Nazionale di Alta Matematica (INdAM).


\section*{Appendix A. Geometric quantities on the torus}
\label{app:torus}
\setcounter{equation}{0}
\renewcommand{\theequation}{A.\arabic{equation}}
\renewcommand{\thetheorem}{A.\arabic{theo}}

Let $Q:=[0,2\pi]\times [0,2\pi] \subset \R^2$, and let $X:Q\to \R^3$ be the following parametrization of an embedded torus $\bbT$
\begin{equation}
\label{app:paramtorus}
	X(\theta,\phi) = 
		\begin{pmatrix} 
			(R+r\cos \theta)\cos \phi \\ 
			(R+ r\cos \theta)\sin \phi \\ 
			r\sin \theta
		\end{pmatrix}.
\end{equation}
Using parametrization \eqref{app:paramtorus},  in the next paragraph we derive the main geometrical quantities, like tangent and normal vectors, first and second fundamental form, in order to obtain an explicit expression for the metric and the curvatures of $\mathbb T$ and for $\nabla_s \n$.

Letting
$$ X_\theta:= \frac{\partial}{\partial \theta} X,\qquad X_\phi:= \frac{\partial}{\partial \phi} X,\qquad \nb:=\frac{X_\theta \wedge X_\phi}{|X_\theta \wedge X_\phi|},$$
we have
\begin{gather*}
X_\theta = \begin{pmatrix} - r \sin \theta \cos \phi \\ -r \sin \theta \sin \phi \\ r \cos \theta \end{pmatrix},\qquad 
	X_\phi = \begin{pmatrix} -(R+r\cos \theta)\sin \phi \\ (R+r\cos \theta)\cos \phi \\ 0 \end{pmatrix}, \qquad 
	\nb = -\begin{pmatrix} \cos \theta \cos \phi \\ \cos \theta \sin \phi \\ \sin \theta \end{pmatrix},\\
X_{\theta\theta} = \begin{pmatrix} -r\cos \theta \cos \phi \\ -r\cos \theta \sin \phi \\ -r \sin \theta \end{pmatrix},\qquad
	X_{\phi\phi} = \begin{pmatrix} -(R+r\cos \theta)\cos \phi \\ -(R+r\cos \theta)\sin \phi \\ 0 \end{pmatrix},
\qquad X_{\theta\phi} = \begin{pmatrix} r\sin \theta\sin \phi \\ -r\sin \theta \cos \phi \\ 0 \end{pmatrix}.
\end{gather*}
The unit tangent vectors are
$$ 
\eu(\theta,\phi):= \frac{X_\theta}{|X_\theta|}=
	\begin{pmatrix} 
		- \sin \theta \cos \phi \\ 
		- \sin \theta \sin \phi \\  
		\cos \theta 
	\end{pmatrix},
	\qquad 
	\ev(\theta,\phi):= \frac{X_\phi}{|X_\phi|}= 
	\begin{pmatrix} 
		-\sin \phi \\ 
		\cos \phi \\ 
		0 
	\end{pmatrix}.
$$
Note that this choice of tangent vectors yields an \emph{inner} unit normal $\nb$.  The first and second fundamental forms are  
$$	
g = 
\begin{pmatrix} 
	r^2 & 0 \\ 
	0 & (R+r\cos\theta)^2 
\end{pmatrix},\qquad
	II = 
\begin{pmatrix} 
	\frac 1r  & 0\\ 
	0 & \frac{\cos\theta}{R+r\cos\theta} 
 \end{pmatrix}.
$$
We have $\sqrt{\bar g}=r(R+r\cos\theta)$, $g^{ii}:=(g_{ii})^{-1}$. 
Thus, the shape operator $\mathfrak{B}$ has the form
\begin{equation}
\left\{
	\begin{array}{ll}
		\mathfrak{B}\eu = \frac{1}{r} \eu 	& \\
		\mathfrak{B}\ev = \frac{\cos\theta}{R+r\cos\theta}\ev
	\end{array}
\right.
\end{equation}
from which we have that $\eu$ and $\ev$ are the principal directions.
Then, the principal curvatures are
\begin{equation}
\label{eq:ci}
	c_1= \frac 1r,\qquad c_2=\frac{\cos\theta}{R+r\cos\theta}.
\end{equation}
Now, we compute $(\nabla \mathbf e_i)\mathbf e_j$. Deriving the relation $\ei \cdot \ej=\delta_{ij}$ we see that
\begin{equation}
\label{eq:rel_ei}
	(\nabla \eu)^T\eu = (\nabla \ev)^T\ev  =0\qquad \text{and}\qquad  (\nabla \eu)^T\ev = -(\nabla \ev)^T\eu.
\end{equation}
To differentiate along $\eu$, let 
\[
	\begin{cases} 
		\theta(t) = \frac{t}{r} + \theta_0 \\ 
		\phi(t) = \phi_0 
	\end{cases},
\]	
and set $\gamma(t) = X(\theta(t),\phi(t)).$ We have $\gamma(0) = X(\theta_0,\phi_0)$ and $\gamma'(0) = \frac{1}{r}X_\theta(\theta_0,\phi_0) = \eu(\theta_0,\phi_0)$. Thus, the directional derivatives of $\eu$ and $\ev$ along $\eu$ are given by
\begin{align*}
	(\nabla \eu)\eu 
		= \frac{\d}{\d{t}}\bigg|_{t=0} \frac{1}{r}X_\theta(\theta(t),\phi(t)) 
		= \frac{1}{r^2}X_{\theta\theta}, \qquad
	(\nabla \ev)\eu 
		= \frac{\d{}}{\d{t}}\bigg|_{t=0}  \ev(\theta(t),\phi(t)) 
		= \mathbf 0.
\end{align*}
To differentiate along $\ev$, we set
\[
	\begin{cases} 
		\theta(t) = \theta_0 \\ 
		\phi(t) = \frac{t}{R+r\cos\theta_0} + \phi_0
	\end{cases},
\]
and take $\gamma(t) = X(\theta(t),\phi(t))$, so that $\gamma(0) = X(\theta_0,\phi_0)$ and $\gamma'(0) = \frac{1}{R+r\cos\theta_0}X_\phi(\theta_0,\phi_0) = \ev(\theta_0,\phi_0).$
Thus,
\begin{align*}
	(\nabla \eu)\ev 
		&= 	\frac{\d}{\d{t}}\bigg|_{t=0} \frac{1}{r} X_\theta(\theta(t),\phi(t)) 
			= \frac{1}{r(R+r\cos\theta_0)}X_{\theta\phi}, \\
	(\nabla \ev)\ev  
		&=	\frac{\d{}}{\d{t}}\bigg|_{t=0}  \frac{1}{R+r\cos\theta(t)}X_\phi(\theta(t),\phi(t)) 
			= \frac{1}{(R+r\cos\theta_0)^2} X_{\phi\phi}.
\end{align*}
The geodesic curvatures $\kappa_1$ and $\kappa_2$ of the principal lines of curvature can thus be obtained by
\begin{align*}
	\kappa_1 =\ev(\nabla \eu)\eu &= \frac{1}{R+r\cos\phi}X_\phi \cdot \frac{1}{r^2} X_{\theta\theta} = 0,\\
	\kappa_2 = \ev(\nabla \eu)\ev &= \frac{1}{r(R+r\cos\theta)^2} X_\phi \cdot X_{\theta\phi} = \frac{-\sin\theta}{R+r\cos\theta}.
\end{align*}
By the definition of spin connection $\mathbb A$ in subsection \ref{ssec:formulas}, we also read
\[
	\mathbb A^1= (\eu,D_{\eu}\ev)_{\R^3} \stackrel{\eqref{eq:rel_ei}}{=}-\kappa_1=0,\qquad 
	\mathbb A^2= (\eu,D_{\ev}\ev)_{\R^3} \stackrel{\eqref{eq:rel_ei}}{=}-\kappa_2=\frac{\sin\theta}{R+r\cos\theta}.
\]
The explicit forms of the surface differential operators on the torus are
\begin{align} 
	  \nabla_s \alpha= g^{ii}\partial_i\alpha 
	  	&= \frac{1}{r^2}(\partial_\theta \alpha) X_\theta + \frac{1}{(R+r\cos\theta)^2}(\partial_\phi \alpha) X_\phi \nonumber\\
		&= \frac{1}{r}(\partial_\theta \alpha) \eu + \frac{1}{R+r\cos\theta}(\partial_\phi \alpha) \ev,   \nonumber \\
	\Delta =\frac{1}{\sqrt{\bar g}}\partial_i(\sqrt{\bar g} g^{ij}\partial_j) 
		&=\frac{1}{\sqrt{\bar g}}\left( \partial_\theta\left(\sqrt{\bar g} \frac{1}{r^2}\partial_\theta\right) 
			+\partial_\phi\left(\sqrt{\bar g}  \frac{1}{(R+r\cos\theta)^2}\partial_\phi\right)\right)\nonumber\\
		&= \frac{1}{r^2}\partial^2_{\theta\theta} -\frac{\sin\theta}{r(R+r\cos\theta)}\partial_\theta 
			+ \frac{1}{(R+r\cos\theta)^2}\partial^2_{\phi\phi}.\label{eq:deltas}
\end{align}	
For $ \n = \cos\alpha \eu + \sin \alpha \ev$, the explicit expression of the surface gradient $\nabla_s \n$ in terms of the deviation angle $\alpha$, with respect to the Darboux frame $(\n,\tbf,\nb)$ is
\[
	\nabla_s \n = 
	\begin{pmatrix}
		0 & 0 & 0 \\
		\frac{\alpha_\theta}{r} \cos\alpha + \left(\frac{\alpha_\phi}{R+r\cos \theta} 
			- \frac{\sin\theta}{R+r\cos\theta}\right)\sin\alpha
		& -\frac{\alpha_\theta}{r} \sin\alpha + \left(\frac{\alpha_\phi}{R+r\cos \theta} 
			- \frac{\sin\theta}{R+r\cos\theta}\right)\cos\alpha 	& 0 \\
		\frac{1}{r}\cos^2\alpha +  \frac{\cos\theta}{R+r\cos\theta}\sin^2\alpha
		&  \left( \frac{\cos\theta}{R+r\cos\theta}-\frac{1}{r}\right)\sin\alpha\, \cos\alpha  	& 0 
\end{pmatrix}.
\]

By setting $ \n = \cos\alpha \eu + \sin \alpha \ev$, it is also possible to directly obtain \eqref{eq:gradient} from \eqref{eq:gfn}. In the case of a general surface it holds
\begin{align}
	\partial_t \n & = -(\partial_t \alpha)\sin \alpha\, \eu + (\partial_t \alpha)\cos\alpha\, \ev = -(\partial_t \alpha)\tbf, \label{eq:roughtoneverneverland}\\
	\Delta_g \n &  \stackrel{\eqref{eq:smoothrough}}{=}  -\n |\nabla  \alpha - \mathbb A|^2  - \tbf \Delta \alpha,		\\
	\Bd \n & = \B(c_1 \cos \alpha\, \eu +c_2 \sin \alpha\, \ev) =c_1^2 \cos \alpha\, \eu +c_2^2 \sin \alpha\, \ev \nonumber\\
			& = c_1^2 \cos \alpha (\cos \alpha\, \n +\sin \alpha\, \tbf) + c_2^2 \sin \alpha (\sin \alpha\, \n -\cos \alpha\, \tbf) \nonumber\\
			& = \left(c_1^2\cos^2\!\alpha + c_2^2\sin^2\!\alpha \right)\n + (c_1^2-c_2^2)(\cos\alpha\,\sin\alpha) \tbf, \\
		\vert D\n\vert^2 \n + \vert \B\n\vert^2 \n & = |\nabla_s \n|^2 \n,	
		 \stackrel{\eqref{eq:normsgradalp}}{=} \left(  |\nabla \alpha - \mathbb{A}|^2 	+ c_1^2\cos^2\!\alpha + c_2^2\sin^2\!\alpha \right)\n.\label{eq:cov3}
\end{align}
Substituting \eqref{eq:roughtoneverneverland}--\eqref{eq:cov3} in 
\[
	\partial_t \n -\Delta_g \n + \Bd \n = \vert D\n\vert^2 \n + \vert \B\n\vert^2 \n
\]
yields
\begin{align*}
	-( \partial_t \alpha) \tbf + (\Delta \alpha)\tbf &+\n |\nabla  \alpha - \mathbb A|^2 
			+ 	\left(c_1^2\cos^2\!\alpha + c_2^2\sin^2\!\alpha \right)\n \\
			&+ (c_1^2-c_2^2)(\cos\alpha\,\sin\alpha) \tbf= \left(  |\nabla \alpha - \mathbb{A}|^2 
		+ c_1^2\cos^2\!\alpha + c_2^2\sin^2\!\alpha \right)\n,
\end{align*}
that is,
\[
	\left(\Delta \alpha - \partial_t \alpha + \frac{c_1^2-c_2^2}{2}\sin (2\alpha)\right) \tbf =0.
\]


\section*{Appendix B. Some integration formulas}
\label{app:formulas}
\setcounter{equation}{0}

\renewcommand{\theequation}{B.\arabic{equation}}
\renewcommand{\thetheorem}{B.\arabic{theo}}

Let $b>1$, it holds
\begin{align}
	\int_0^{2\pi} \frac{\sin^2\theta}{b+\cos\theta}d\theta 
			& = 2\pi \left(b - \sqrt{b^2-1}\right), \label{eq:integr1} \\
	\int_0^{2\pi} \frac{\cos^2\theta}{b+\cos\theta}d\theta 
			&=2\pi b\left(\frac{b}{\sqrt{b^2-1}} -1\right). \label{eq:integr2} 
\end{align}

For $\theta\in [0,\pi)$, $b>1$
		\begin{align*} 
			\int \frac{1}{b+\cos \theta}d\theta &= \frac{2}{\sqrt{b^2-1}}\arctan\left( \frac{(b-1)\sin \theta}{\sqrt{b^2-1}(1+\cos\theta)}\right) +c\\
				&=  \frac{2}{\sqrt{b^2-1}}\arctan\left( \frac{(b-1)}{\sqrt{b^2-1}}\tan\left(\frac \theta 2\right)\right) +c.
		\end{align*}	
		Thus,
\[
	\int_0^{2\pi} \frac{1}{b+\cos \theta}d\theta = 2\lim_{s \to \pi^+}\int_0^s \frac{1}{b+\cos \theta}d\theta = \frac{2\pi}{\sqrt{b^2-1}}.
 \]


 \section*{Appendix C. The general Euler-Lagrange equation}
\label{app:euler}
\setcounter{equation}{0}
\renewcommand{\theequation}{C.\arabic{equation}}
\renewcommand{\thetheorem}{C.\arabic{theo}}

The Euler-Lagrange equation of \eqref{eq:energyb} is
\begin{equation}
\label{el:mess}
\begin{split}
	-K_1 \divs\big[((\nabla_s \alpha -\mathbb A)\cdot \tbf) \tbf\big]
		-K_3 \divs\big[((\nabla_s \alpha -\mathbb A)\cdot \n) \n\big] +(K_3-K_1) 
		((\nabla_s \alpha -\mathbb A)\cdot \tbf)((\nabla_s \alpha -\mathbb A)\cdot \n) \\
		- K_3  \frac{(c_1^2-c_2^2)}{2}\sin(2\alpha) 
			+(K_2-K_3) \frac{(c_1-c_2)^2}{4}\sin(4\alpha)=0.
\end{split}		
\end{equation}

\begin{proof} Let $\beta\in C^{\infty}_c(\Sigma)$, we study the first variation of \eqref{eq:energyb} in the direction $\beta$, i.e.
$ \frac{\d}{\d t}W(\n_t)\big|_{t=0},$ where 
\[
	\n_t :=\cos(\alpha+t\beta)\eu +\sin(\alpha+t\beta)\ev,\qquad \tbf_t :=
		-\sin(\alpha +t\beta)\eu +\cos(\alpha+t\beta)\ev. 
\]
It holds $	\frac{\d}{\d t} \n_t = \beta\, \tbf_t,\qquad \frac{\d}{\d t} \tbf_t = -\beta\, \n_t.$  We split the energy $W$ into four terms 
\begin{align*}
	W_1(t) &= \frac {K_1}{2} \int_\Sigma  ((\nabla_s (\alpha+t\beta) -\mathbb A)\cdot \tbf_t)^2 \d S,\\
	W_2(t) &= \frac{K_2}{2} \int_\Sigma(c_1-c_2)^2\sin^2(\alpha+t\beta)\,\cos^2(\alpha+t\beta)\, \d S,\\
		 &= \frac{K_2}{2} \int_\Sigma(c_1-c_2)^2\left(\frac{\sin(2\alpha+2t\beta)}{2}\right)^2\, \d S,\\
	W_{3a}(t) &= \frac{K_3}{2} \int_\Sigma ((\nabla_s (\alpha+t\beta) -\mathbb A)\cdot \n_t)^2 \d S,\\
	W_{3b}(t) &= \frac{K_3}{2} \int_\Sigma(c_1\cos^2(\alpha+t\beta)+ c_2\sin^2(\alpha+t\beta) )^2\, \d S\\
		&= \frac{K_3}{2} \int_\Sigma \left( \frac{c_1 +c_2}{2}+ \frac{c_1-c_2}{2}\cos(2\alpha+2t\beta) \right)^2\, \d S.
\end{align*}
We compute the first variation of each term 
\begin{align*}
	\frac{\d}{\d t}W_1(t)\Big|_{t=0} &=K_1 \int_\Sigma  ((\nabla_s (\alpha+t\beta) -\mathbb A)\cdot \tbf_t)(\nabla_s\beta \cdot \tbf_t + (\nabla_s (\alpha+t\beta) -\mathbb A)\cdot (-\beta\,\n_t) ) \d S \Big|_{t=0}\\
	&= K_1 \int_\Sigma  ((\nabla_s \alpha -\mathbb A)\cdot \tbf)(\nabla_s\beta \cdot \tbf -\beta (\nabla_s \alpha -\mathbb A)\cdot \n)  \d S\\
	&= -K_1 \int_\Sigma  \Big\{\divs\big[((\nabla_s \alpha -\mathbb A)\cdot \tbf) \tbf\big]+ ((\nabla_s \alpha -\mathbb A)\cdot \tbf)((\nabla_s \alpha -\mathbb A)\cdot \n)\Big\}\beta\,  \d S,\displaybreak[2]\\
	\frac{\d}{\d t}W_2(t)\Big|_{t=0} &=\frac{K_2}{4} \int_\Sigma (c_1-c_2)^2\sin(4\alpha)\beta\,  \d S, \displaybreak[2]\\	
\frac{\d}{\d t}W_{3a}(t)\Big|_{t=0} &=	-K_3 \int_\Sigma  \Big\{\divs\big[((\nabla_s \alpha -\mathbb A)\cdot \n) \n\big]- ((\nabla_s \alpha -\mathbb A)\cdot \tbf)((\nabla_s \alpha -\mathbb A)\cdot \n)\Big\}\beta\, \d S,\displaybreak[2]\\
	\frac{\d}{\d t}W_{3b}(t)\Big|_{t=0} &=- \frac{K_3}{2} \int_\Sigma \Big\{(c_1^2-c_2^2)\sin(2\alpha) +\frac{(c_1-c_2)^2}{2}\sin(4\alpha)\Big\}\beta\, \d S.
\end{align*}	
Collecting the four terms, we obtain \eqref{el:mess}
\end{proof}

\bibliographystyle{abbrv}
\bibliography{lib_torus}

\end{document}